\newcommand{\variable}{\omega}
\newcommand{\inputvar}{\xi}
\def\operationsize{9mm}
\def\operationcolor{SkyBlue!40}
\def\variablecolor{SpringGreen!30}
\def\layercolor{Apricot!30}  
\def\inLayerComment{Bittersweet}
\def\arrowTip{Triangle}
\def\arrrowbend{10}
\theoremstyle{plain}
\newtheorem{theorem}{Theorem}[section]
\newtheorem{lemma}{Lemma}[section]
\newtheorem{corollary}{Corollary}[section]
\theoremstyle{definition}
\theoremstyle{remark}
\DeclareMathAlphabet{\pazocal}{OMS}{zplm}{m}{n}
\title{Hybrid DeepONet Surrogates for Multiphase Flow in Porous Media}
\author{ 
        \href{https://orcid.org/0009-0006-0347-0309}{\includegraphics[scale=0.06]{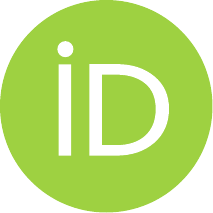}\hspace{1mm}Ezequiel S. Santos}\\
	Department of Civil Engineering\\
	COPPE - Federal University of Rio de Janeiro\\
	Rio de Janeiro, RJ \\
 	\texttt{ezequiel.souza@coc.ufrj.br} \\
         \And
        \href{https://orcid.org/0009-0000-1259-510X}{\includegraphics[scale=0.06]{orcid.pdf}\hspace{1mm}Gabriel F. Barros}\\
	Department of Civil Engineering\\
	COPPE - Federal University of Rio de Janeiro\\
	Rio de Janeiro, RJ \\
	\texttt{gabriel.barros@coc.ufrj.br} \\
        \And
        \href{https://orcid.org/0000-0003-0711-5970}{\includegraphics[scale=0.06]{orcid.pdf}\hspace{1mm}Amanda C. N. Oliveira}\\
	Department of Systems Engineering and \\Computer Sciences \\
	COPPE - Federal University of Rio de Janeiro\\
	Rio de Janeiro, RJ \\
	\texttt{amandacno@cos.ufrj.br} \\
	\And
	\href{https://orcid.org/0000-0003-1755-2605}{\includegraphics[scale=0.06]{orcid.pdf}\hspace{1mm}Rômulo M. Silva}\\
	Department of Civil Engineering\\
	COPPE - Federal University of Rio de Janeiro\\
	Rio de Janeiro, RJ \\
	\texttt{romulo.silva@coc.ufrj.br} \\
         \And
        \href{https://orcid.org/0000-0001-6036-8534}{\includegraphics[scale=0.06]{orcid.pdf}\hspace{1mm}Rodolfo S. M. Freitas}\\
	Department of Mechanical Engineering\\
	COPPE - Federal University of Rio de Janeiro\\
	Rio de Janeiro, RJ \\
	\texttt{rodolfosmfreitas@mecanica.coppe.ufrj.br} \\
        \And
        \href{https://orcid.org/0000-0003-4044-4266}
        {\includegraphics[scale=0.06]{orcid.pdf}\hspace{1mm}Dakshina M. Valiveti}\\ExonMobil Technology and Engineering Company\\
	Spring, TX, USA \\
    \texttt{dakshina.m.valiveti@exxonmobil.com}
        \And
        \href{https://orcid.org/0000-0002-8917-6760}
        {\includegraphics[scale=0.06]{orcid.pdf}\hspace{1mm}Xiao-Hui Wu}\\ExxonMobil Technology and Engineering Company\\
	Spring, TX, USA \\
    \texttt{xiao-hui.wu@exxonmobil.com }
        \And
        \href{https://orcid.org/0000-0001-8035-9651}{\includegraphics[scale=0.06]{orcid.pdf}\hspace{1mm}Fernando A. Rochinha}\\
	Department of Mechanical Engineering\\
	COPPE - Federal University of Rio de Janeiro\\
	Rio de Janeiro, RJ \\
	\texttt{faro@mecanica.ufrj.br} \\ 
        \And 
        \href{https://orcid.org/0000-0002-4764-1142}{\includegraphics[scale=0.06]{orcid.pdf}\hspace{1mm}Alvaro L. G. A. Coutinho}\\
	Department of Civil Engineering\\
	COPPE - Federal University of Rio de Janeiro\\
	Rio de Janeiro, RJ \\
	\texttt{alvaro@nacad.ufrj.br} \\ 
}
\date{}
\begin{document}
\maketitle

\begin{abstract}
The solution of partial differential equations (PDEs) plays a central role in numerous applications in science and engineering, particularly those involving multiphase flow in porous media. Complex, nonlinear systems govern these problems and are notoriously computationally intensive, especially in real-world applications and reservoirs. Recent advances in deep learning have spurred the development of data-driven surrogate models that approximate PDE solutions with reduced computational cost. Among these, Neural Operators such as Fourier Neural Operator (FNO) and Deep Operator Networks (DeepONet) have shown strong potential for learning parameter-to-solution mappings, enabling the generalization across families of PDEs. However, both methods face challenges when applied independently to complex porous media flows, including high memory requirements and difficulty handling the time dimension. To address these limitations, this work introduces hybrid neural operator surrogates based on DeepONet models that integrate Fourier Neural Operators, Multi-Layer Perceptrons (MLPs), and Kolmogorov-Arnold Networks (KANs) within their branch and trunk networks. The proposed framework decouples spatial and temporal learning tasks by splitting these structures into the branch and trunk networks, respectively. We evaluate these hybrid models on multiphase flow in porous media problems ranging in complexity from the steady 2D Darcy flow to the 2D and 3D problems belonging to the $10$th Comparative Solution Project from the Society of Petroleum Engineers. Results demonstrate that hybrid schemes achieve accurate surrogate modeling with significantly fewer parameters while maintaining strong predictive performance on large-scale reservoir simulations.
\end{abstract}

\keywords{Neural Operators \and DeepONets \and Fourier Neural Operators \and Kolmogorov-Arnold Networks \and Scientific Machine Learning \and Reservoir Engineering}

\section{Introduction}

\par 
Scientific Machine Learning (SciML) has been revolutionizing Computational Science and Engineering (CSE) over the past decade. The vast availability of data, the advent of specialized hardware, and the continuous development of machine learning algorithms are reshaping how learning-based methods are increasingly employed to solve complex problems traditionally tackled by the numerical approximation of partial differential equations (PDEs). By integrating physics and data, SciML enables the construction of surrogate models that accelerate numerical simulations while retaining the essential physical features of the underlying systems. Specifically in reservoir engineering applications, the use of SciML surrogate models  (proxy, metamodels) \cite{LATRACH2024212938} has been widely used in a plethora of applications such as accelerating reservoir simulations \cite{samnioti2023applicationsa, samnioti2023applicationsb, Bocoum2023} and carbon capture and storage (CCS) \cite{wen2021ccsnet, wen2022u, Witte2023}. Furthermore, surrogates are important in many query computations, such as parameter exploration, optimization, and uncertainty quantification, and also play an increasing role in digital twins \cite{10.2118/218461-MS, gahlot2024digital, willcox2023foundational} and digital shadows \cite{gahlot2025uncertainty}.

\par 
Among the several SciML techniques studied for a broad range of applications, methods based on deep neural networks (DNNs) have been widely applied for physical systems governed by PDEs. Multilayer Perceptrons (MLPs) are the foundation of DNN-based models \cite{Goodfellow-et-al-2016} and are mathematically guaranteed to be universal approximators of any nonlinear function \cite{augustine2024surveyuniversalapproximationtheorems, Funahashi1989, Hornik1989}. Their applications in physics-based systems include physics-informed neural networks (PINNs) \cite{raissi2019physics}, model order reduction \cite{solera2024beta, velho2025}, and surrogate modeling \cite{fresca2021pod, velho2025}. 
On the other hand, inspired by the Kolmogorov-Arnold representation theorem \cite{kolmogorov1957representation, arnold2009functions}, Kolmogorov-Arnold Networks (KANs) are capable of representing any continuous multivariate function as a sum of univariate functions \cite{liu2024kan}. Although this approach is effective for function representation, its application to high-dimensional data poses significant challenges, particularly in terms of memory usage and processing efficiency \cite{alter2024robustness, wang2024cest}. To increase flexibility, the function combinations in KANs use splines and radial basis functions, although reconciling interpretability with computational performance remains necessary \cite{vaca2024kolmogorov}. Unlike MLPs, KANs employ learnable activation functions on the network's edges, allowing for a more flexible, potentially more powerful representation via a series of univariate transformations of the input values. 
Both KANs and MLPs act as universal function approximators capable of representing complex nonlinear relationships between inputs and outputs. While this property is powerful for regression and inverse problems, standard MLPs often struggle to capture multiscale dynamics, nonlocal dependencies, and strong nonlinearities. 

\par 
A breakthrough in the SciML field came with the development of Neural Operators (NOs) \cite{kovachki2023neural, kovachki2024operator}, a class of architectures designed to learn mappings between infinite-dimensional function spaces rather than finite-dimensional functions. This operator-learning paradigm enables efficient and generalizable solution surrogates for parametric PDEs, where the objective is not merely to approximate a single trajectory of a physical system, but to learn the entire solution operator across a range of input conditions. Among these methods, Fourier Neural Operator (FNO) \cite{li2021fourierneuraloperatorparametric, grady2023model}, and the Deep Operator Network (DeepONet) \cite{Lu_2021} have gained significant attention for their robustness, generalization, and efficiency in learning complex PDE-based mappings. In reservoir engineering, both FNOs and DeepONets have been successfully applied to a variety of problems, including carbon capture and storage (CCS) \cite{wen2022u, Witte2023} and multiphase flow prediction \cite{badawi2024neural}. Although DeepONets have been proposed in this context \cite{udeeponet}, the use of FNOs as surrogate models for reservoir simulation has demonstrated strong generalization and efficient learning of spatially correlated structures. However, FNOs naturally operate in the spectral domain and therefore assume box-bounded or periodic geometries for the efficient application of the Fast Fourier Transform (FFT). Although progress has been made in this direction \cite{li2023geometryinformedneuraloperatorlargescale, LI2025117732}, this poses limitations when dealing with irregular domains and complex boundary conditions, which are common in realistic reservoir configurations. Another limitation of FNOs is their treatment of time as a fixed dimension \cite{diab2025}. For time-bounded surrogate modeling, FNOs lack the autonomy to extrapolate in time. In contrast, in autoregressive schemes, the time step sizes must be preserved, and the rollout sizes chosen for training largely influence performance \cite{mccabe2023stabilityautoregressiveneuraloperators}. Finally, their reliance on high-dimensional tensor representations results in significant memory consumption and computational expense during training, especially when scaling to three-dimensional space dimensions and high-resolution problems \cite{grady2023model, Liu2023domain}. 

\par 
To address these limitations, several hybrid extensions of standard Neural Operator architectures have been proposed in the recent literature \cite{diab2025, jiang2024fourier, udeeponet, CHO2025114430, HU2025114272}, each introducing distinct strategies and demonstrating varying degrees of success. In \cite{diab2025}, the authors focus on improving the temporal treatment of time-dependent PDEs and propose a DeepONet-based alternative to enhance temporal representation. In \cite{jiang2024fourier}, a FNO-based model with multiple-input capabilities is developed to improve efficiency in modeling multiphase CO2 flow. In the present study, we investigate hybrid Neural Operator surrogates for multiphase flow in porous media applications in oil reservoirs. Unlike previous approaches, we decouple space-time structures from the data and feed it into a DeepONet-based architecture. Here, the branch network encodes spatial features, while the trunk network represents the temporal domain. Owing to their modular structure, DeepONets naturally support the integration of complementary components such as FNOs, KANs, and MLPs. We leverage these hybrid configurations to enhance data efficiency and expressive power in complex reservoirs characterized by heterogeneous permeability fields. We design different combinations of hybrid variants within this structure for systematic evaluation to quantify their trade-offs in terms of predictive accuracy, generalization capability, and computational performance.

\par 
This paper is structured as follows: Section \ref{sec:nn_pdes} deals with how partial differential equations (PDEs) can be approximated and solved using neural network architectures. We cover the state-of-the-art of different strategies, such as DeepONets and FNOs, with applications focused on reservoir engineering and porous media flow. Throughout this work, we use a notation based on that introduced by Lu et al. \cite{lu2022comprehensive}. We also describe and elaborate on the relationship between KANs and MLPs, and discuss how these architectures can be merged to improve model performance in Section \ref{sec:hybrid}. We delve into a few hybrid architectures regarding DeepONets, FNOs, MLPs, and KANs. In Section \ref{Numerical_Experiments}, we propose experiments on multiphase flow in porous media and test the capabilities of the hybrid approaches. Finally, in Section \ref{sec:conclusions}, we present our concluding remarks.
\section{Neural Networks for PDEs}
\label{sec:nn_pdes}


\par 
The application of neural networks to solve PDEs has become a cornerstone of Scientific Machine Learning. Initial studies using MLPs to solve initial and boundary value problems date back to 1998 \cite{lagaris1998}. Since then, advances in techniques, architectures, and hardware have made neural networks an increasingly powerful tool for approximating spatio-temporal coherent structures arising from PDEs. Unlike classical numerical methods, which rely on the PDE approximation via spatial and temporal discretization, DNN-based approaches learn continuous representations of the solution space. 

\par 
 The introduction of physics-informed neural networks (PINNs) \cite{raissi2019physics} marked an important milestone by embedding the governing equations directly into the loss function, ensuring that the learned solutions satisfy the underlying physical laws. PINNs paved the way for broader integration of deep learning into computational physics, enabling data-driven models to complement conventional solvers in scenarios with noisy data, incomplete measurements, or high-dimensional parameter spaces. For reservoir engineering applications, PINNs have been extensively used to solve problems such as the Darcy equation for porous media flow \cite{Pu2022} and the Buckley-Leverett problem \cite{zhang2024physicsinformedneuralnetworksmultiphase, COUTINHO2023112265, Rodriguez-Torrado2022}. For large-scale reservoirs, due to the significant computational effort seen on fully connected networks, strategies such as domain decomposition \cite{HAN20233450}, convolutional PINNs \cite{math12203281, ZHANG2023111919}, and graph neural networks \cite{Zhao_2025} are employed to circumvent scalability issues.
 
PINNs are mostly built on MLPs. However, a different strategy for DNN-based models was presented for the resolution of physical systems \cite{liu2024kan}. Kolmogorov-Arnold Networks (KANs) are a neural architecture inspired by the Kolmogorov-Arnold superposition theorem \cite{kolmogorov1957representation, arnold2009functions, guilhoto2025} where conventional weight parameters are replaced by learnable univariate functions (often splines) \cite{liu2024kan}. In porous media flow applications, KAN-based PINNs have been applied to solve single-phase Darcy flow in porous media \cite{rao2025} and two-phase Buckley-Leverett flow \cite{kalesh2025}. These works have shown that KANs outperform traditional MLP architectures in performance. However, it is known that the total number of learnable parameters in KANs grows quadratically with network width \cite{wu2025pockanphysicsinformeddeepoperator}. Nevertheless, progress is being made toward improving the scalability of KANs \cite{LI2026109108, MOSTAJERAN2025114116}. In the following, we provide a brief description of MLPs and KANs.

\par
Let $\boldsymbol{\variable}\in \mathbb{R}^n$, be an input vector, and  $w_{ij}, a_i, d_i, \in \mathbb{R}$ be scalar parameters, with $i = 1, \ldots, m$ and $j = 1, \ldots, n$. An MLP defines a function $f_{\text{mlp}}(\boldsymbol{\variable}): \mathbb{R}^n \xrightarrow{} \mathbb{R}$ 
and fixed nonlinear activation functions $\sigma$~\cite{krose1996introduction,nielsen2015neural,gurney2018introduction}. For a single-layer MLP, we have
\begin{equation}\label{eq:mlp}
f_{\text{mlp}}(\boldsymbol{\variable}) = \sum_{i=1}^{m} a_i \sigma\left(\sum_{j=1}^{n} w_{ij} \variable_j + d_i\right)
\end{equation}

\noindent
On the other hand, a KAN defines its function $f_{\text{KAN}}(\boldsymbol{\variable}): \mathbb{R}^n \xrightarrow{} \mathbb{R}$ as a sum and composition of learnable univariate functions on the edges
~\cite{liu2024kan}. For a single layer KAN, we have
\begin{equation}\label{eq:kan}
f_{\text{kan}}(\boldsymbol{\variable}) = \sum_{i=1}^{2n+1} \psi_i \left( \sum_{j=1}^{n} \varphi_{ij}(\variable_j) \right) 
\end{equation}
\noindent where $\varphi_{ij}: \mathbb{R} \to \mathbb{R}$ are the inner univariate functions (with learnable parameters) that process each input component $\variable_j$ independently, and $\psi_i: \mathbb{R} \to \mathbb{R}$ are the outer univariate functions (with learnable parameters) that combine the results from the inner functions. The index $i$ ranges from $1$ to $2n+1$, following the Kolmogorov-Arnold theorem, though in practice this can be adjusted based on the desired network capacity. 

In this study, we further demonstrate that one-layered MLPs and KANs are mathematically equivalent in their representational capacity of a scalar function, as detailed in Appendix \ref{sec:app_a}. We show that, although an MLP is constructed as a sequence of alternating linear transformations and nonlinear activation functions, and a KAN is constructed by combining operations within a single functional block inspired by the Kolmogorov-Arnold representation, both architectures are capable of representing the same class of functions under appropriate parameterizations. In particular, this means that we can make the MLP as close to the KAN as desired (the reverse is also true) - for further details, see Appendix \ref{sec:app_a}.  MLPs and KANs can be extended to $L$ layers through proper operator compositions. For more details, we refer to \cite{liu2024kan}. 

\par 
Although both KANs and MLPs are used to learn how to approximate PDEs, a new paradigm was recently introduced in the SciML community. 
The advent of deep learning architectures capable of learning mappings between infinite-dimensional function spaces, known as Neural Operators (NOs), represents a pivotal advancement in computational science. Unlike MLPs, KANs, and other DNN-based networks, which map finite-dimensional vectors to finite-dimensional vectors, NOs are specifically designed to approximate solution operators $\mathcal{G}: \mathcal{V} \to \mathcal{U}$, where $\mathcal{V}$ and $\mathcal{U}$ are Banach spaces of functions \cite{kovachki2024operator}. 
This inherent capability allows NOs to be discretization-invariant, meaning they can be trained on data generated at one resolution and deployed to accurately predict solutions at arbitrary, often higher, resolutions without re-training (zero-shot super-resolution). 

\par 
The motivation for NOs reside on a typical problem in scientific computing, which involves finding the solution $u = \mathcal{G}_\theta (v)$, where $v$ is a function representing parameters (e.g., initial conditions, boundary conditions, or source terms), $u$ is the solution function (e.g., velocity, temperature field), and $\theta$ are the learnable parameters of the model. In order to simplify the notation, $\theta$ will be suppressed throughout the text, and the operator will be displayed as $u = \mathcal{G}(v)$.%
The operator $\mathcal{G}$ might be the inverse of a differential operator defined by a PDE. The general formulation of a Neural Operator layer can be expressed as an iterative composition of linear integral operators and non-linear activation functions, mirroring the structure of standard Neural Networks, that is,

\begin{equation}
z_{n+1}(\boldsymbol{\variable}) = \sigma \left( \mathcal{K}(z_n)(\boldsymbol{\variable}) + \textbf{W} z_n(\boldsymbol{\variable}) \right),
\label{eq:neural_operator_layer}
\end{equation}

where $z_n$ is the feature representation at layer $n$, $\sigma$ is a non-linear activation function (such as ReLU), 
$\textbf{W}$ is a local linear operator and $\mathcal{K}$ is a global integral operator, 

\begin{equation}
\mathcal{K}(z_n)(\boldsymbol{\variable}) = \int_\Omega \kappa(\boldsymbol{\variable}, y) z_n(y) dy,
\label{eq:integral_operator}
\end{equation}

where $\Omega$ is the spatial domain, and $\kappa(\boldsymbol{\variable}, y)$ is a learnable kernel function. The fundamental difference lies in how $\mathcal{K}$ is parameterized and calculated. 

Two major NO architectures are fundamental in the SciML community: Deep Operator Networks (DeepONets) and Fourier Neural Operators (FNOs). DeepONet \cite{lu2019deeponet} is an architecture motivated by the universal approximation theorem for operators, which suggests that a combination of two sub-networks can approximate any continuous non-linear operator. The output of a DeepONet is structured as a generalized inner product of two components: the branch network and the trunk network. The core objective of DeepONet is to approximate the output function $u(\inputvar) = \mathcal{G}(v)(\inputvar)$ at a specific query location $\inputvar \in \Omega_\inputvar$. In the DeepONet setting, the branch network takes the input function $v$ as a finite-dimensional vector of sensor readings $\mathbf{v} = \{v(\variable_1), v(\variable_2), \dots, v(\variable_m)\}$ at a fixed set of $m$ locations and maps this discrete input to a latent vector $\mathbf{b}(\textbf{v}) \in \mathbb{R}^r$, with $r$ being the size of the output dimension. 
The trunk network, on the other hand, takes the coordinates of the output query location $\inputvar$ as its input. It maps the coordinates $\inputvar \in \Omega_\inputvar$ to another latent vector $\mathbf{t}(\inputvar) \in \mathbb{R}^r$. 
The final output approximation $\mathcal{G}(\mathbf{v})(\inputvar)$ is computed by the inner product of the outputs from the two networks, that is,


\begin{equation} \label{eq:deeponet_ouput}
\mathcal{G} (\mathbf{v})(\inputvar) = \sum_{k=1}^r b_k(\textbf{v}) t_k(\inputvar) \,,
\end{equation}

where $b_k$ and $t_k$ are the $k$-th components of the latent vectors $\mathbf{b}$ and $\mathbf{t}$, respectively. The full function $\mathcal{G}(\mathbf{v})$ is then approximated by evaluating this expression over a set of desired query points $\boldsymbol{\variable}$. The DeepONet scheme described in this section is also represented in Figure \ref{fig:Architecture_DeepOnet}. 

On the other hand, the Fourier Neural Operator (FNO) \cite{li2021fourierneuraloperatorparametric} is a mesh-free architecture that parameterizes the integral kernel $\mathcal{K}$ directly in the Fourier, or frequency, domain, leveraging the computational efficiency of the Fast Fourier Transform (FFT). FNO is built upon the idea that, when the kernel $\kappa(\boldsymbol{\variable}, y)$ in the integral operator described in Equation \ref{eq:integral_operator} is translation-invariant, the integral operation becomes a convolution

\begin{equation}
\mathcal{K}(z_n)(\boldsymbol{\variable}) = (z_n * \kappa)(\boldsymbol{\variable}).
\label{eq:kernel_fno}
\end{equation}

Convolutions in the physical domain correspond to element-wise multiplication in the frequency (Fourier) domain \cite{li2020fourier, kovachki2023neural}. This principle allows an integral operator to be implemented efficiently through three key steps. First, the input feature map is transformed from the spatial domain to the frequency domain using the (Discrete) Fourier Transform, resulting in $\hat{z}_n = \mathcal{F}(z_n)$. Next, a learned, parameterized linear operator $\mathbf{R}$ is applied to the lower-frequency modes of $\hat{z}_n$. This operation is typically sparse, since higher-frequency modes are truncated both to reduce computational cost and to act as a low-pass filter, yielding $\hat{z}'_n = \mathbf{R} \cdot \hat{z}_n$. Finally, the modified spectral representation is transformed back into the spatial domain through the Inverse Fourier Transform, recovering the updated feature map $z'_n = \mathcal{F}^{-1}(\hat{z}'_n)$. The total FNO layer then combines this spectral convolution with a local linear transformation $\textbf{W}$ (analogous to the term in Equation \ref{eq:neural_operator_layer})

\begin{equation}
z_{n+1}(\boldsymbol{\variable}) = \sigma \left( \mathcal{F}^{-1}(\mathbf{R} \cdot \mathcal{F}(z_n))(\boldsymbol{\variable}) + \textbf{W} z_n(\boldsymbol{\variable}) \right) \,.
\label{eq:fno_output}
\end{equation}

The FNO's ability to perform global convolution efficiently (by operating on the entire domain simultaneously via the Fourier transform) makes it highly effective at capturing long-range dependencies, which are critical in many PDE solutions, such as those governed by advection or diffusion. This makes FNO computationally superior to CNN-based methods, which require numerous layers to achieve a comparable receptive field \cite{li2021fourierneuraloperatorparametric}. The overall structure of a FNO is represented in Figure \ref{fig:Architeture_FNO}.

\par 
In porous media applications, the use of neural operators to construct surrogate models for parameter exploration and temporal prediction is abundant in the literature. For instance, in the reservoir simulation context, NO-based models have been applied with good results in carbon capture and storage \citep{wen2021ccsnet, udeeponet, wen2022u}, reservoir engineering \cite{badawi2024neural, Tang2024, chandra2024fourier, Liu2023domain}, and other porous media applications \cite{XING2025114385}. Although many variations have been proposed to tackle the natural limitations of this family of methods in porous media applications, such as memory usage \cite{Liu2023} and temporal treatment \cite{diab2025}, other implementations aim to improve existing architectures. For instance, Wavelet Neural Operators \cite{Witte2023} have been proposed to improve the efficiency and performance of vanilla FNOs for CCS in large-scale (around $2$M cells) reservoirs.

\begin{figure}
            \centering
            \begin{tikzpicture}
        \tikzstyle{varstyle}=[circle, minimum size=\operationsize, fill=\variablecolor];
        \tikzstyle{operation}=[circle, minimum size=\operationsize, fill=\operationcolor];
        \tikzstyle{layer} = [rounded corners, minimum width=3*\operationsize, minimum height=\operationsize, fill=\layercolor]

        \node[varstyle] (v) at (0,\operationsize) { $\textbf{v}$ };
        \node[varstyle] (xi) at (0,-\operationsize) { $\inputvar$ };

        \node[layer] (branch) at (3*\operationsize, \operationsize) {Branch Net};
        \node[layer] (trunk) at (3*\operationsize, -\operationsize) {Trunk Net};

        \coordinate (bedge) at (6*\operationsize, \operationsize);
        \coordinate (tedge) at (6*\operationsize, -\operationsize);

        \node[operation] (merge) at (6*\operationsize, 0) {$\odot$};

        \node[layer, fill=\variablecolor, minimum width=4.6*\operationsize] (out) at (10*\operationsize,0) { $\mathcal{G}(\textbf{v})(\inputvar) = \displaystyle\sum_{k=1}^r b_k(\textbf{v}) t_k(\inputvar)$ };

        
        \path[-]
                (branch) edge (bedge)
                (trunk) edge (tedge);
                
        \path[-{\arrowTip}] 
    			(v) edge (branch)
    			(xi) edge (trunk)
                (bedge) edge (merge)
                (tedge) edge (merge)
                (merge) edge (out);

\end{tikzpicture}
            \caption{
                DeepONet Architecture. 
                The model is comprised of two subnetworks: a \textit{branch} and a \textit{trunk} network. The branch network takes as input the function $v$, represented by the vector $\textbf{v}$, and the trunk network takes as input the query point $\inputvar$. The outputs of the networks are combined through a merge operation (inner product, Hadamard product, linear/non-linear transformation), and the model outputs the result of the operator $\mathcal{G}(\textbf{v})$ in query point $\inputvar$, that is, $\mathcal{G}(\textbf{v})(\inputvar)$. 
            }
            \label{fig:Architecture_DeepOnet}
\end{figure}
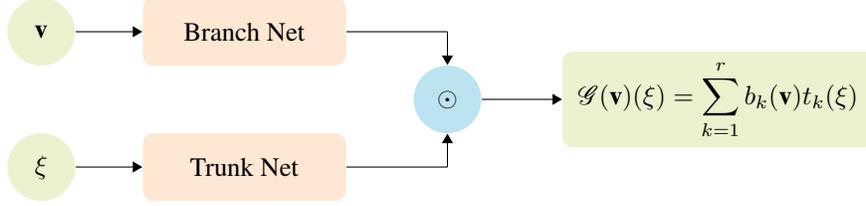

\begin{figure}
            \centering
            {

    \small
    \subfloat[FNO layer structure]{
        \begin{tikzpicture}
            \tikzstyle{varstyle}=[circle, minimum size=\operationsize, fill=\variablecolor];
            \tikzstyle{operation}=[circle, minimum size=\operationsize, fill=\operationcolor];
            \tikzstyle{layer} = [rounded corners, minimum width=3*\operationsize, minimum height=\operationsize, fill=\layercolor]
        
            \node[varstyle] (input) at (0,0) { $v(\variable)$ };
            \node[operation] (p) at (1.5*\operationsize, 0) { $\textbf{P}$ };
            \node[layer] (fl1) at (4*\operationsize, 0) { Fourier Layer 1 };
            \node[layer] (fl2) at (7.5*\operationsize, 0) { Fourier Layer 2 };
            \node[circle, minimum size=\operationsize] (etc) at (10*\operationsize, 0) { $\cdots$ };
            \node[layer] (fln) at (12.5*\operationsize, 0) { Fourier Layer $T$ };
            \node[operation] (q) at (15*\operationsize, 0) { $\textbf{Q}$ };
            \node[varstyle] (output) at (16.5*\operationsize, 0) { $u(\variable)$ };
        
            \path[-{\arrowTip}] 
        			(input) edge (p)
        			(p) edge (fl1)
                    (fl1) edge (fl2)
                    (fl2) edge (etc)
                    (etc) edge (fln)
                    (fln) edge (q)
                    (q) edge (output);
        \end{tikzpicture}
    } \\
    \vspace{3mm}
    \subfloat[Fourier layer $n$]{
        \begin{tikzpicture}
            \tikzstyle{varstyle}=[circle, minimum size=\operationsize, fill=\variablecolor];
            \tikzstyle{operation}=[circle, minimum size=\operationsize, fill=\operationcolor];
            \tikzstyle{layer} = [rounded corners, minimum width=3*\operationsize, minimum height=\operationsize, fill=\layercolor]
    
            \node[varstyle] (input) at (0,0) { $z_{n-1}$ };  
            \node[layer, minimum width=4.5*\operationsize, minimum height=1.5*\operationsize] (fourier) at (4*\operationsize, 1*\operationsize) {
                \begin{tikzpicture}
                    \tikzstyle{operation}=[circle, minimum size=\operationsize, fill=\operationcolor];
                    
                    \node[operation] (fft) at (0, 0) { $\mathcal{F}$ };
                    \node[operation] (r) at (1.5*\operationsize, 0) { $\textbf{R}$ };
                    \node[operation] (invfft) at (3*\operationsize, 0) { $\mathcal{F}^{-1}$ };
    
                    \path[-{\arrowTip}] 
                        (fft) edge (r)
                        (r) edge (invfft);
                \end{tikzpicture}
            };
            \node[operation] (w) at (4*\operationsize, -1*\operationsize) { $\textbf{W}$ };
            \node[operation] (sum) at (8*\operationsize, 0) { $+$ };
            \node[operation] (sigma) at (10*\operationsize, 0) { $\sigma$ };
            \node[varstyle] (output) at (12*\operationsize,0) { $z_{n}$ };  
    
            \path[-{\arrowTip}] 
                (input) edge[bend left=\arrrowbend] (fourier)
                (input) edge[bend right=\arrrowbend] (w)
                (fourier) edge[bend left=\arrrowbend] (sum)
                (w) edge[bend right=\arrrowbend] (sum)
                (sum) edge (sigma)
                (sigma) edge (output);
            
        \end{tikzpicture}
    }
}
            \caption{
                Fourier Neural Operator architecture. 
                Figure (a) shows the overall structure, in which the input is first lifted to a higher dimension through operator $\textbf{P}$, then passed through Fourier layers, and lastly projected back into the output space dimension through operation $\textbf{Q}$. 
                Figure (b) shows the structure of a Fourier layer, which comprises of the application of the FFT ($\mathcal{F}$), a linear transformation $\textbf{R}$ and the reverse FFT ($\mathcal{F}^{-1}$), then the result is summed with a local linear transformation $\textbf{W}$, and lastly a non-linear activation function $\sigma$ is applied, generating the layer output.
            }
            \label{fig:Architeture_FNO}
\end{figure}
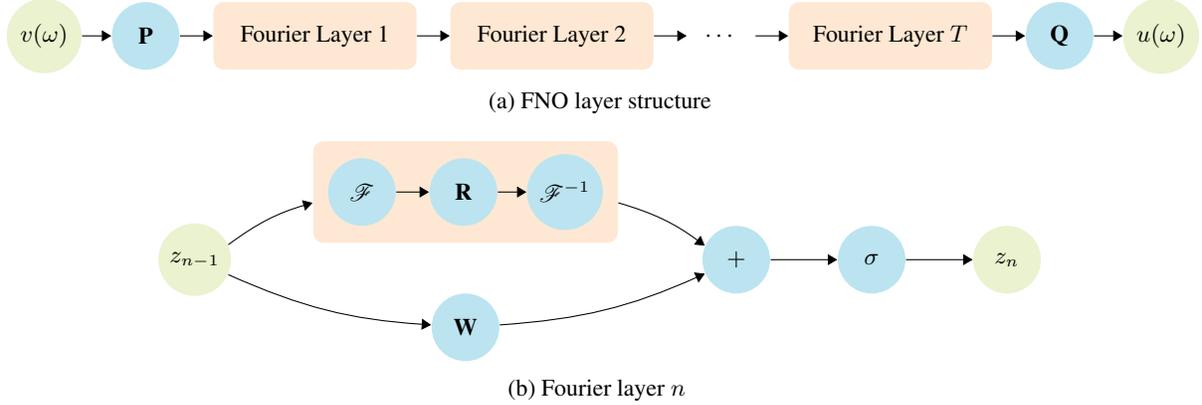

\section{Hybrid Deep Operator Networks}
\label{sec:hybrid}

\par
Although Deep Operator Networks (DeepONets) and Fourier Neural Operators (FNOs) are grounded in operator learning, they exhibit distinct behaviors when applied to spatio-temporal data. For purely spatial partial differential equations (PDEs), both architectures have demonstrated robust performance \cite{HU2025114272}. However, when addressing transient PDEs, their treatment of the temporal dimension diverges significantly. In the case of FNOs, two main formulations are commonly adopted in the literature: Markovian or autoregressive approaches \cite{diab2025, mccabe2023stabilityautoregressiveneuraloperators}, and direct time-mapping strategies \cite{li2020neural, Witte2023}. Despite their success in porous media applications \cite{badawi2024neural, wen2021ccsnet, Liu2023}, both formulations exhibit limited flexibility when extrapolating to unseen time horizons or varying time-step resolutions.

\par
Besides the time dimension modeling challenges, standard FNO implementations are also characterized by a substantial memory footprint \cite{george2025tensorgalore}, a large number of learnable parameters \cite{LI2025117732, tran2021factorized}, and an inherent dependence on rectangular computational domains \cite{Liu2023domain}. In large-scale porous media simulations, such constraints become particularly critical. It is not uncommon for FNO-based surrogate models representing reservoirs with approximately $10^6$ grid cells to require on the order of $O(10^8 \sim 10^9)$ trainable parameters. For instance, \cite{Witte2023} reported an FNO-based model for a carbon capture and storage (CCS) application with two million grid cells comprising approximately $226$ million parameters. Similarly, \cite{Liu2023} proposed a domain decomposition strategy to train an FNO model for a reservoir with $819$k cells, fitting the model across two NVIDIA GeForce RTX 3090 Ti GPUs with 24 GB of VRAM each. In another example, \cite{chandra2024fourier} required eight NVIDIA A100 GPUs to accommodate an FNO model for a reservoir comprising 428k grid cells. 

\par 
Hybrid schemes have recently emerged in the literature to address these challenges. In \cite{seabra2024ai}, the authors proposed hybrid data assimilation frameworks that combine FNO and Transformer U-Net surrogates to accelerate and improve uncertainty quantification in CO2 storage simulations. To overcome the geometric constraints of box-bounded reservoirs, the Domain-Agnostic FNO \cite{liu2023domainagnosticfourierneural} and Geometry-Informed Neural Operator \cite{li2023fourier} were introduced. Several other studies have also explored strategies to improve temporal modeling in time-dependent PDEs using different neural operator architectures \cite{jiang2024fourier, udeeponet, diab2025}.

\par 
In this section, we present a hybrid Neural Operator architecture that integrates conventional neural network components with operator-based learning techniques. This hybridization strategy enables more efficient scaling for the numerical approximation of large-scale transient problems by explicitly decoupling spatial and temporal learning within the branch and trunk networks. The main objective is to enhance the representation of temporal dynamics while reducing the overall memory footprint of operator learning for high-dimensional porous media flow data. Unlike previous studies that rely on multi-GPU training \cite{Witte2023, Tang2024}, domain-decomposition strategies \cite{Liu2023domain}, or spatial slicing techniques to fit the problem into hardware constraints \cite{Witte2023, 10898911, 10.1145/3534678.3539045}, the proposed approach aims to reduce the intrinsic complexity of Neural Operator architectures. This design facilitates the handling of coupled spatio-temporal PDEs without the need for excessively large parameter counts, thereby improving computational efficiency and scalability.

\par
The hybridization of DeepOnet with FNOs, MLPs, or KANs is based on the universal approximation theorem for neural operators \cite{lu2019deeponet} and guarantees that such hybrid architectures can approximate a broad class of nonlinear operators. 
The hybrid scheme is depicted in Figure \ref{fig:hybrid_scheme}. The base model is a DeepONet, and we test different configurations for both the branch and trunk networks. The branch net, which handles spatially coherent structures in the data, can be structured as an MLP, KAN, or FNO. For temporal treatment, the trunk net is either a KAN or an MLP model. The model is constructed using NVIDIA PhysicsNeMo \cite{physicsnemo2023}, an open-source deep-learning framework for SciML models built on top of PyTorch \cite{paszke2019pytorch}. PhysicsNeMo has built-in implementations of the DeepONet, FNO, and MLP architectures, which are employed in this work. For the KAN implementation, we adapt our own algorithm based on the Pykan \footnote{\url{https://github.com/KindXiaoming/pykan}} \cite{liu2024kan} and efficient-kan \footnote{\url{https://github.com/Blealtan/efficient-kan}} repositories. The model is merged into the PhysicsNeMo environment by inheriting the respective base model classes. For all DeepONet configurations, the loss function used to optimize the weights is the Mean Square Error (MSE), that is,

\begin{equation}
    MSE = \frac{1}{N_{s}} \sum_{j=1}^{N_{s}} \frac{1}{n_{\inputvar_{(j)}}} \sum_{i=1}^{n_{\inputvar_{(j)}}} \left[ \mathcal{G}(\mathbf{v}_j)(\mathbf{\inputvar}_i) - G_{gt}(\mathbf{v}_j)(\mathbf{\inputvar}_i) \right]^2  \,,
\end{equation}

where $G_{gt}(\cdot)$ corresponds to the ground truth the operator wishes to deduce from the data, $N_s$ is the number of samples, and $n_{\xi_{(j)}}$ is the number of query points, $\xi_i$, in $j$-th simulation in the dataset.

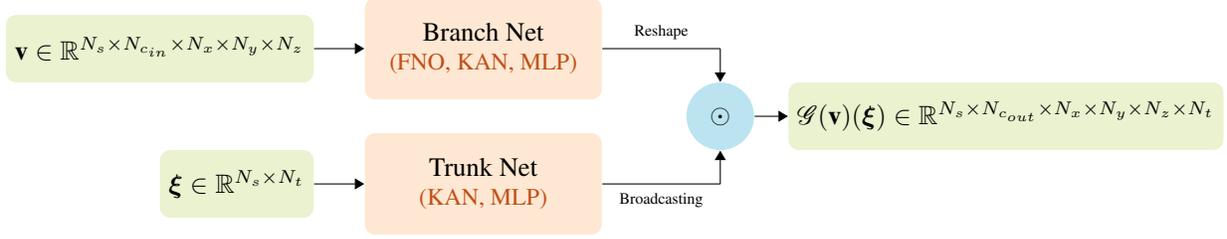
\begin{figure}
    \centering

\begin{tikzpicture}
        \tikzstyle{varstyle}=[rounded corners, minimum width=2*\operationsize, minimum height=\operationsize, fill=\variablecolor, anchor=east];
        \tikzstyle{operation}=[circle, minimum size=\operationsize, fill=\operationcolor];
        \tikzstyle{layer} = [rounded corners, minimum width=3.5*\operationsize, minimum height=1.5*\operationsize, fill=\layercolor, align=center]

        \node[varstyle] (v) at (0,\operationsize) { $\textbf{v} \in \mathbb{R}^{N_s \times N_{c_{in}} \times N_x \times N_y \times N_z}$ };
        \node[varstyle] (xi) at (0,-\operationsize) { $\boldsymbol{\inputvar} \in \mathbb{R}^{N_s \times N_t}$ };

        \node[layer] (branch) at (2.5*\operationsize, \operationsize) {
            Branch Net \\ {\small\color{\inLayerComment} (FNO, KAN, MLP)}
        };
        \node[layer] (trunk) at (2.5*\operationsize, -\operationsize) {
            Trunk Net \\ {\small\color{\inLayerComment} (KAN, MLP)}
        };

        \coordinate (bedge) at (6*\operationsize, \operationsize);
        \coordinate (tedge) at (6*\operationsize, -\operationsize);

        \node[operation] (merge) at (6*\operationsize, 0) {$\odot$};

        \node[varstyle, anchor=west] (out) at (7*\operationsize,0) { $\mathcal{G}(\textbf{v})(\boldsymbol{\inputvar}) \in \mathbb{R}^{N_s \times N_{c_{out}} \times N_x \times N_y \times N_z \times N_t} $ };

        
        \path[-]
                (branch) edge[anchor=south] node {{\tiny Reshape}} (bedge)
                (trunk) edge[anchor=north] node {{\tiny 
                Broadcasting}} (tedge);
                
        \path[-{\arrowTip}] 
    			(v) edge (branch)
    			(xi) edge (trunk)
                (bedge) edge (merge)
                (tedge) edge (merge)
                (merge) edge (out);

\end{tikzpicture}
    \caption{Hybrid DeepONet model. The input features are subdivided into spatial $\textbf{v}$ and temporal $\boldsymbol{\inputvar}$ information. The spatial information $\textbf{v}$ is processed by the branch network, which can encapsulate an FNO, a KAN, or an MLP model, while the temporal information $\boldsymbol{\inputvar}$ is processed by the trunk network, which comprises a KAN or an MLP model. The outputs of each subnetwork are then combined through a Hadamard product to generate the model output $\mathcal{G}(\mathbf{v})(\boldsymbol{\inputvar})$.}
    \label{fig:hybrid_scheme}
\end{figure}


\par 
The forward pass algorithms of a DeepOnet and its hybrid versions are shown respectively in Algorithms~\ref{alg:deepONet} and \ref{alg:deepONet_FNO}. DeepONet comprises two main networks: the Branch network, which encodes the input functions, and the Trunk network, which encodes the operator's evaluation locations. The outputs of these networks are combined through a Hadamard product to generate the final approximation of the target operator. These algorithms structure this approach, providing a methodology for constructing and training DeepONet models. In the algorithms $N_s$ is the number of samples, $N_t$ is the number of snapshots, $N_x$, $N_y$, and $N_z$ are the number of grid points in the $x$, $y$, and $z$ directions, $N_{c_{in}}$ is the number of input channels, and $N_{c_{out}}$ is the number of output channels. For the hybrid schemes that use either MLP or KAN in their branch network, the algorithm used in described in Algorithm \ref{alg:deepONet}, as for combinations using FNO in their branch network, we refer to Algorithm \ref{alg:deepONet_FNO}. An important step in Algorithm \ref{alg:deepONet} is the permutation used to process all channels in the branch net when using a KAN or MLP.

\begin{algorithm}
\footnotesize
\caption{The forward-pass of a DeepONet model}
\label{alg:deepONet}
\SetAlgoLined

\KwIn{$\mathbf{v} = v(\boldsymbol{\variable})\in\mathbb{R}^{N_s\times N_{c_{in}}\times N_x\times N_y\times N_z}$, 
      $\boldsymbol{\inputvar}\in\mathbb{R}^{N_s\times N_t}$} 

\KwOut{$\mathcal{G}(\textbf{v})(\boldsymbol{\inputvar})\in\mathbb{R}^{N_s\times N_{c_{out}}\times N_t\times N_x\times N_y\times N_z}$}

Initialize models $\texttt{branch\_model}\in\{\texttt{MLP},\texttt{KAN}\}$ and $\texttt{trunk\_model}\in\{\texttt{MLP},\texttt{KAN}\}$; \\
\tcp{Output shapes: \texttt{branch\_model} $\to N_t\cdot N_{c_{out}}$, \ \texttt{trunk\_model} $\to N_t$}

\For{$i\leftarrow 1$ \KwTo $N_s$}{
    Permute: $\textbf{v}[i] \in\mathbb{R}^{N_{c_{in}}\times N_x\times N_y\times N_z}\;\leftarrow\; \textbf{v}[i] \in\mathbb{R}^{N_z\times N_x\times N_y\times N_{c_{in}}}$\;
    
    Compute: $\textbf{b}[i]\leftarrow \texttt{branch\_model}\big(\textbf{v}[i]\big)$\;
    \tcp{shape $=(N_z,\;N_x,\;N_y,\;N_t\cdot N_{c_{out}})$}
    
    Permute: $\textbf{b}[i]\leftarrow \textbf{b}[i]\in\mathbb{R}^{\,N_t\cdot N_{c_{out}}\times N_x\times N_y\times N_z}$\;
    
    Reshape: $\textbf{b}[i]\leftarrow \textbf{b}[i]\in\mathbb{R}^{\,N_{c_{out}}\times N_t\times N_x\times N_y\times N_z}$\;
    
    Compute: $\textbf{t}[i]\leftarrow \texttt{trunk\_model}\big(\boldsymbol{\inputvar}[i]\big)$\;
    \tcp{shape $=(1,\;N_t)$}
    
    Broadcast: $\textbf{t}[i]\leftarrow \textbf{t}[i]\in\mathbb{R}^{\,N_{c_{out}}\times N_t\times N_x\times N_y\times N_z}$\;
    
    Compute: $\; \mathcal{G}(\textbf{v})(\boldsymbol{\inputvar})[i]\leftarrow \textbf{t}[i]\odot \textbf{b}[i]\;$ \tcp{Hadamard product}
}

\Return{$\mathcal{G}(\textbf{v})(\boldsymbol{\inputvar})$}\; \tcp{shape = $(N_s,N_{c_{out}},N_t,N_x,N_y,N_z)$}
\end{algorithm}

\begin{algorithm}
\footnotesize
\caption{The forward-pass of a Hybrid DeepONet-FNO model}
\label{alg:deepONet_FNO}
\SetAlgoLined

\KwIn{$\mathbf{v} = v(\boldsymbol{\variable}) \in \mathbb{R}^{N_s \times N_{c_{in}} \times N_x \times N_y \times N_z}$, 
      $\boldsymbol{\inputvar} \in \mathbb{R}^{N_s \times N_t}$}

\KwOut{$\mathcal{G}(\textbf{v})(\boldsymbol{\inputvar}) \in \mathbb{R}^{N_s \times N_{c_{out}} \times N_t \times N_x \times N_y \times N_z}$}

Initialize the models: $\texttt{branch\_model} \in \{ \texttt{FNO} \}$ and $\texttt{trunk\_model} \in \{\texttt{MLP}, \texttt{KAN}\}$\;
\tcp{Output shapes: \texttt{branch\_model} = $N_t \cdot N_{c_{out}}$ and \texttt{trunk\_model} = $N_t$}

\For{$i = 1$ \KwTo $N_s$}{
    Compute: $\textbf{b}[i] \gets \texttt{branch\_model}(\textbf{v}[i])$\;
    \tcp{shape = $(N_t \cdot N_{c_{out}}, N_x, N_y, N_z)$}
    
    Reshape: $\textbf{b}[i] \gets \textbf{b}[i] \in \mathbb{R}^{N_{c_{out}} \times N_t \times N_x \times N_y \times N_z}$\;
    
    Compute: $\textbf{t}[i] \gets \texttt{trunk\_model}(\boldsymbol{\inputvar}[i])$\;
    \tcp{shape = $(1, N_t)$}
    
    Broadcast: $\textbf{t}[i] \gets \textbf{t}[i] \in \mathbb{R}^{N_{c_{out}} \times N_t \times N_x \times N_y \times N_z}$\;
    
    Compute: $\mathcal{G}(\textbf{v})(\boldsymbol{\inputvar})[i] \gets \textbf{t}[i] \odot \textbf{b}[i]$\;
    \tcp{Hadamard product}
}
\Return{$\mathcal{G}(\textbf{v})(\boldsymbol{\inputvar})$}\; \tcp{shape = $(N_s, N_{c_{out}}, N_t, N_x, N_y, N_z)$}
\end{algorithm}



\section{Numerical Experiments}
\label{Numerical_Experiments}

\par 
In this section, we cover our numerical experiments validating our implementation of the hybrid model and assessing its capabilities. We begin with the 2D steady Darcy flow problem, which serves as a fundamental test case and baseline for our methodology. This two-dimensional elliptic problem models the pressure field in porous media governed by Darcy's law, with permeability fields sampled from a Gaussian process. Despite its simplified setting, it is widely adopted in the literature as a canonical benchmark to validate operator learning methods. Then, we test the model's ability to generalize to reservoir simulations (i.e., a time-dependent, nonlinear, coupled system of PDEs). We select the 10th Comparative Solution Project (CSP) from the Society of Petroleum Engineers (SPE) \cite{Christie2001} as our benchmark problem. This benchmark, also known as SPE10, comprises two reservoir models. The original purpose of SPE10 was to test contestants' ability to perform upscaling in reservoirs with complex permeability fields. Here, we take advantage of the natural complexity progression of the two models. SPE10 Model 1 is a relatively small 2D domain saturated with oil with one gas injection well and one production well. SPE10 Model 2 is a complex 3D reservoir with $1{,}122{,}000$ cells, one water injection well, and four production wells. Although SPE10 was proposed back in 2001, the complexity of the SPE10 Model 2 is still addressed to validate state-of-the-art solvers and SciML models.  We test different combinations for the branch and trunk networks, as shown in Table \ref{trunk_branch_exemplo2}. We test the proposed hybrid model's ability to serve as a surrogate, generalizing and predicting unseen scenarios beyond the provided training dataset. All numerical experiments were trained on a NVIDIA H100 GPU with $94$ GB of VRAM.

\subsection{2D Darcy Flow}
The Darcy flow problem is modeled by a second-order elliptic PDE, given by:
\begin{align}
    -\nabla \cdot (k(\mathbf{x}) \nabla p(\mathbf{x})) = 1, \quad \Omega \in [0, 1]^2, \\
    p(\mathbf{x}) = 0 \quad \text{on } \partial (0,1)^2,
\end{align}
where $p(\mathbf{x})$ is the pressure field, $k(\mathbf{x})$ is the permeability field, sampled from a Gaussian process:
\begin{equation}
    k(\mathbf{x}) \sim \mathcal{N}(0,(-\Delta + 9I)^{-2}).
\end{equation}

\begin{table}
    \centering
    \caption{
        Configurations of the hybrid DeepONet architectures. The models used in the branch and in the trunk networks are specified for each proposed architecture. 
    }
        \begin{tabular}{lcc}
            \hline
            \textbf{Architecture} & \textbf{Branch} & \textbf{Trunk} \\
            \hline
            DeepONet (FNO + KAN) & FNO & KAN \\
            DeepONet (FNO + MLP) & FNO & MLP \\
            DeepONet (KAN) & KAN & KAN \\
            DeepONet (MLP) & MLP & MLP \\
            \hline
        \end{tabular}
    \label{trunk_branch_exemplo2}
\end{table}

The objective is to learn the mapping from the permeability field $ k(\mathbf{x}) $ to the corresponding pressure field $ p(\mathbf{x})$. Given that this problem is steady and two-dimensional, the dimensions of the problem are $N_x = 240$, $N_y = 240$, $N_z = 0$, and $N_t = 0$. A total of 1500 samples are used for training and 300 for testing. Solutions $p(\mathbf{x})$ used as ground truth for training are obtained by using a second-order finite difference solver, which can be obtained at the PhysicsNeMo repository\footnote{\url{https://github.com/NVIDIA/physicsnemo/tree/main/examples/cfd/darcy_transolver}}. For this problem, the branch network input is the permeability field $N_{c_{in}} = 1$ and the trunk network input are the $\mathbf{x} = (x, y)$ coordinates. The model's output is the predicted pressure field through the surrogate model, that is, $N_{c_{out}} = 1$. Notice that although the proposed hybrid schemes are initially intended to deal with spatio-temporal PDEs, this canonical example is a steady-state case. In this first experiment, we test the use of the hybrid neural operator to learn how to approximate purely spatial PDEs. Another aspect that differs in this example is the data normalization, which is done by subtracting the mean and dividing by the standard deviation of the dataset. We choose the SiLU activation function for all configurations in the trunk, while the branch network uses the Tanh activation function. All models are trained using $500$ epochs, and the Adam optimizer is used. The initial learning rate is $0.9$ with the cosine learning rate decay. For hybrid setups that use FNO in the branch network, the lifting and projection layers are of size $32$ and $1$ FNO layer with eight modes is used. The output channel is a single channel. 
For MLP networks, regardless of branch or trunk networks, the setup used 
has $240$ input features, two layers of size $32$ and $240$ output features. 
KANs are built similarly to MLPs, with a spline order of $4$. 
The hyperparameters were chosen after an ablation study measuring the relative error $2$-norm for the test set, that is, $||p||_{rel}=\frac{||p-p^*||_2}{||p^*||_2}$,  where $p^*$ represents the ground truth.

\par 
In Figure \ref{exemplo_2_training}, we can see the training (left) and the test (right) loss evolution for all hybrid configurations. While they show small discrepancies, the comparison of pressure predictions for the test set, shown in Figure \ref{exemplo_2_val}, reveals that the hybrid models combining KAN and FNOs present smaller pointwise absolute errors, and smaller errors in both infinity, $||p||_{\infty}$, and relative error $2$-norm.

\begin{figure}
    \centering
    \begin{subfigure}{0.45\textwidth}
        \centering
        \includegraphics[width=0.95\textwidth]{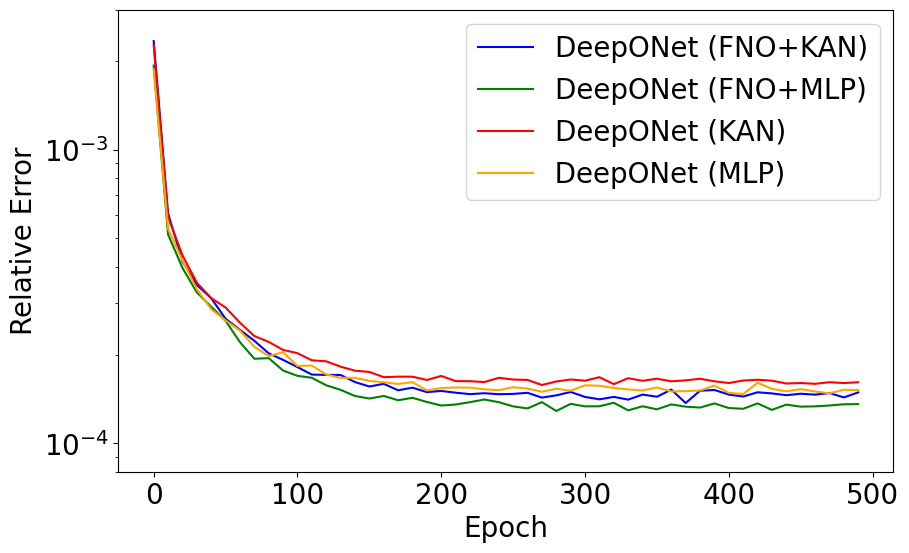}
        \caption{Training Loss.}
        \label{exemplo_2_train_1_0}
    \end{subfigure}
    \begin{subfigure}{0.45\textwidth}
        \centering
        \includegraphics[width=\textwidth]{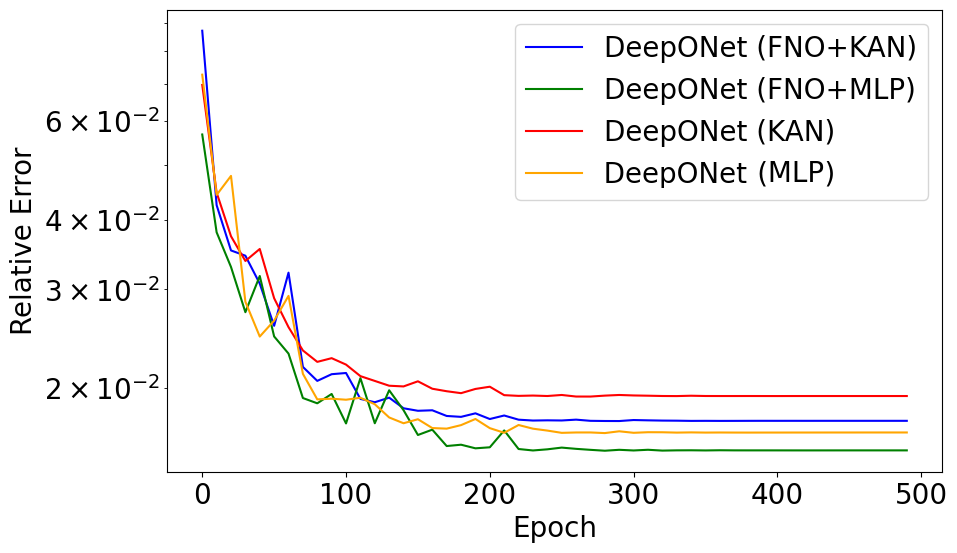}
        \caption{Test Loss.}
        \label{exemplo_2_train_1_1}
    \end{subfigure}
    \caption{
        Training (a) and test (b) relative error $2-$norms of the proposed hybrid models throughout 500 training epochs in the Darcy flow problem.
    }
    \label{exemplo_2_training}
\end{figure}

\begin{figure}
    \centering
    \begin{subfigure}{1.0\textwidth}
        \includegraphics[width=\linewidth]{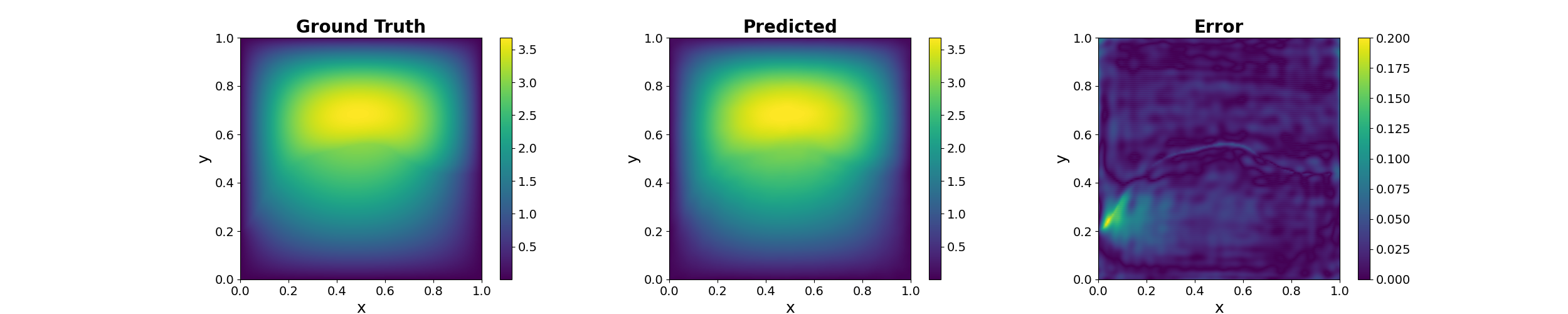}
        \caption{DeepONet (FNO+KAN).}
        \label{exemplo_2_val_1_0}
    \end{subfigure}
    \hfill
    \begin{subfigure}{1.0\textwidth}
        \includegraphics[width=\linewidth]{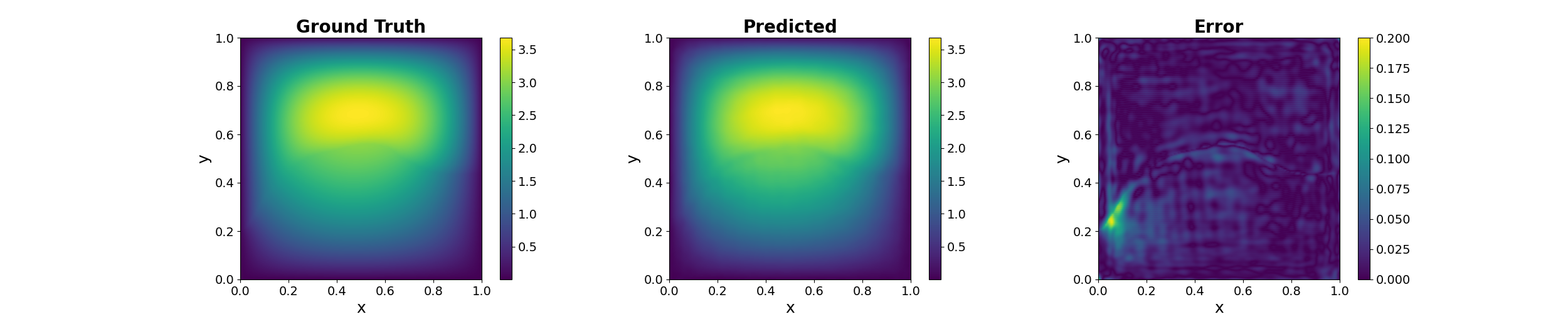}
        \caption{DeepONet (FNO+MLP).}
        \label{exemplo_2_val_1_1}
    \end{subfigure}
    \hfill
    \begin{subfigure}{1.0\textwidth}
        \includegraphics[width=\linewidth]{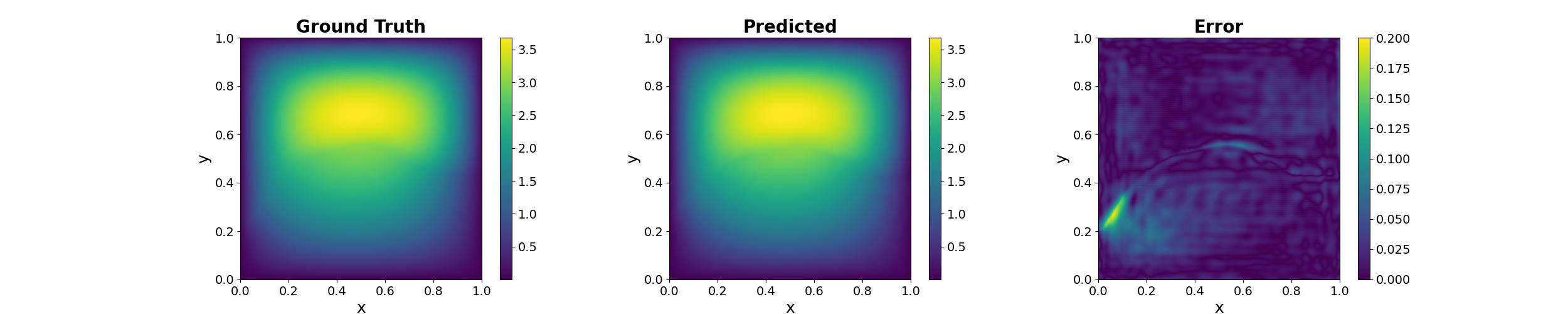}
        \caption{DeepONet (KAN).}
        \label{exemplo_2_val_1_2}
    \end{subfigure}
    \hfill
    \begin{subfigure}{1.0\textwidth}
        \includegraphics[width=\linewidth]{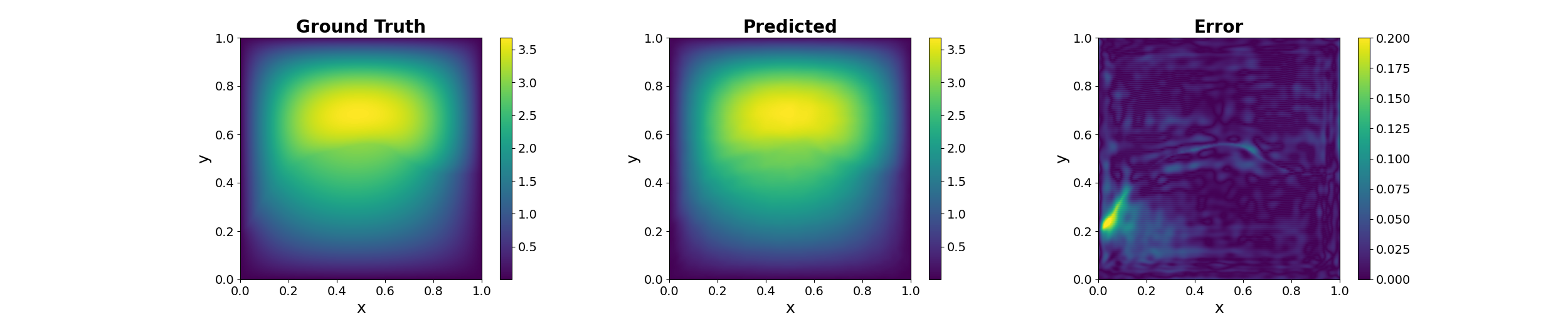}
        \caption{DeepONet (MLP).}
        \label{exemplo_2_val_1_3}
    \end{subfigure}
    \caption{Test results for 2D Darcy flow. The column on the left contains the ground truth pressure for comparison, the middle column the predicted pressure fields, and the right column the pointwise abolute error for: (a) DeepONet (FNO + KAN), $||p||_{\infty} = 0.201$ and $||p||_{rel} = 0.0495$; (b) DeepONet (FNO + MLP),  $||p||_{\infty} = 0.184$ and $||p||_{rel} = 0.0476$; (c) DeepONet (KAN),  $||p||_{\infty} = 0.190$ and $||p||_{rel} = 0.0545$; and (d) DeepONet (MLP),  $||p||_{\infty}= 0.234$ and $||p||_{rel} = 0.0635$.}
    \label{exemplo_2_val}
\end{figure}

\subsection{Reservoir simulation}

\par 
The steady Darcy flow is often used to validate SciML models, but their application in real-world problems is limited. Reservoir simulation is dictated by the physical phenomenon of unsteady multiphase fluid flow in porous media. The governing equations are obtained from the integration of mass, momentum, and energy conservation together with thermodynamic equilibrium \cite{chen2007reservoir}. Typically, reservoir simulation is handled using two major model families: black-oil models and compositional fluid models. Black-oil models are simpler models used to express simple phase behavior phenomena, based on the assumption that any point along the flow in the reservoir to the surface facilities is considered as a binary fluid composed of stock tank oil and surface gas. The phase behavior phenomena are quantified using PVT properties that depend only on pressure and temperature, disregarding effects due to the exact fluid composition. Compositional models, on the other hand, monitor changes in the fluid composition at every time instant and throughout the domain. In this case, phase behavior calculations require the solution of stability and flash calculations based on an Equation of State (EoS) model. Compositional models are more accurate and substantially more expensive than black-oil models. For a more descriptive comparison between black-oil models and compositional fluid models, we refer to \cite{Fevang2000, Coats1998}.

\par 
In this study, we focus on the former. The mathematical formulation of the black-oil model is derived from the principle of mass conservation for each pseudo-component $ \alpha \in \{w, o, g\} $, which represents water, oil, and gas, respectively. This results in a system of coupled, non-linear PDEs
\begin{equation} \label{eq:blackoil}
    \frac{\partial}{\partial t} \left( \phi_{\text{ref}} A_\alpha \right) + \nabla \cdot \textbf{u}_\alpha + q_\alpha = 0 \,\,,
\end{equation}
where the first term represents mass accumulation and the second term describes the flux. The accumulation terms, $A_\alpha$,  as well as their respective component velocities $\mathbf{u}_\alpha$, are defined as: 
\begin{align}
    &A_w = m_\phi b_w s_w \,,   &\textbf{u}_w = b_w \textbf{v}_w \,, \\
    &A_o = m_\phi (b_o s_o + r_{og} b_g s_g) \,,   &\textbf{u}_o = b_o \textbf{v}_o + r_{og} b_g \textbf{v}_g \,, \\
    &A_g = m_\phi (b_g s_g + r_{go} b_o s_o) \,,   &\textbf{u}_g = b_g \textbf{v}_g + r_{go} b_o \textbf{v}_o \,, 
\end{align}
in which $\phi_\text{ref}$ is the reference porosity, $m_\phi$ is a pressure-dependent multiplier, $b_\alpha$ is the phase shrinkage/expansion factor, $s_\alpha$ is the phase saturation, and $r_{go}, r_{og}$ are the mass ratios of dissolved gas in oil and vaporized oil in gas, respectively. The component velocities, $\textbf{u}_\alpha$, are related to the phase fluxes, $\textbf{v}_\alpha$, which are governed by the multiphase extension of Darcy's law,
\begin{equation}
    \textbf{v}_\alpha = - \lambda_\alpha \textbf{K}\left(\nabla p_\alpha - \rho_\alpha \textbf{g} \right)
\end{equation}
with $\textbf{K}$ being the absolute permeability tensor, $\lambda_\alpha$ the phase mobility (relative permeability divided by viscosity), $p_\alpha$ the phase pressure, $\rho_\alpha$ the phase density, and $\textbf{g}$ the gravitational acceleration vector. 

Aside from the governing PDEs, two additional physics constraints are required. The first is that the phase saturations must sum to unity within the pore volume, 
\begin{equation}
    s_w + s_o + s_g = 1 \,\,,
\end{equation}
and the second relates the phase pressures through capillary pressure, $p_c$, which is a function of saturation, that is, 
\begin{align}
    p_{c,ow}(s_w) &= p_o - p_w \,\,, \\
    p_{c,og}(s_g) &= p_g - p_o \,\,. \label{eq:closure}
\end{align}

\par 
To solve the black-oil model, we choose the OPM Flow simulator \cite{opmflow}, an open-source library that solves this system of equations using a fully implicit numerical scheme. The spatial domain is discretized with an upwind finite-volume method, while time is discretized using an implicit (backward) Euler scheme. The resulting large system of non-linear algebraic equations is solved simultaneously at each time step using a Newton-Raphson linearization method coupled with a preconditioned iterative linear solver. OPM Flow uses input decks -- plain-text files with a defined structure -- to set simulation parameters. More details on how OPM Flow approximates the black-oil model can be found in \cite{opmflow}. 

\par 
We chose the 10th SPE Comparative Solution Project (or simply SPE10) \cite{Christie2001} as our problem of choice, given that this application is well-studied, and our results can be compared with the literature. This benchmark includes two reservoir models, Model 1 and Model 2, which are further described in this section. 

\subsubsection{SPE10 Model 1}
The SPE10 Model 1 is a two-phase (oil and gas) model with no dipping or faults. The dimensions of the model are $2500 \text{ft long} \times 25 \text{ft wide} \times 50 \text{ft}$ thick on a $100 \times 1 \times 20$ mesh. Initially, the model is fully saturated with oil, as gas is injected at a constant rate. Figure \ref{fig:perm} shows the isotropic and heterogeneous permeability field provided for the problem, as well as the injection and production wells locations. Although SPE10 Model 1 is a low-dimensional problem, the permeability field ranges over $6$ orders of magnitude.
The benchmark injection rate is set to $0.30$ Mscf/day, the constant porosity is $0.2$, and the production well bottomhole pressure (BHP) is set to 95 psia. For all parameters used in the SPE10 model, we refer to \cite{Christie2001}.

\begin{figure}
    \centering
    \includegraphics[width=1\linewidth]{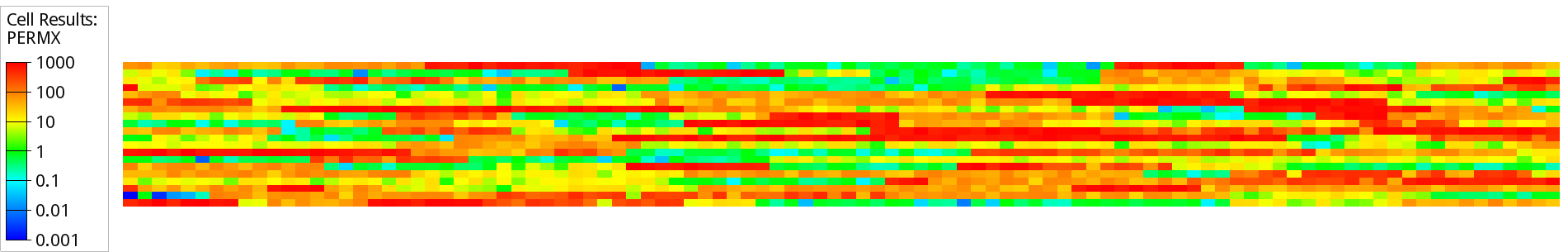}
    \caption{Correlated permeability field in the $x$ direction. Notice that, for this benchmark, the logarithmic scale reveals a permeability variation of $6$ orders of magnitude. The $z$-axis (vertical) has been exaggerated by a factor of $5$.}
    \label{fig:perm}
\end{figure}

\par 
For this first experiment, we generate $100$ simulations of the SPE10 Model 1 setup by varying the gas injection rates as,
\begin{equation}
    q_g = \text{Unif}(0.23, 0.37) \,,
\end{equation}
where $\text{Unif}(a_1, a_2)$ is the uniform distribution. For each generated sample of $q_g$, a new simulation is evaluated, generating saturation and pressure fields as well as oil curves for each value of gas injection rate. Figure \ref{fig:spe10_data_gen} shows the envelope of cumulative oil production for all simulations, as well as some individual curves. 

\begin{figure}
     \centering
     \begin{subfigure}[b]{0.49\textwidth}
            \centering
            \includegraphics[width=\linewidth]{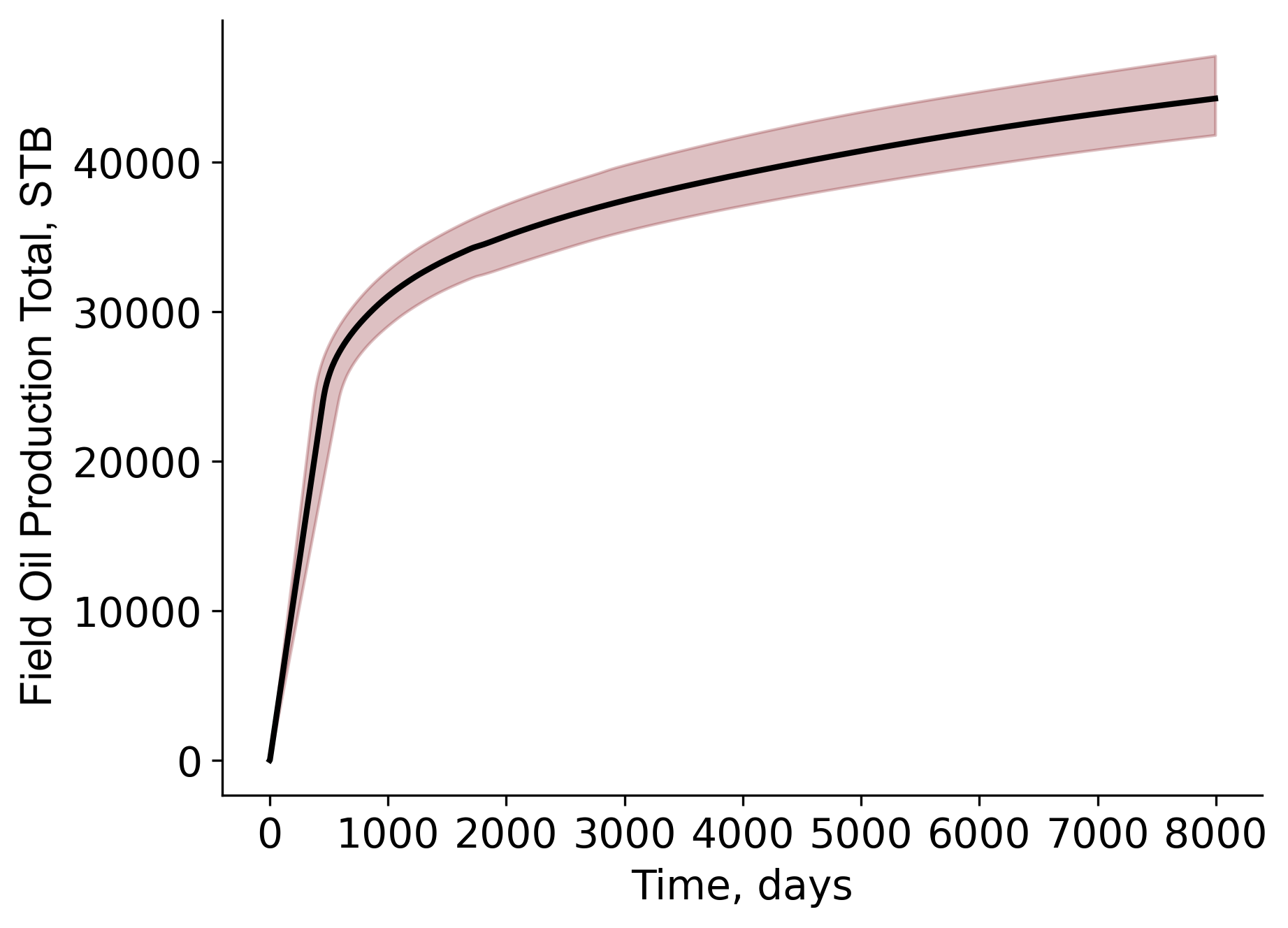}
            \subcaption{Field oil production total envelope for $100$ simulations}
            \label{fig:envelope_spe10}
     \end{subfigure}
     \hfill
     \begin{subfigure}[b]{0.49\textwidth}
            \centering
            \includegraphics[width=\linewidth]{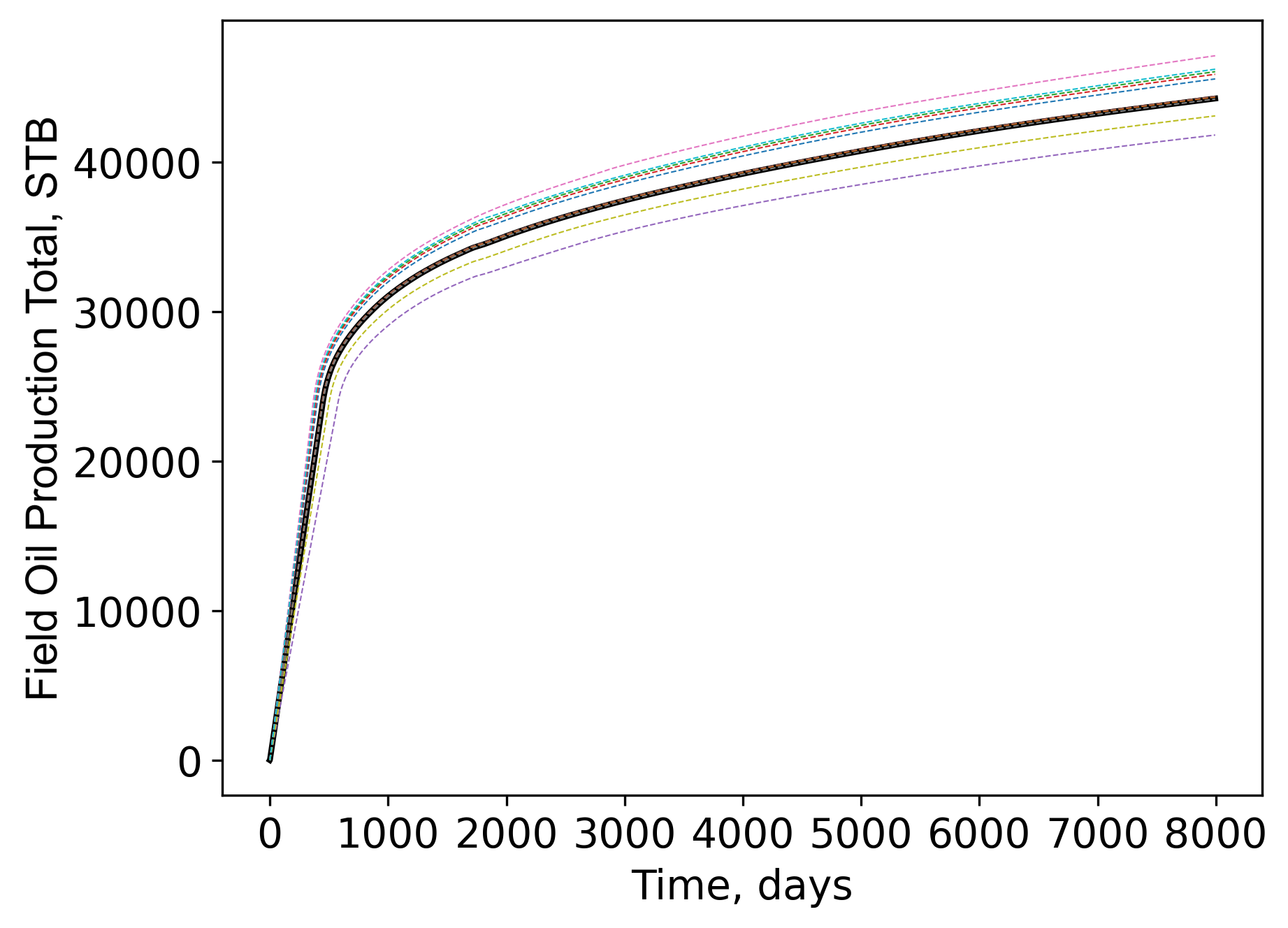}
            \subcaption{Field oil production total curves for a sample of $10$ simulations}
            \label{fig:cases_spe10}
     \end{subfigure} \\ 
     \begin{subfigure}[b]{0.99\textwidth}
            \centering
            \includegraphics[width=0.75\linewidth]{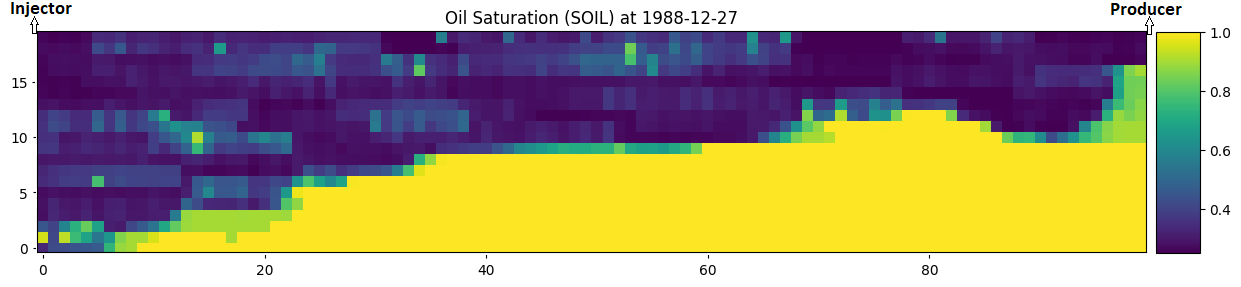}
            \subcaption{Oil saturation field for the last report step of the SPE10 Model 1 benchmark (using injection rate of $0.30$ Mscf/day).}
            \label{fig:spe10_sat_field}
     \end{subfigure}
    \caption{Outputs of the numerical simulation. The Figure in the top left shows the envelope of field oil production curves for $100$ simulations on SPE10 Model 1, while the top right shows a sample of $10$ curves. The black line denotes the original SPE10 Model 1 solution in both cases. The bottom Figure shows the oil saturation field at the last report step for the benchmark using an injection rate of $0.30$ Mscf/day. The injection rates are generated using a uniform distribution between $0.23$ Mscf/day and $0.37$ Mscf/day.}
    \label{fig:spe10_data_gen}
\end{figure}

\par 
Unlike the steady 2D Darcy flow, the SPE10 Model 1 benchmark is a transient, two-phase, nonlinear problem, and the dynamics of the components should be properly accounted for. For each simulation, OPM Flow generates grid fields and summary values at every report or time step. Grid fields are primary spatio-temporal variables that are described by each cell of the mesh at all report steps, such as the component's saturations and pressure, while summary values are post-processed scalar quantities computed at every solver time step, such as each well's oil production rates and bottomhole pressures. For instance, the integration of oil saturation, a grid field, through the existing well model logic in OPM Flow generates, at every solver time step, a summary value, which is the post-processed scalar plotted in each curve in Figure \ref{fig:cases_spe10}. OPM Flow reads and writes binary files in formats used by the ECLIPSE simulator from Schlumberger, since these are the dominant file formats supported by most pre- and post-processing tools in reservoir engineering \cite{opmflow}. So, in order to extract and structure grid and scalar data from the simulations, the res2df package \footnote{\url{https://equinor.github.io/res2df/index.html}} from Statoil/Equinor is used. 

\par 
Both grid and post-processed variables are crucial in reservoir engineering and are important in our attempt to generate proxy models for the numerical simulations. However, their fundamental differences in data structure may affect the ingestion of both types of information in a machine learning model. For example, geological input data, such as permeabilities and porosity, require steady spatial information, whereas operational parameters, such as variations in well BHPs and injection rates during oil production, are temporal scalars. In order to assemble all variables into a single tensor for modeling, we use the bit mask approach proposed by Badawi \& Gildin \cite{badawi2024neural}, where the spatial distribution of wells is assigned to a zero-filled tensor. That is, temporal scalar quantities become spatio-temporal tensors where the well location in the grid is a non-zero value, which is the desired scalar. The same procedure is done to extract the scalar quantities predicted in the surrogate model (for instance, the oil production rate at a given time instant). Knowing the cells corresponding to the production well, we average the predicted quantities and transform the result back into a scalar. Figure \ref{fig:bit_mask} illustrates both procedures. After assembling the input tensor, the models can be trained. We normalize the entries using the neuralop \cite{kossaifi2024neural, kovachki2023neural} implementation for the Unit Gaussian normalizer.


\begin{figure}
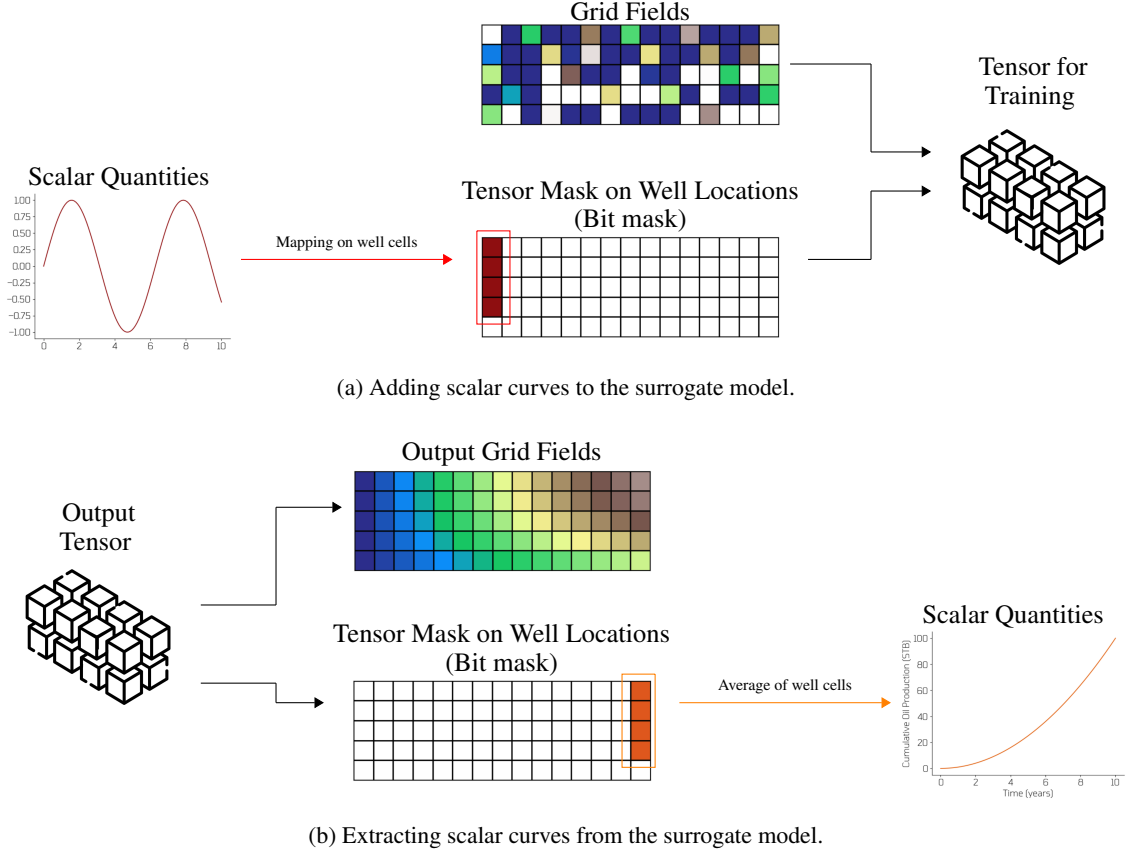

    \centering
     \include{figuras/Arquiteturas/BitMask_tikz}
    \caption{Illustrations of how scalar quantities are handled within the proposed models. Figure (a) shows the conversion of scalar quantities into tensors through bit masks around the wells, and Figure (b) shows the conversion of the output tensors into scalar quantities, which correspond to the average of the values of the cells within well locations for each time step. 
    }
    \label{fig:bit_mask}
\end{figure}

Now, we use the proposed methodology to predict $N_{c_{out}} = 1$ output channel, the gas saturation field, for a given gas injection rate value. For all hybrid configurations, the branch receives $N_{c_{in}} = 4$ channels as inputs: the $x$ and $z$ coordinates, permeability, and the tensorized mapping of the gas injection rate for that sample done through the process described in Fig. \ref{fig:bit_mask}. In this problem, the isotropic and heterogeneous permeability illustrated in Fig. \ref{fig:perm} is used in both the $x$ and $z$ directions, making it redundant to use both fields as different input channels in the model. Spatial dimensions for this problem are $N_x = 100$, $N_y = 1$ and $N_z = 20$. The trunk network is responsible for setting the time coordinates, which, in this problem, are defined as $N_t = 800$ time steps, with each $\Delta t = 10$ days. Both networks are responsible for mapping the inputs into the gas saturation field for all time steps.  For hybrid setups that use FNO in the branch network, a lifting layer of size $16$, $2$ Fourier layers with $4$ Fourier modes, and an MLP of two layers of size $16$ are used.
For MLP networks, the setup used includes $80$ input features, one hidden layer of size $16$, and $20$ output features. 
As for KANs, there are $4$ input features, $4$ layers of size $16$, $1$ output feature, and a cubic spline order. 
For all hybrid configurations, the trunk and branch networks use the Tanh and SiLU activation functions, respectively. Here, the AdamW optimizer is used with cosine learning rate decay, with an initial learning rate of $10^{-2}$ for $3,000$ epochs. These hyperparameters were defined after preliminary ablation studies that assessed the generalization quality of the hybrid schemes. 

\par 
Figure~\ref{fig:metrics_spe10m1} shows the evolution of the loss function (MSE) and the relative error $2-$norm during training for both the training (left subplots) and validation (right subplots) sets. Due to the oscillatory nature of these metrics throughout the learning process, the epoch-wise values are displayed as semi-transparent lines, while the bold curves correspond to moving averages computed over every $50$ epochs to highlight the overall convergence trend. We notice that the hybrid schemes with FNO in the branch network outperform the other models, consistently reducing errors throughout training and achieving the lowest values at the end. These trends are consistent with the final metrics reported in Table \ref{deeponet_all_hybridizations_metrics_case2}, which covers the MSE and relative $2-$norm computed over the test dataset for the gas saturation channel prediction after training the surrogate models. Both results confirm the superior performance of FNO-based hybridizations compared with hybrid schemes that used KAN and MLP on the branch network. We also assess the results for a given sample of the test set. Figures \ref{Val_spe10_dofk} to \ref{Val_spe10_dovanilla} show the ground truth, the predictions of each model, and the absolute pointwise error in space for the last time step predicted on one sample of the test set. For all tested combinations, spatial errors are more noticeable in the interface between oil and gas. As observed during training and overall metrics, the DeepONet (MLP) model represents the true field reasonably well, while the DeepONet (KAN) achieves better results than the DeepONet (MLP). Additionally, models that include FNO in DeepONet exhibit the lowest errors, both visually and quantitatively.


\begin{figure}
    \centering
    \includegraphics[width=0.95\linewidth]{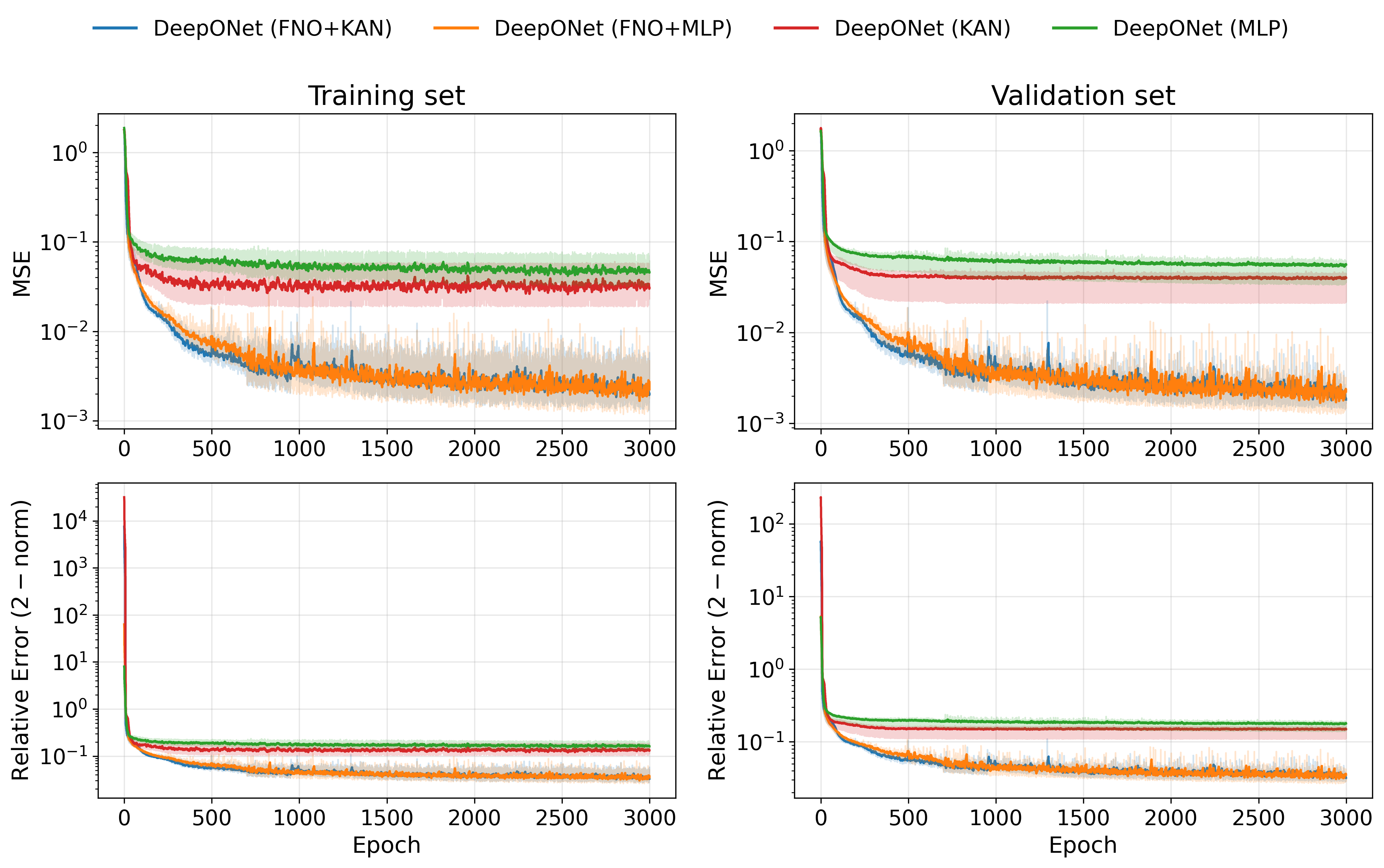}
    \caption{Evolution of the loss function (MSE) and the relative error $2-$norms during training for both the training (left) and validation (right) sets for the SPE10 Model 1 experiment. Due to the oscillatory behavior of these metrics throughout the learning process, the epoch-wise values are shown as semi-transparent lines, while the bold curves represent moving averages computed over every $50$ epochs to highlight the overall training trend.}
    \label{fig:metrics_spe10m1}
\end{figure}

\begin{figure}
    \centering
    \begin{subfigure}{1.0\textwidth}
        \includegraphics[width=\linewidth]{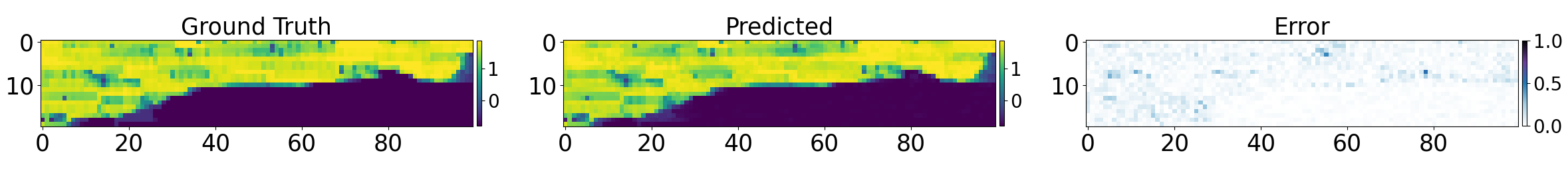}
        \caption{DeepONet (FNO+KAN).}
        \label{Val_spe10_dofk}
    \end{subfigure}
    \hfill
    \begin{subfigure}{1.0\textwidth}
        \includegraphics[width=\linewidth]{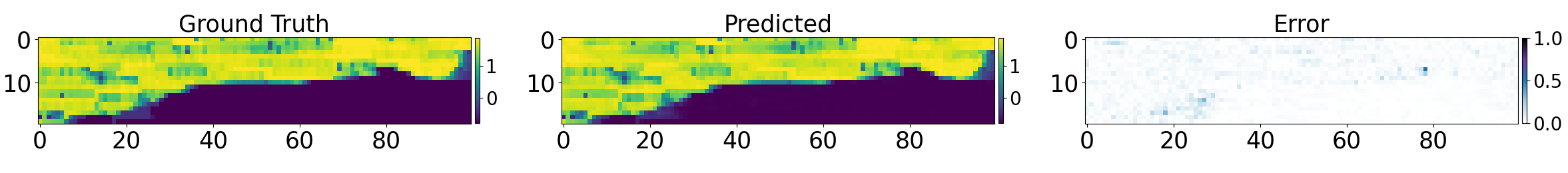}
        \caption{DeepONet (FNO+MLP).}
        \label{Val_spe10_dofmlp}
    \end{subfigure}
    \hfill
    \begin{subfigure}{1.0\textwidth}
        \includegraphics[width=\linewidth]{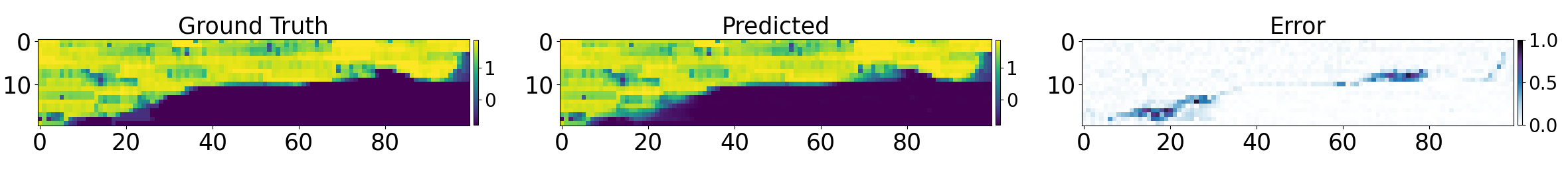}
        \caption{DeepONet (KAN).}
        \label{Val_spe10_dokan}
    \end{subfigure}
    \hfill
    \begin{subfigure}{1.0\textwidth}
        \includegraphics[width=\linewidth]{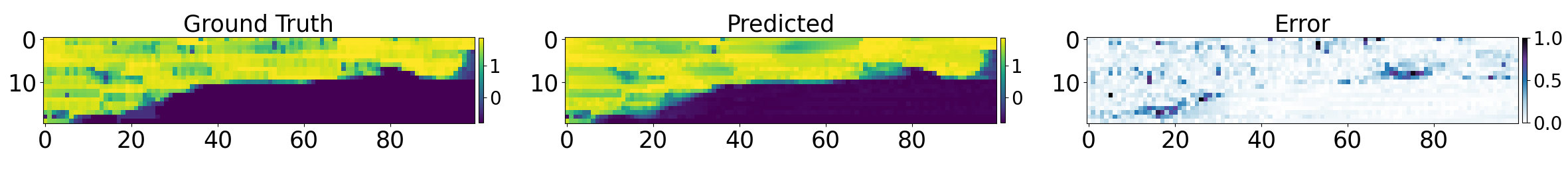}
        \caption{DeepONet (MLP).}
        \label{Val_spe10_dovanilla}
    \end{subfigure}
    \caption{Test of DeepONet models for gas saturation (SGAS) prediction over 8,000 days, at an injection rate of 0.3070 Mscf/d. Predictions are compared against simulation results at $t=8,000$ days. Left, ground truth; Center, predictions; Right, absolute pointwise error.}
    \label{spe10_val_case2}
\end{figure}

\begin{table}
\centering
\caption{Observed error values of the hybrid schemes for the predicted gas saturation on the SPE10M1 benchmark for the validation set.}
\label{deeponet_all_hybridizations_metrics_case2}
\begin{tabular}{lcc} 
\hline
\textbf{Architecture} & \textbf{MSE} & \textbf{Relative Error $2-$norm} \\
\hline 
DeepONet (FNO+KAN)     &  $1.43 \times 10^{-3}$  & $ 2.83 \times 10^{-2}$ \\
DeepONet (FNO+MLP)     &  $1.25 \times 10^{-3}$  & $ 2.65 \times 10^{-2}$ \\
DeepONet (KAN)         &  $2.09 \times 10^{-2}$  & $ 1.09 \times 10^{-1}$ \\
DeepONet (MLP)         &  $3.33 \times 10^{-2}$  & $ 1.38 \times 10^{-1}$ \\
\hline
\end{tabular}
\end{table}

\subsection{SPE10 Model 2} \label{SPE10M2Se}
The next benchmark, SPE10 Model 2, is substantially more complex than the previous one. SPE10 Model 2 is a two-phase (oil and water) model with simple geometry, no top structure, and no faults. The complexity of the problem, however, lies in the fine grid of the model and the complex permeability and porosity maps. The dimensions of the model are $1,200 \text{ft long} \times 2,200 \text{ft wide} \times 170 \text{ft}$ thick, on a $60 \times 220 \times 85$ mesh, yielding $1,122,000$ cells. The domain is initially fully saturated with oil, and water flooding starts at a fixed rate of $q_w = 5,000$ STB/day. Unlike the benchmark, we set the report steps to be defined every $10$ days, and the simulation is run for $1,000$ days. Figure \ref{fig:perm_spe10m2} shows two slices of the permeability field provided for the problem, as well as the injection and production wells' locations. For more details on the benchmark, see \cite{Christie2001}.

\par 
Similar to the SPE10 Model 1, we generate $25$ samples by varying the water injection rate of the injector well. Each value is sampled from 
\begin{equation}
    q_w = \text{Unif}(4000, 6000),
\end{equation}
being $q_w$ measured in STB/days. For each sample of $q_w$, a new simulation is performed for the SPE10 Model 2. Each parallel simulation takes around $2$ hours on $16$ cores of an Intel(R) Xeon(R) Gold 6430 processor.

\begin{figure}
    \centering
    \begin{subfigure}{0.49\textwidth}
        \includegraphics[width=\linewidth]{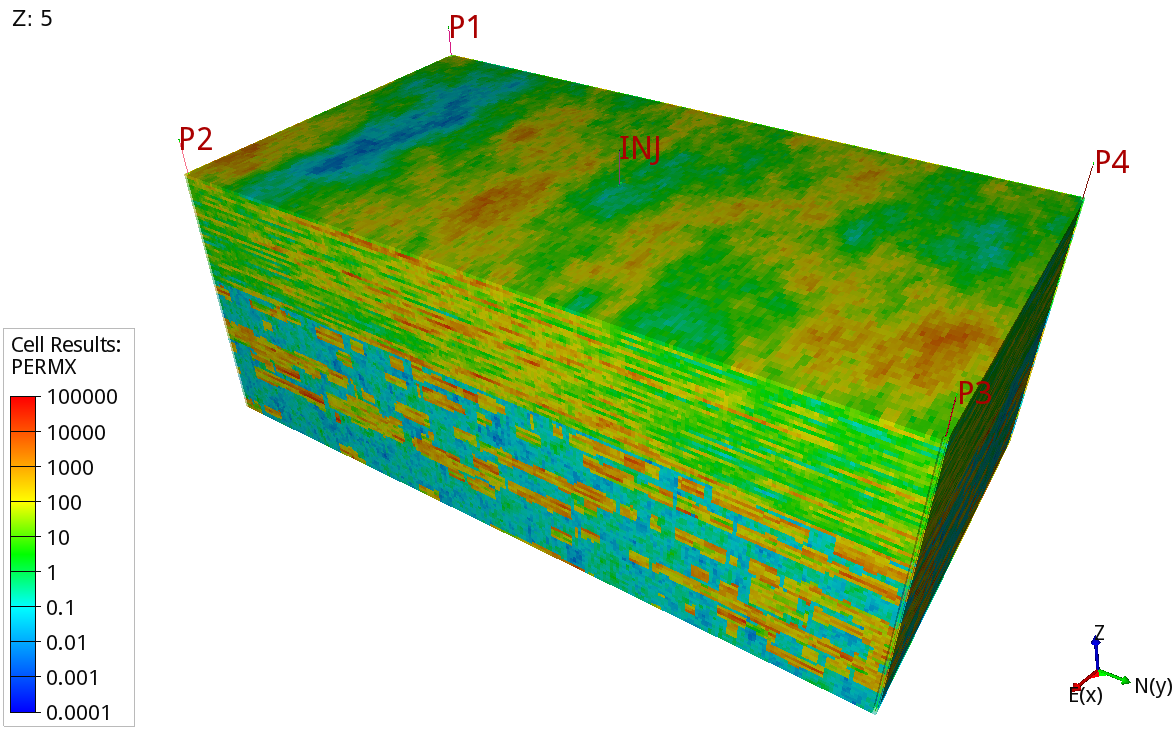}
        \caption{3D view of PERMX.}
    \end{subfigure}
    \hfill
    \begin{subfigure}{0.49\textwidth}
        \includegraphics[width=\linewidth]{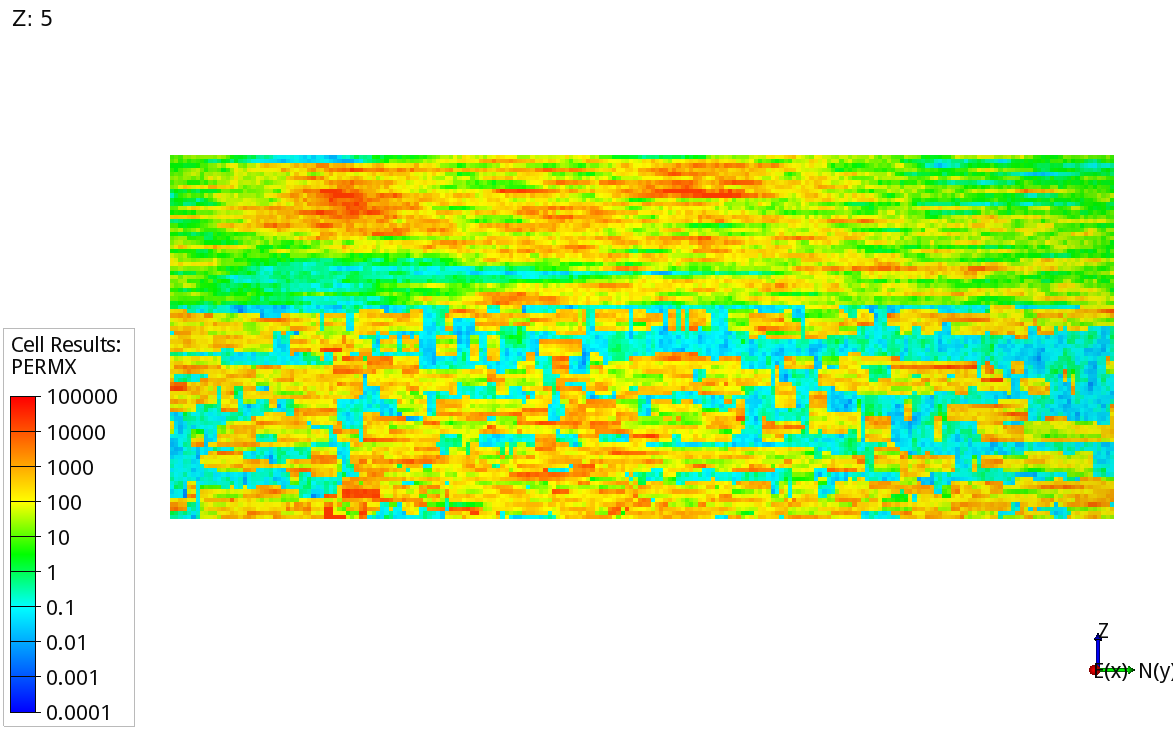}
        \caption{Central $y-z$ slice of PERMX.}
    \end{subfigure}
    \hfill
    \begin{subfigure}{0.49\textwidth}
        \includegraphics[width=\linewidth]{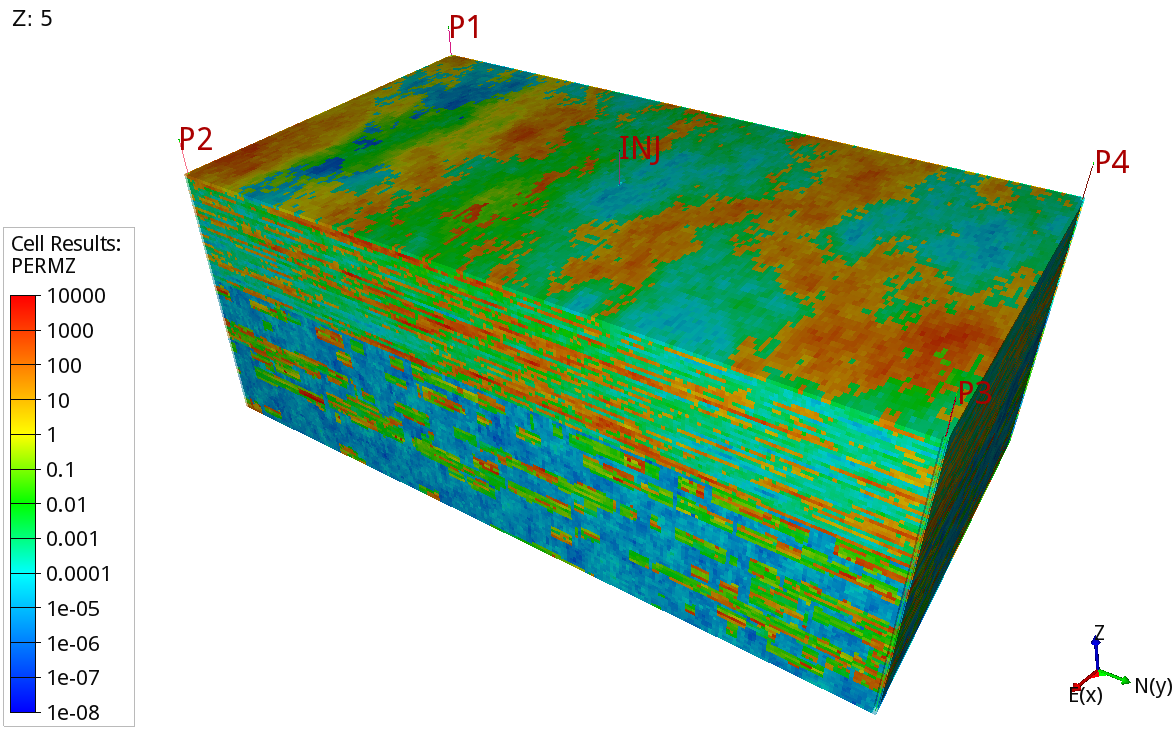}
        \caption{3D view of PERMZ.}
    \end{subfigure}
    \hfill
    \begin{subfigure}{0.49\textwidth}
        \includegraphics[width=\linewidth]{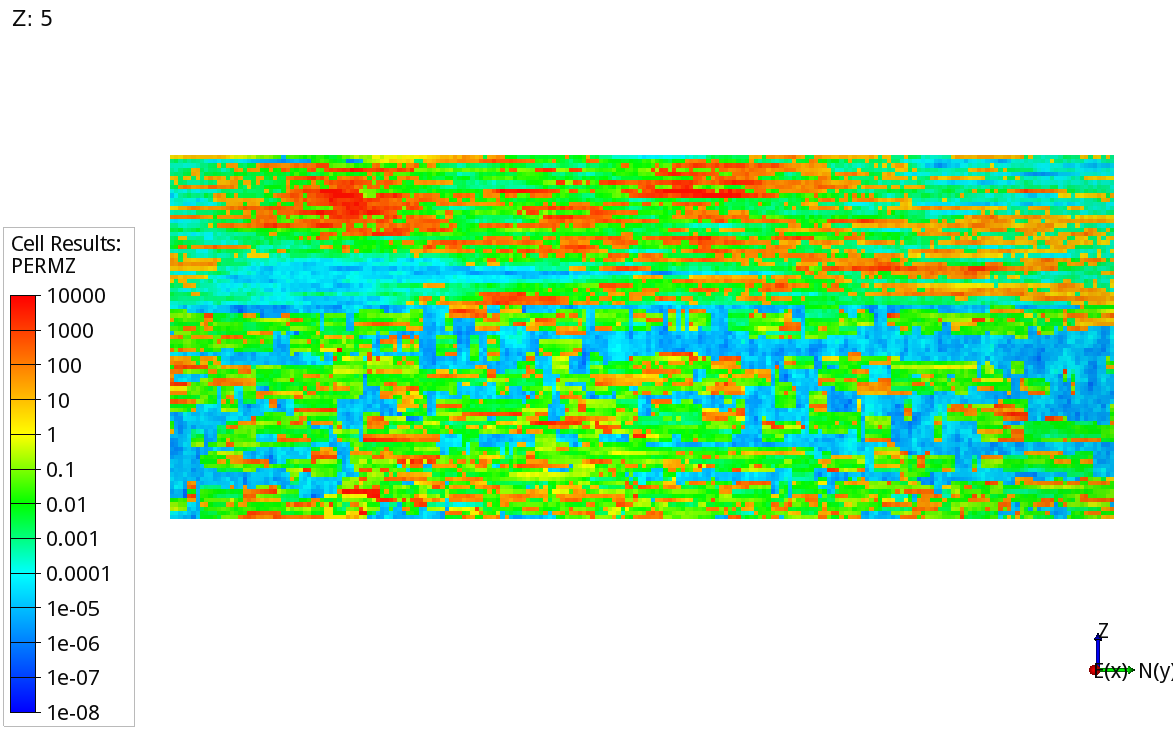}
        \caption{Central $y-z$ slice of PERMZ.}
    \end{subfigure}
    \caption{Correlated permeability field in the $x$ and $z$ direction for the SPE10 Model 2. Notice that, for this benchmark, the logarithmic scale reveals a permeability variation of eight orders of magnitude in the $x$ direction and twelve in the $z$ direction. The $z$-axis (vertical) has been exaggerated by a factor of $5$. Permeability in the $y$ direction is equal to the permeability in the $x$ direction.}
    \label{fig:perm_spe10m2}
\end{figure}

\par 
A surrogate model based on the proposed hybrid algorithm is built to predict the water and oil saturation fields at all times, as well as the oil production rate and water cut curves. For the branch network, inputs are the tensorized injection rate and static spatial fields (cell coordinates, permeability in the $x$ direction, and porosity). Again, for this problem, given that the permeability field is orthotropic, the values in directions $x$ and $y$ are identical; thus, it is redundant to add channels for permeability in all directions. For the $z$ direction, preliminary ablation studies revealed that adding a permeability channel in this direction would not affect the surrogate model accuracy. For the trunk network, time coordinates are defined. For memory reasons, instead of the available $100$ report steps, we use a smaller sample of time steps. In porous media flow simulations, it is customary to allocate a higher temporal resolution at the early stages of the process to capture the rapid dynamics emerging from phase interactions and shock development \cite{badawi2024neural, chandra2024fourier, jiang2024fourier}. Following the approach of Chandra et al. \cite{chandra2024fourier}, we employ logarithmic time sampling to select $34$ snapshots, wherein the time intervals ($\Delta t$) are smaller at the beginning of the simulation and progressively increase toward the end. Training is performed on $20$ simulations, and validation is performed on the remaining $5$ trajectories. Figure \ref{fig:swat_gt_spe10m2} shows the central $y-z$ slice of the water saturation at the last time step of the simulation for one sample of the validation set. For each hybrid scheme tested, we use three hyperparameter combinations that lead to DeepONet models with $127k$, $1M$, and $20M$ parameters, named cases A, B, and C, respectively. Appendix \ref{sec:app_c} shows in more detail how each hybrid architecture is built and how the learnable parameters are distributed between the branch and trunk networks. Our goal is to train the SPE10 Model 2 using a single NVIDIA H100 GPU with $94$ GB of VRAM. Given that the physical model exhibits high-fidelity data with substantial dimensionality, memory consumption is expected to pose a significant challenge. Under these configurations, our objective is to evaluate each hybrid architecture's ability to accommodate an increasing number of trainable parameters efficiently.

\begin{figure}
    \centering
    \includegraphics[width=0.69\linewidth]{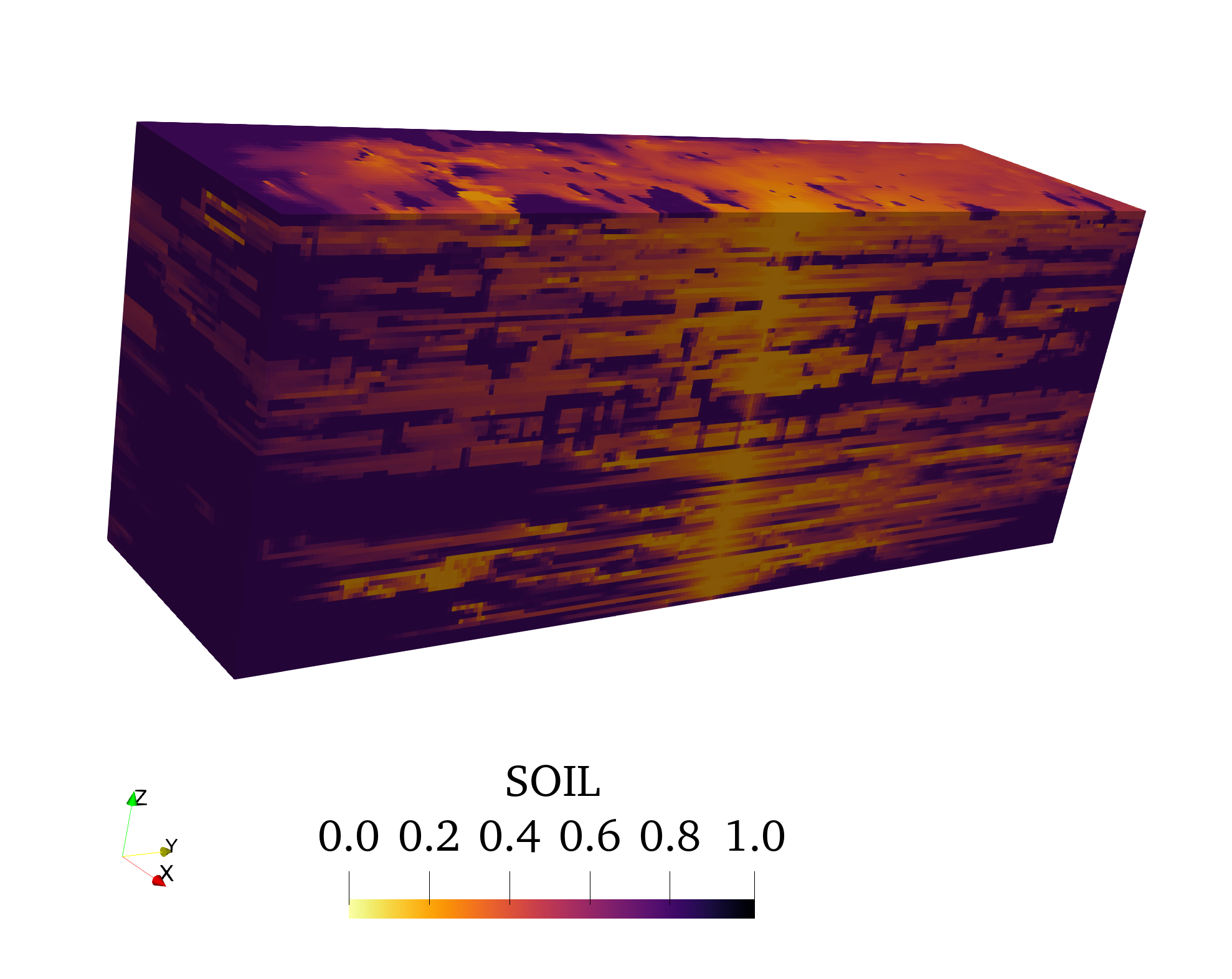}
    \caption{Central slice of the $y-z$ plane of the oil saturation field for the SPE10 Model 2 benchmark. Ground truth solution at $t=1,000$ days. SOIL stands for oil saturation, and the $z$-axis (vertical) has been exaggerated by a factor of $5$.}
    \label{fig:swat_gt_spe10m2}
\end{figure}

\par 
The surrogate models for this example predict four output channels: the water (SWAT) and oil (SOIL) saturation fields, the oil production rates (WOPR), and the well water cuts (WWCT) at each time step, given a certain injection rate. In this study, we refer to water cut as the fraction of water in the total produced fluids at a given well, expressed as the ratio of produced water to total production rate (water and oil). The output channels are very different in nature: the water and oil saturation fields are primary variables obtained in the black-oil model described in Equations (\ref{eq:blackoil} - \ref{eq:closure}) while the oil production rate and well water cut curves are post-processed quantities trained and predicted through the tensorization process described in Figure \ref{fig:bit_mask}. Regarding model training, in Case A, where the models comprise approximately $127$k parameters, all four configurations were successfully trained. In Case B, involving models with around $1$M parameters, the DeepONet (KAN) formulation exceeded the available GPU memory, whereas the DeepONet (MLP) model utilized nearly the entire GPU capacity and still completed training successfully. Finally, for Case C, comprising models with approximately $20$M parameters, training was feasible only for the hybrid configurations that employed the FNO in the branch network. Table \ref{tab:deeponet_times} shows the total time required to train each hybrid scheme. We notice that for Cases A and B, the difference in training time is small, indicating that the dominant computational effort when training $127$k and $1$M parameters networks is data size (the SPE10 Model 2 dimensions are $60 \times 220 \times 85$ in space, $34$ time steps in time with $4$ channels per sample, leading to $152.6$ million entries per sample). In this case, the difference between the learnable parameter count and the model size is of several orders of magnitude. When analyzing Case C, where the hybrid schemes have $20$M parameters, this difference is of one order of magnitude, indicating that the model size now interferes with training time.

\begin{table}
\centering
\setlength{\tabcolsep}{12pt}
\renewcommand{\arraystretch}{1.15}
\caption{Total training times for all cases and all tested hybrid configurations. Inference times for all cases are negligible.}
\begin{tabular}{@{} l l r @{}} 
\toprule
\textbf{Case} & \textbf{Model} & \textbf{Total time (h)} \\
\midrule
A & DeepONet (MLP)        & 21.63 \\
A & DeepONet (KAN)        & 24.85 \\
A & DeepONet (FNO+KAN)    & 20.85 \\
A & DeepONet (FNO+MLP)    & 25.95 \\[2pt]
B & DeepONet (MLP)        & 26.60 \\
B & DeepONet (FNO+KAN)    & 22.65 \\
B & DeepONet (FNO+MLP)    & 22.04 \\[2pt]
C & DeepONet (FNO+KAN)    & 28.52 \\
C & DeepONet (FNO+MLP)    & 36.72 \\
\bottomrule
\end{tabular}
\label{tab:deeponet_times}
\end{table}

\par 
In order to verify the model's ability to predict spatio-temporal fields and post-processed quantities, we assess them separately. First, we assess the generalization capability of the trained models by computing the relative $2$-norm errors between the ground truth and the predicted tensors across all spatial and temporal dimensions for each independent channel on the validation set. Figures \ref{fig:val_rel_error_p1}, \ref{fig:val_rel_error_p2}, and \ref{fig:val_rel_error_p3} illustrate the evolution of the relative error $2$-norms on the validation set for each output quantity across training epochs for Cases A, B, and C. The shaded regions represent the actual relative errors computed at each epoch. To enhance readability, a trend line is superimposed, corresponding to the moving average over the last $15$ epochs, highlighting the central tendency of the training dynamics. We notice that for grid field channels, as shown in the top plots of Figures \ref{fig:val_rel_error_p1}, \ref{fig:val_rel_error_p2}, and \ref{fig:val_rel_error_p3} for Cases A, B, and C, respectively, the curves are practically identical. This is expected, since the fields are highly correlated: the SPE10 Model 2 benchmark is a two-phase problem in which $s_w + s_o = 1$ for each grid cell. To avoid redundancy, we show results for the oil saturation channel. However, when analyzing both curves as we increase the number of parameters across all hybrid combinations, we observe that greater model complexity yields better performance. Relative errors for grid channels for cases A, B, and C are seen in Table \ref{tab:rel_error_spe10m2_grid}. We observe that the relative error decreases as the number of learnable parameters increases for hybrid configurations that incorporate FNO models into their branch networks, suggesting enhanced generalization. In contrast, for the DeepONet (MLP) configuration, the relative $2$-norm error on the validation set in Case B increases, possibly indicating overfitting in this scenario. This observation is corroborated by Fig. \ref{fig:hist_spe10m2}, which presents the absolute error histograms for all tested configurations and cases. For the DeepONet (MLP) model, the histogram exhibits a noticeable rightward shift in the absolute error distribution in Case B. Conversely, for the hybrid models DeepONet (FNO+KAN) and DeepONet (FNO+MLP), increasing the number of learnable parameters drives the distribution mean closer to zero and reduces the spatial error variance. Still in terms of spatial error, Figs \ref{fig:results_spe10m2_volumetric} and \ref{fig:results_spe10m2_slice} show the last time step of each model's prediction for the oil saturation field as well as the absolute error for the whole domain and for a central $y-z$ slice. A close inspection of the figures, especially in Figure~\ref{fig:results_spe10m2_volumetric}, reveals that the highest spatial prediction errors are localized within a few grid cells, reaching magnitudes of approximately $0.5$. These localized discrepancies align with the upper tail of the error distribution observed in the histogram of Figure~\ref{fig:hist_spe10m2}

\par 
For the saturation predictions, we also assess key quantities of interest in developing surrogate models for reservoir engineering applications. Given the relation $s_w + s_o = 1$, the water and oil saturation fields predicted by the hybrid schemes should satisfy this equality to ensure physically consistent solutions. In SciML, one of the primary strategies to enforce such constraints is to include physics-informed terms in the loss function \cite{raissi2019physics}, as is commonly adopted in reservoir engineering studies \cite{badawi2024neural}. In the present work, however, the models are purely data-driven, without any explicit enforcement of conservation laws or PDE-based constraints. Phase balance is implicitly maintained through the closure equations of the black-oil model evaluated by OPM Flow. Figure~\ref{fig:results_spe10m2_mass_2} presents the temporal evolution of the domain-integrated saturation for each phase, as well as their sum. This quantity is computed as $\dfrac{1}{| \Omega}| \int_{\Omega}s_\alpha d\Omega$, where $\alpha = \{w, o, w+o\}$, where $w+o$ is the sum of the cell's oil and water saturations. We observe that all hybrid schemes, depicted as dashed lines with markers, closely follow the ground-truth curves across all cases and model configurations. Together with the similarity observed in the relative error $2$-norms of the SOIL and SWAT channels in Figs.~\ref{fig:val_rel_error_p1}, \ref{fig:val_rel_error_p2}, and \ref{fig:val_rel_error_p3}, these results highlight the hybrid models’ ability to learn the strong correlation between $s_o$ and $s_w$, thereby producing physically consistent and mass-conservative predictions despite the absence of explicit physics constraints.

\begin{table}
    \centering
    \caption{Relative errors for the oil saturation (SOIL) channel for all schemes and cases. DeepONet (KAN) for cases B  and C, and DeepONet (MLP) for case C, did not fit in the NVIDIA H100 VRAM.}
    \vspace{4pt}
    \begin{tabular}{c c c}
    \toprule
         \textbf{Hybrid Scheme} & \textbf{Case} &\textbf{ Relative Error}  \\
          &  & \textbf{for the SOIL channel}  \\
         \midrule
         DeepONet (FNO+KAN) & A & $2.14 \times 10^{-1}$ \\
                          & B & $1.55 \times 10^{-1}$ \\
                          & C & $6.75 \times 10^{-2}$  \\
         \midrule
         DeepONet (FNO+MLP) & A & $2.13 \times 10^{-1}$ \\
                          & B & $1.51 \times 10^{-1}$ \\
                          & C & $6.17 \times 10^{-2}$ \\
         \midrule
         DeepONet (KAN)     & A & $1.61 \times 10^{-1}$ \\
                          & B &  - \\
                          & C & - \\
         \midrule
         DeepONet (MLP) & A & $2.89 \times 10^{-1}$ \\
                          & B & $2.51 \times 10^{-1}$ \\
                          & C & -\\
        \bottomrule
    \end{tabular}
    \label{tab:rel_error_spe10m2_grid}
\end{table}

\begin{figure}
    \centering
    \includegraphics[width=0.95\linewidth]{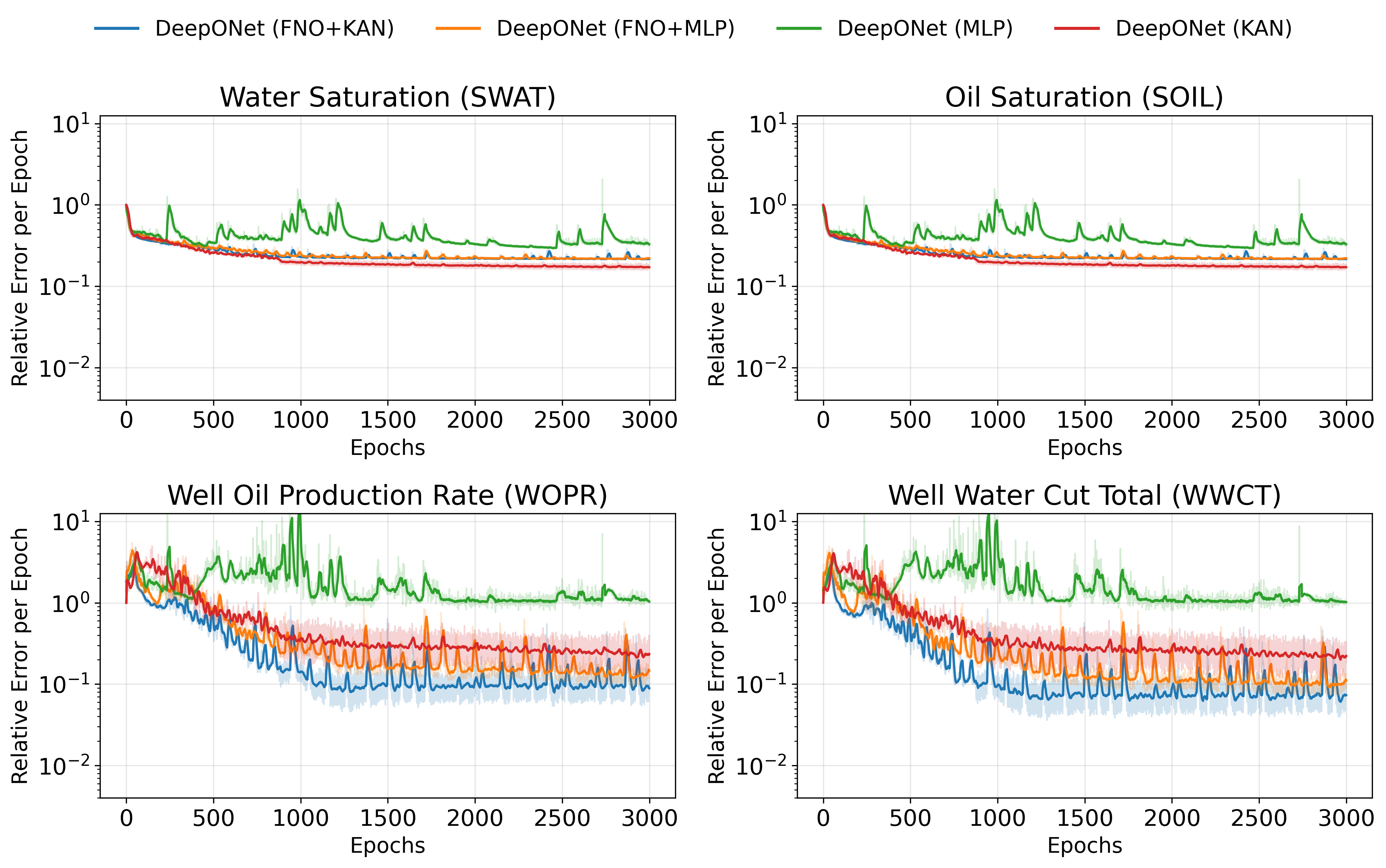}
    \caption{Evolution of the relative error $2$-norm on the validation set for the SPE10 Model 2 using surrogate models with approximately $127$k parameters. The semi-transparent regions represent epoch-wise relative errors, while the solid lines correspond to the moving averages computed over the last $15$ epochs, illustrating the overall training trend.}
    \label{fig:val_rel_error_p1}
\end{figure}

\begin{figure}
    \centering
    \includegraphics[width=0.95\linewidth]{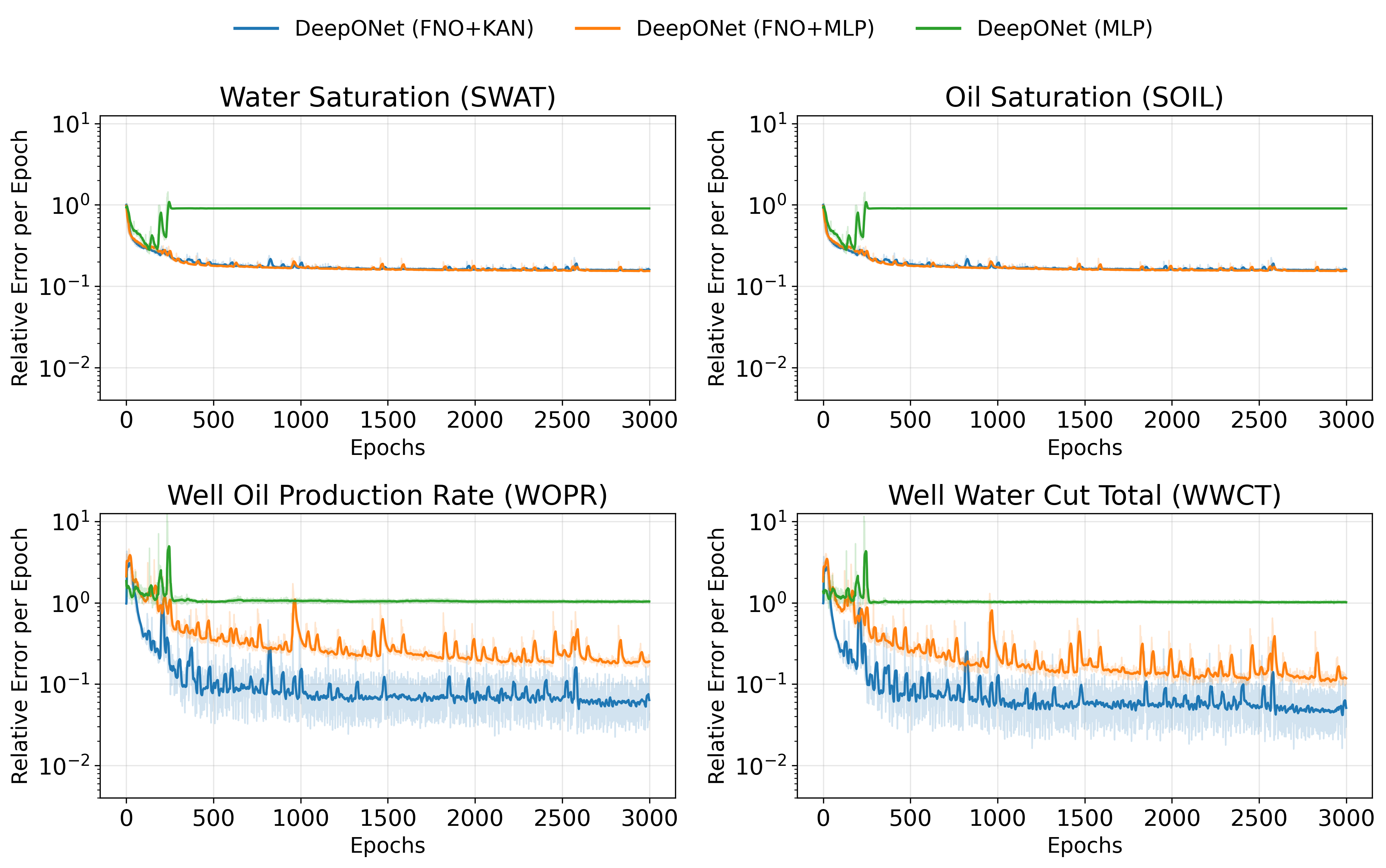}
    \caption{Evolution of the relative error $2$-norm on the validation set for the SPE10 Model 2 using surrogate models with approximately $1$M parameters. The semi-transparent regions represent epoch-wise relative errors, while the solid lines correspond to the moving averages computed over the last $15$ epochs, illustrating the overall training trend.}
    \label{fig:val_rel_error_p2}
\end{figure}

\begin{figure}
    \centering
    \includegraphics[width=0.95\linewidth]{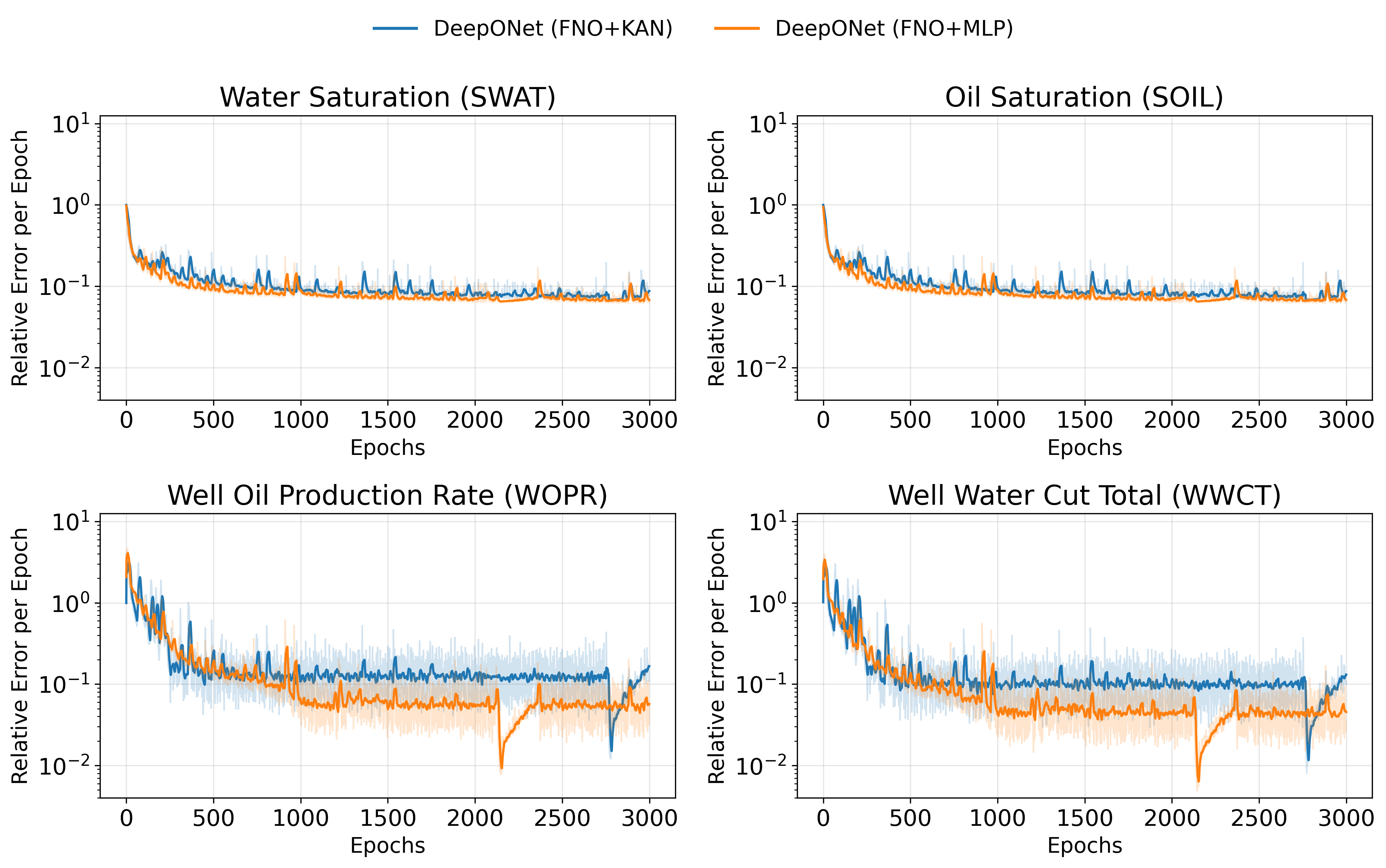}
    \caption{Evolution of the relative error $2$-norm on the validation set for the SPE10 Model 2 using surrogate models with approximately $20$M parameters. The semi-transparent regions represent epoch-wise relative error, while the solid lines correspond to the moving averages computed over the last $15$ epochs, illustrating the overall training trend.}
    \label{fig:val_rel_error_p3}
\end{figure}

\begin{figure}
    \centering
    \includegraphics[width=0.95\linewidth]{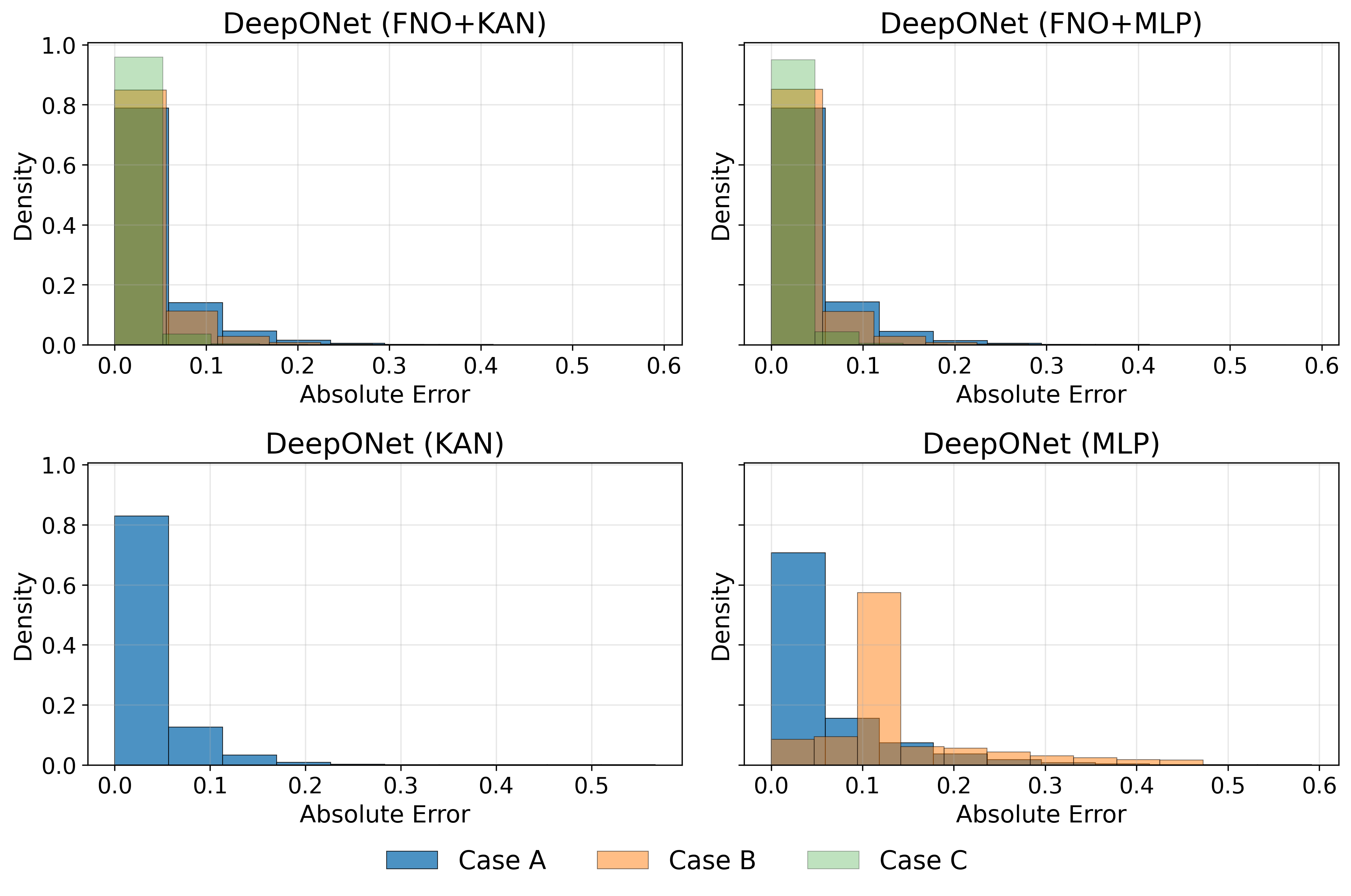}
    \caption{Histogram of the absolute errors in the oil saturation field at the final time step for one validation sample. Increasing the model capacity in the hybrid configurations where FNO was employed in the network branch reduces the spread of the error distribution. It shifts its mean toward zero, indicating improved generalization. This trend is not observed in the DeepONet (MLP) case, which suggests overfitting.}
    \label{fig:hist_spe10m2}
\end{figure}

\begin{figure}
    \centering

    \begin{subfigure}{0.6\textwidth}
        \centering
        \includegraphics[width=\linewidth, trim=0 0 0 240, clip]{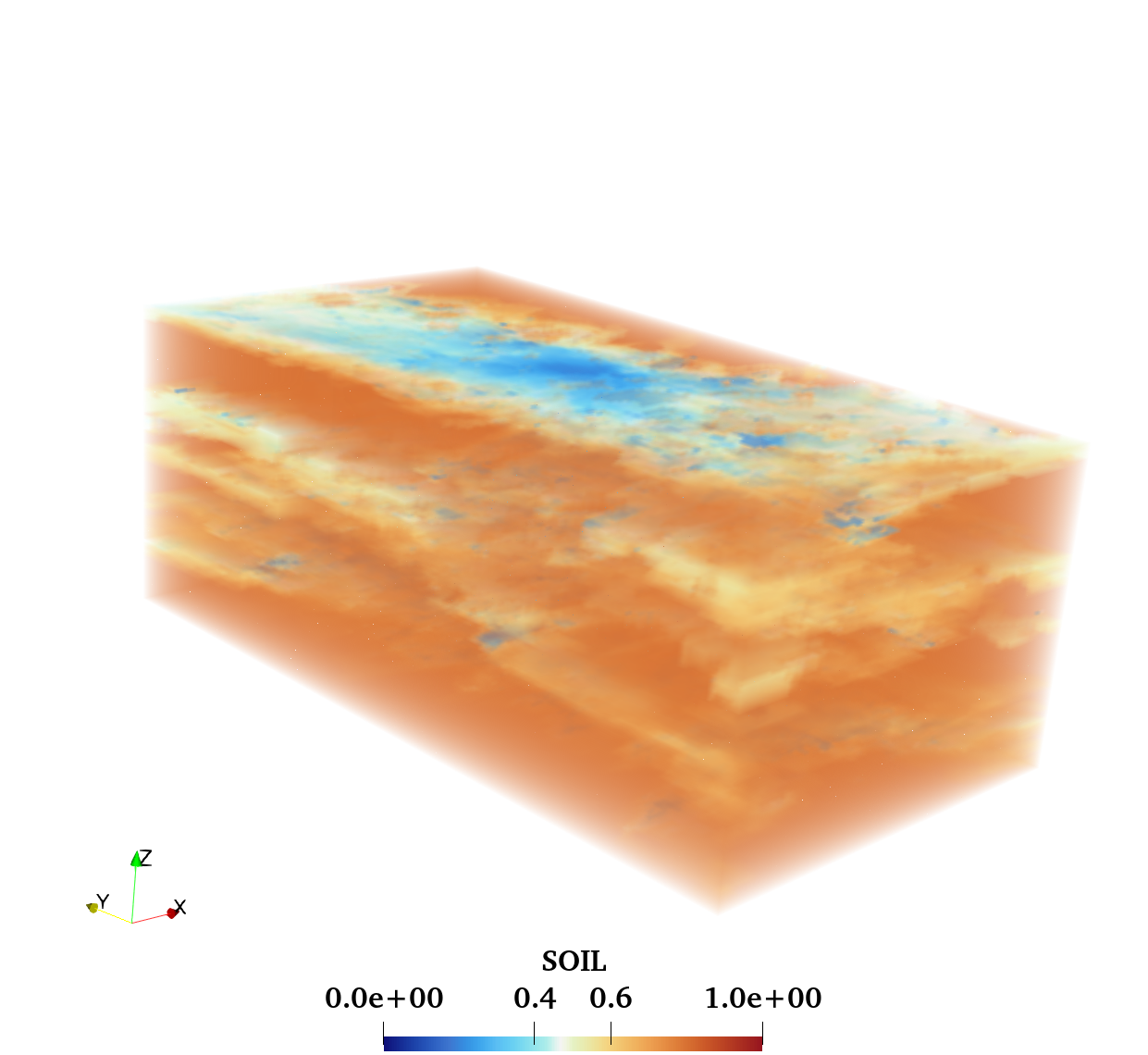}
        \caption{Ground truth}
    \end{subfigure}

    \vspace{.1em} 

    \begin{subfigure}{0.48\textwidth}
        \centering
        \includegraphics[width=\linewidth, trim=0 0 0 240, clip]{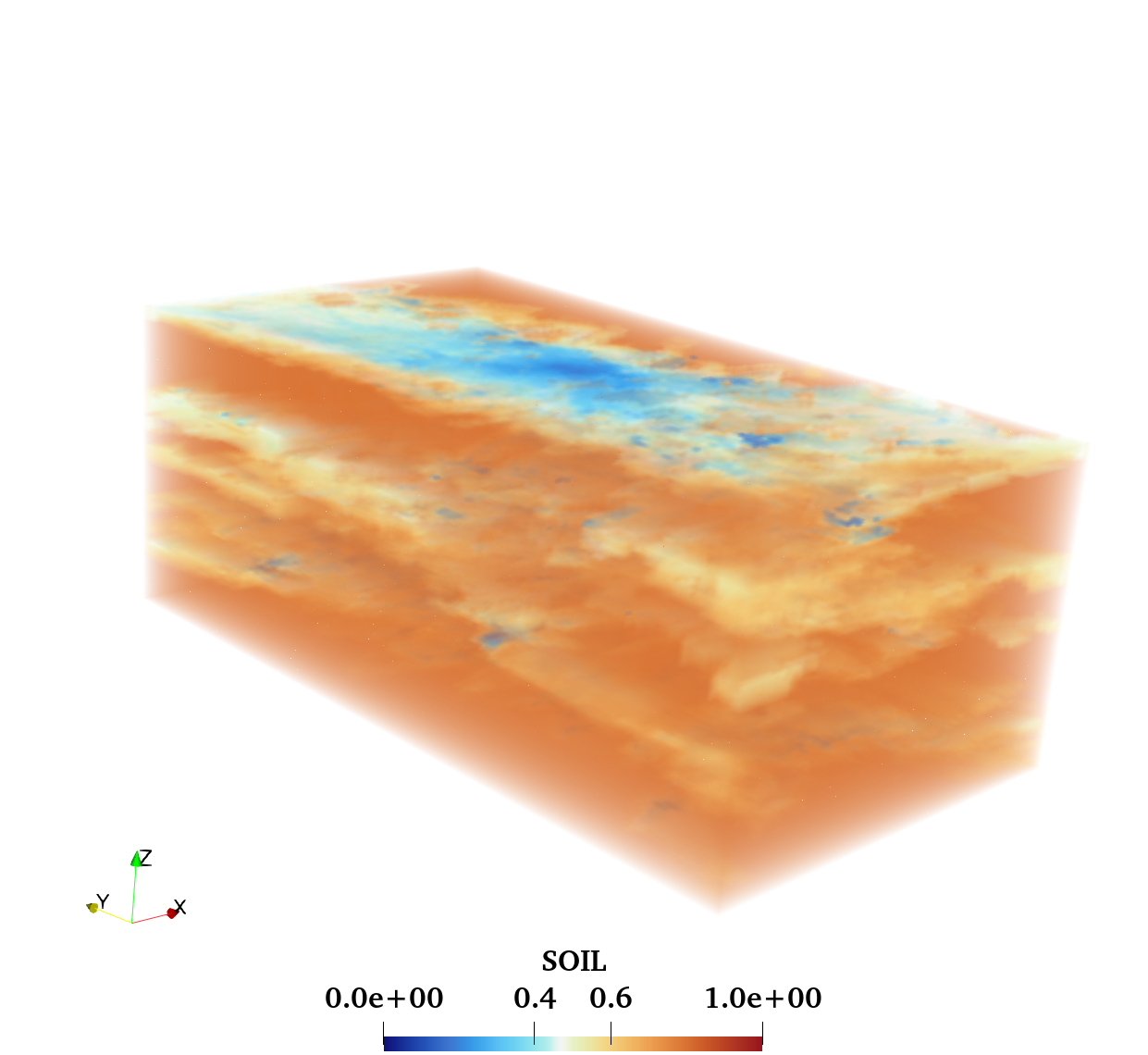}
        \caption{Prediction - DeepONet (FNO+KAN).}
    \end{subfigure}
    \hfill
    \begin{subfigure}{0.48\textwidth}
        \centering
        \includegraphics[width=\linewidth, trim=0 0 0 240, clip]{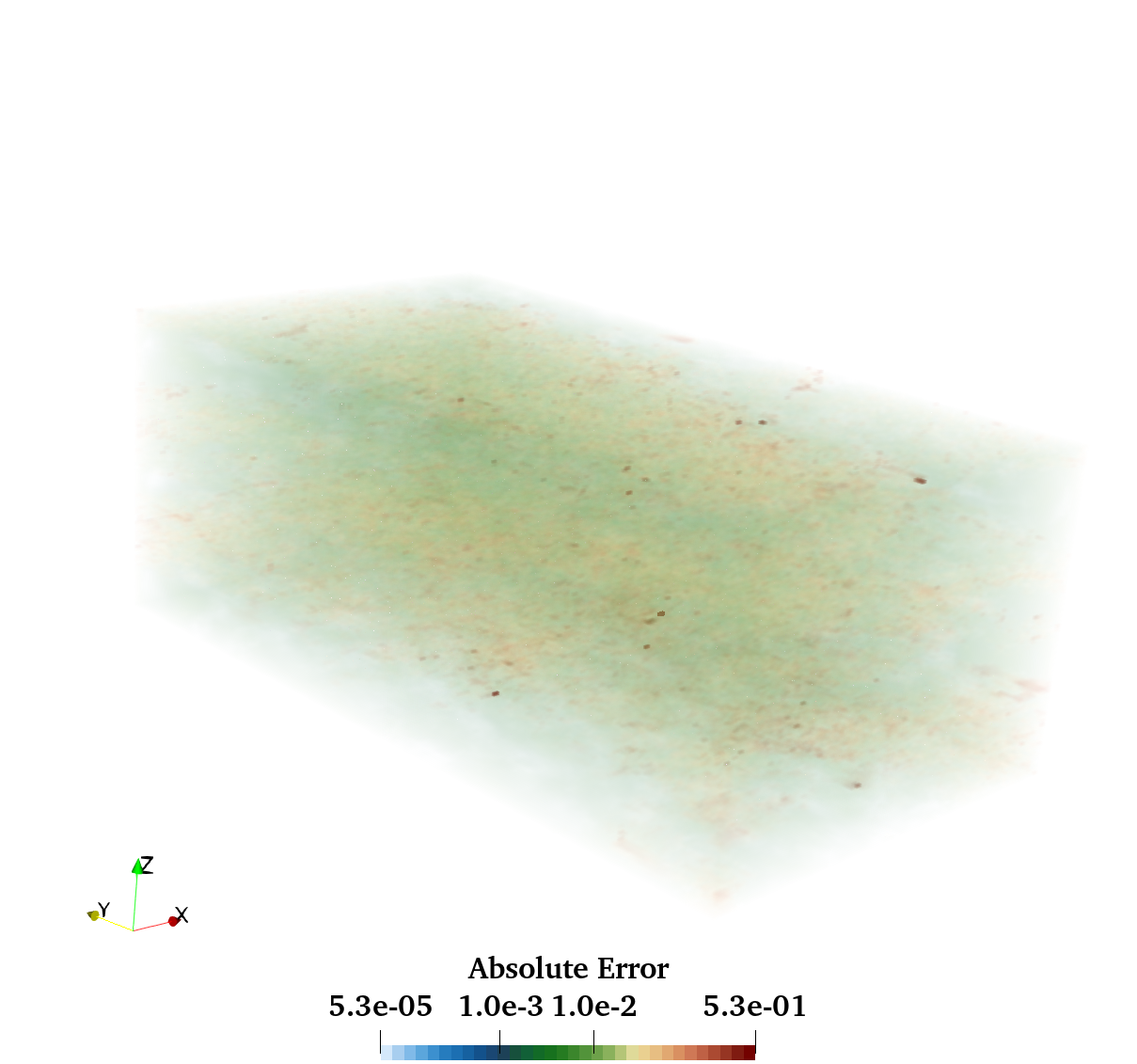}
        \caption{Error - DeepONet (FNO+KAN).}
    \end{subfigure}


    \begin{subfigure}{0.48\textwidth}
        \centering
        \includegraphics[width=\linewidth, trim=0 0 0 240, clip]{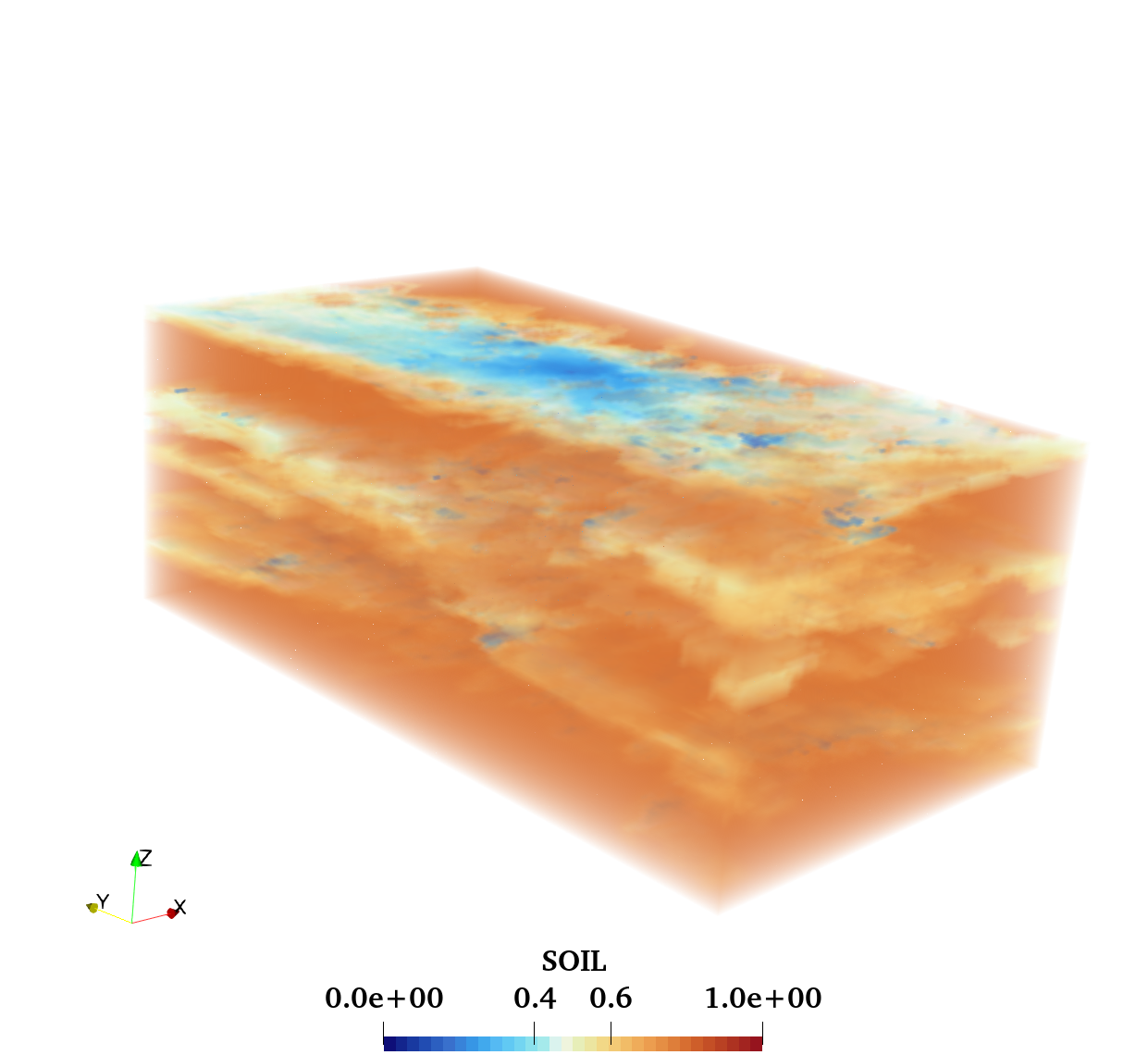}
        \caption{Prediction - DeepONet (FNO+MLP).}
    \end{subfigure}
    \hfill
    \begin{subfigure}{0.48\textwidth}
        \centering
        \includegraphics[width=\linewidth, trim=0 0 0 240, clip]{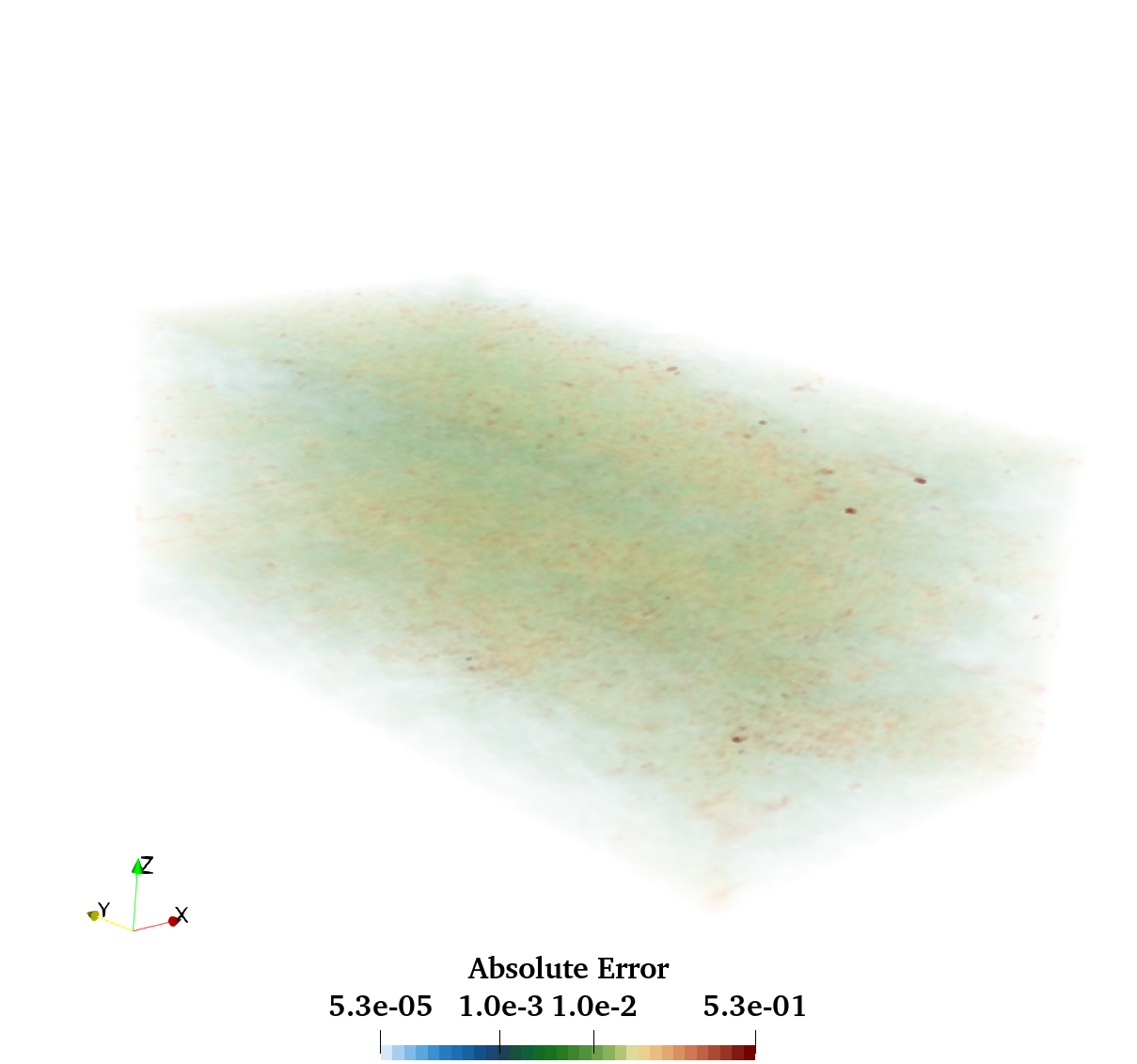}
        \caption{Error - DeepONet (FNO+MLP).}
    \end{subfigure}

    \caption{Prediction and absolute error for the proposed hybrid models at $t = 1{,}000$ days. SOIL stands for oil saturation.
    The $z$-axis (vertical) has been exaggerated by a factor of 5. We add transparency to the smaller error values in (c) and (e) so that the absolute error is visible in the interior.}
    \label{fig:results_spe10m2_volumetric}
\end{figure}


    
    

\begin{figure}
    \begin{subfigure}{0.99\textwidth}
    \includegraphics[width=\linewidth]{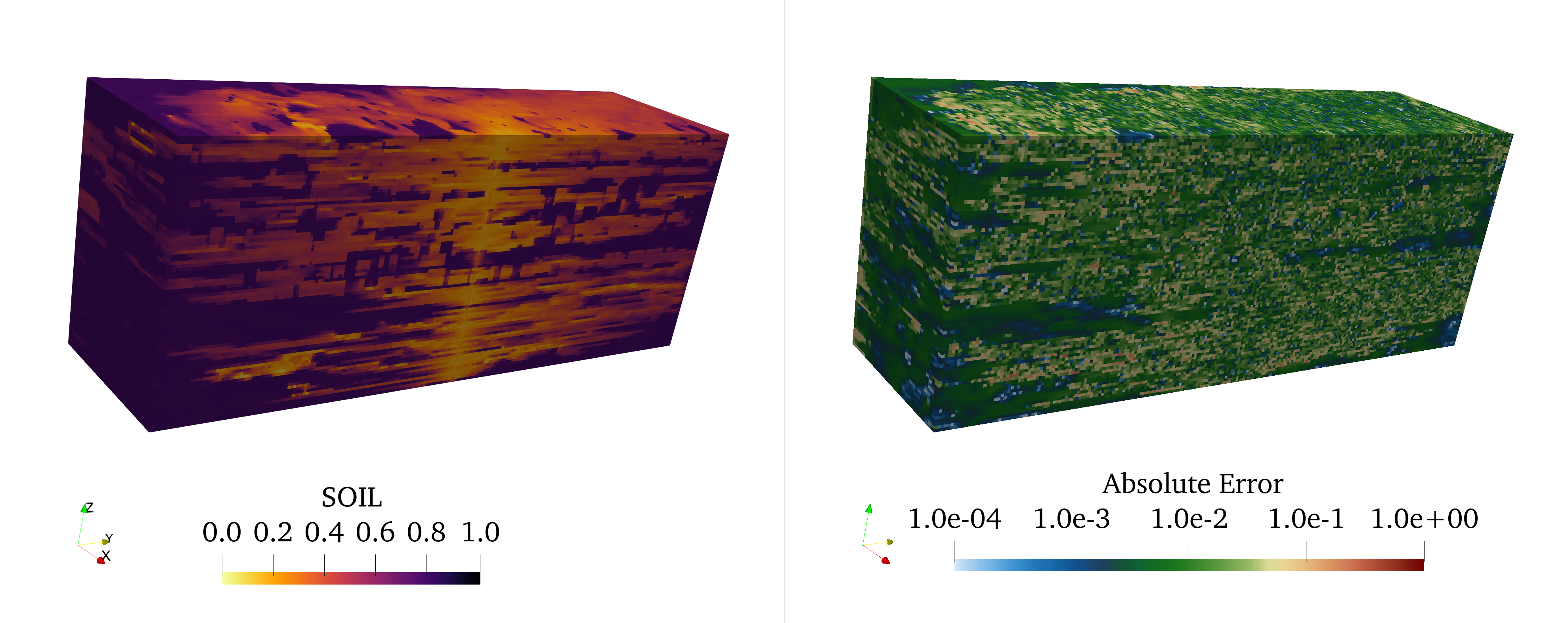}
        \caption{DeepONet (FNO+KAN) for Case C.}
    \end{subfigure}
    \hfill
    \begin{subfigure}{0.99\textwidth}
    \includegraphics[width=\linewidth]{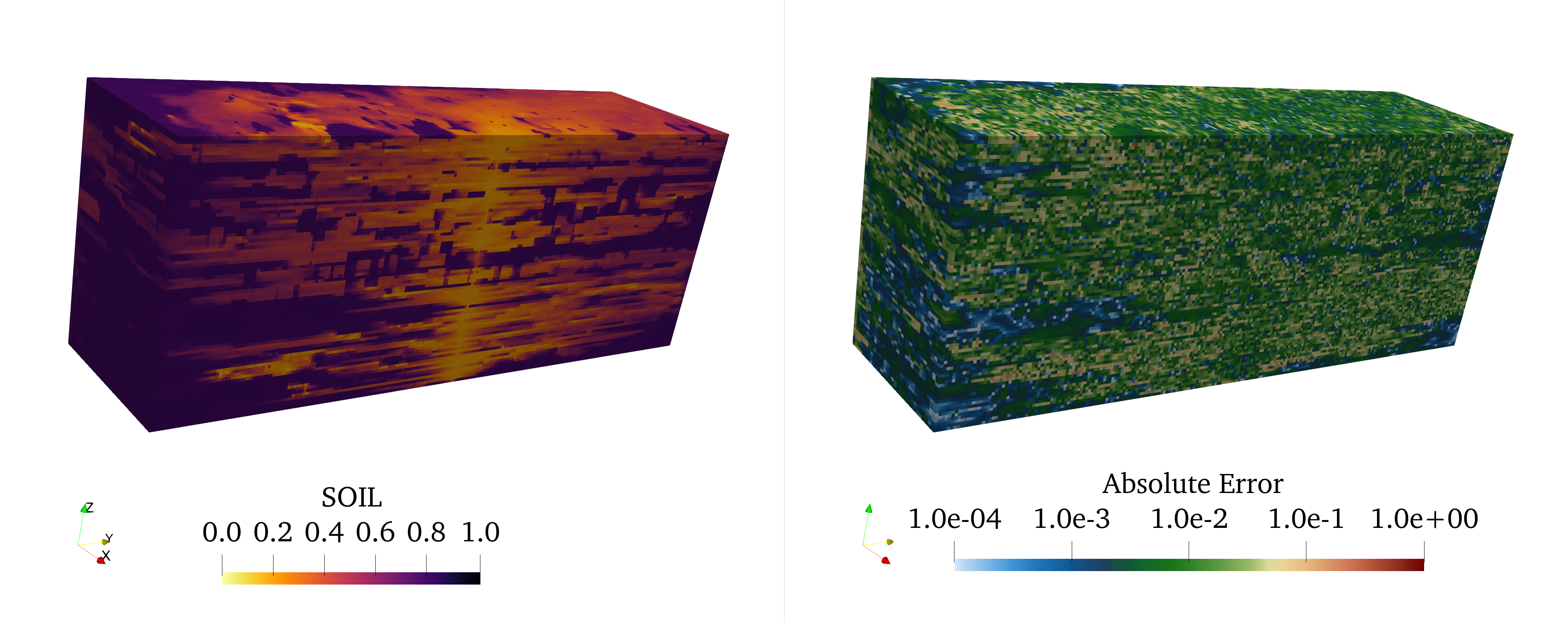}
        \caption{DeepONet (FNO+MLP) for Case C.}
    \end{subfigure}
    \centering
    \caption{Prediction and absolute error for the proposed hybrid models at $t=1,000$ days. SOIL stands for oil saturation. The $z$-axis (vertical) has been exaggerated by a factor of $5$.}
    \label{fig:results_spe10m2_slice}
\end{figure}


\begin{figure}
    \centering
    \includegraphics[width=\linewidth]{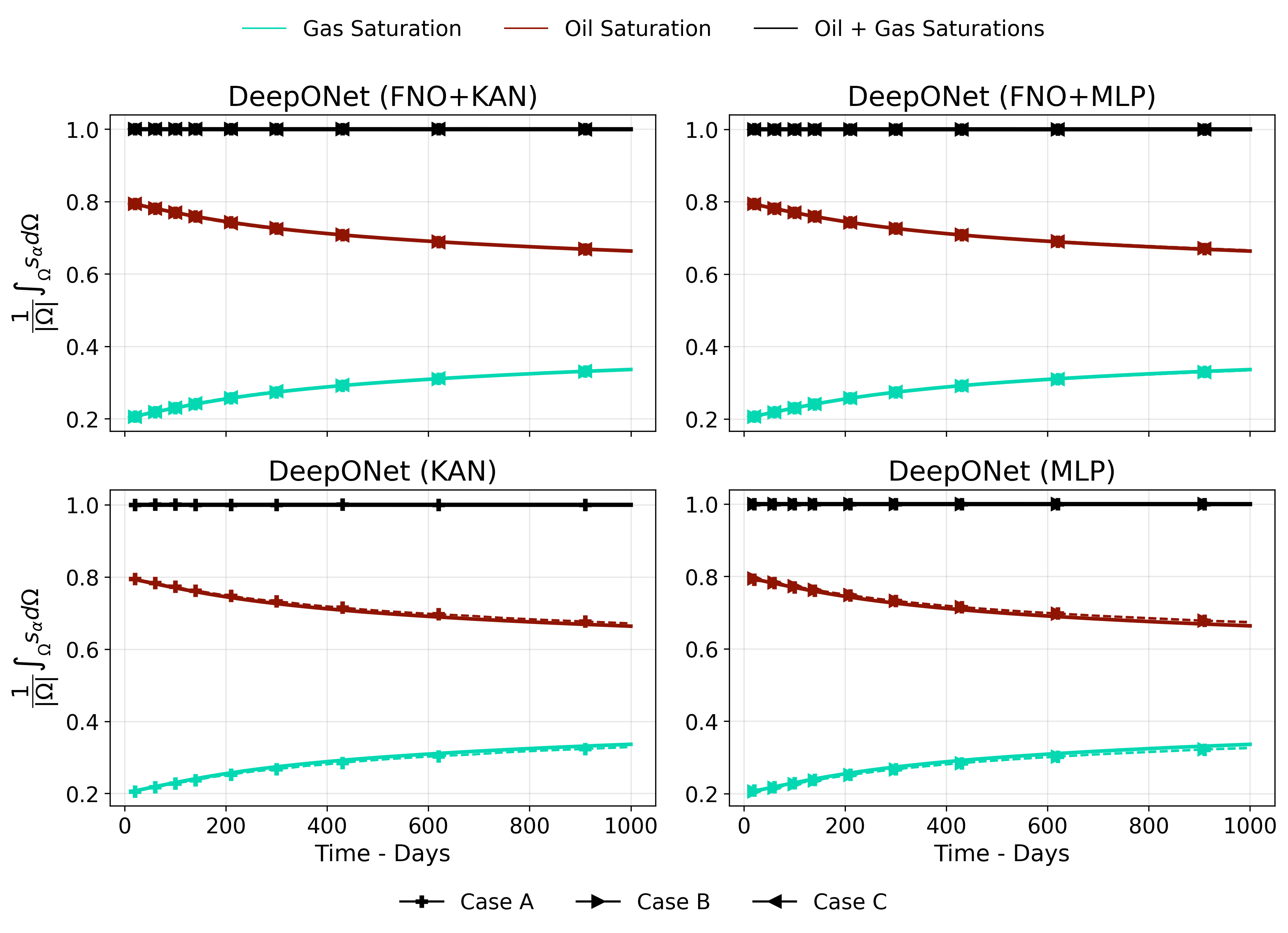}
    \caption{Phase balances conservation assessment for all hybrid configurations using the SPE10 Model 2 dataset. Dashed lines denote the surrogate model predictions, while solid bold lines represent the ground-truth results obtained from OPM Flow. Values are assessed by integrating each saturation field at each predicted time step and normalizing by the domain volume.}
    \label{fig:results_spe10m2_mass_2}
\end{figure}

\par
Regarding the well oil production rate and water cut predictions, the curves shown in Figs.~\ref{fig:val_rel_error_p1}, \ref{fig:val_rel_error_p2}, and \ref{fig:val_rel_error_p3} display larger oscillations during training, represented by the semi-transparent traces. This behavior arises from the sparsity of the ground-truth fields, which contain nonzero values only at the well locations. Consequently, any spurious nonzero predictions outside these cells can produce relatively large deviations when computing the relative error in the $2$-norm. Across all configurations, hybrid schemes incorporating FNOs in the branch network achieve lower relative errors compared to the remaining combinations. Consistent with the grid-field analyses, Table~\ref{tab:rel_error_spe10m2_scalar} reports the relative error $2$-norms for the tensorized scalar channels. For the DeepONet (FNO+KAN) configuration, the well oil production rate (WOPR) error decreases with increasing model capacity, demonstrating improved predictive accuracy. However, the corresponding improvement for the water cut (WWCT) channel is more modest. In contrast, the DeepONet (FNO+MLP) configuration exhibits substantial error reductions for both WOPR and WWCT as the number of learnable parameters increases from $1$M to $20$M, particularly when comparing the results of Case~A and Case~B. For the standard DeepONet (MLP), increasing the model complexity does not yield meaningful performance gains. The analysis presented in Table~\ref{tab:rel_error_spe10m2_scalar} is influenced by the sparsity of the tensors, which may lead to spurious error amplification. To better assess the models’ performance, we extract the post-processed scalar quantities directly from the tensor channels and compare their temporal behavior. Figure~\ref{fig:wopr_wwct_spe10m2} depicts the oil production rate and water cut curves for all production wells. The hybrid models with FNOs in the branch network show strong agreement with the reference WOPR curves, while the DeepONet (KAN) captures the general dynamics but deviates more noticeably from the ground-truth values. The DeepONet (MLP) configuration failed to capture both the WOPR and WWCT trends, producing zero predictions throughout the entire simulation period.

\begin{table}
    \centering
    \caption{Relative errors for the scalar quantities channels for all schemes and cases.DeepONet (KAN) for cases B  and C, and DeepONet (MLP) for case C, did not fit in the NVIDIA H100 VRAM.}
    \vspace{4pt}
    \begin{tabular}{c c c c}
        \toprule
         \textbf{Hybrid Scheme} & \textbf{Case}  &  \textbf{Relative Error}      & \textbf{Relative Error}           \\
                       &       & \textbf{for the WOPR channel} &   \textbf{for the WWCT channel}  \\
         \midrule
         DeepONet (FNO+KAN) & A & $4.43 \times 10^{-2}$ & $3.25 \times 10^{-2}$ \\
                          & B & $2.23 \times 10^{-2}$ & $1.59 \times 10^{-2}$\\
                          & C & $7.68 \times 10^{-3}$ &  $1.20 \times 10^{-2}$ \\
        \midrule
         DeepONet (FNO+MLP) & A & $1.69 \times 10^{-1}$ & $1.05 \times 10^{-1}$\\
                          & B & $1.61 \times 10^{-1}$ & $9.04 \times 10^{-2}$\\
                          & C & $4.87 \times 10^{-3}$ &  $8.04 \times 10^{-3}$ \\
         \midrule
         DeepONet (KAN)     & A & $1.28 \times 10^{-1}$ & $1.15 \times 10^{-1}$\\
                          & B &  -  &  - \\
                          & C & - & - \\
         \midrule
         DeepONet (MLP) & A & $7.61 \times 10^{-1}$ & $7.76 \times 10^{-1}$\\
                          & B & $8.69 \times 10^{-1}$ & $8.58 \times 10^{-1}$\\
                          & C & - & - \\
        \bottomrule
    \end{tabular}
    \label{tab:rel_error_spe10m2_scalar}
\end{table}

\begin{figure}
    \centering
    \begin{subfigure}{0.85\textwidth}
        \includegraphics[width=0.95\linewidth]{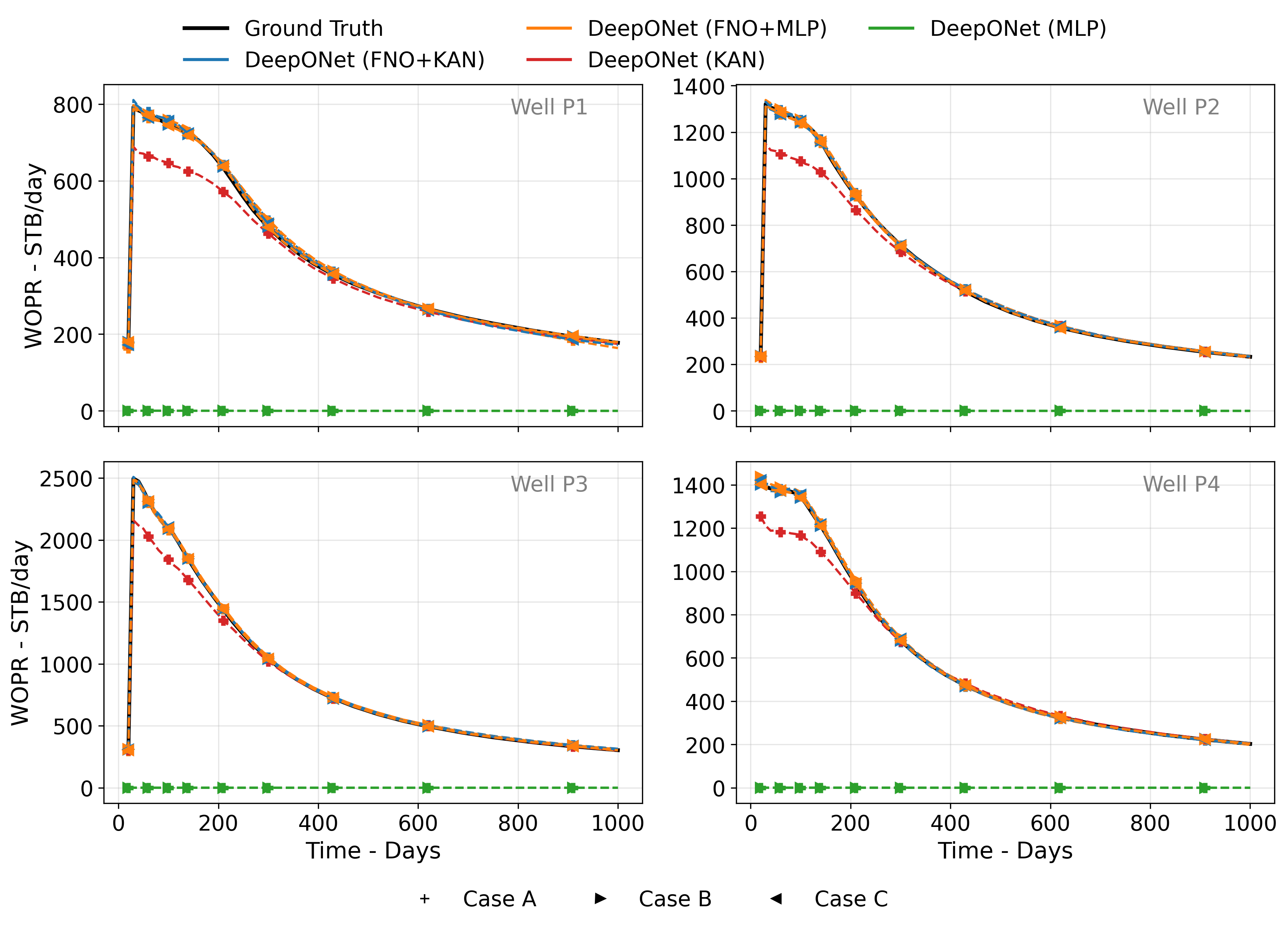}
        \caption{Predictions on WOPR curves for the four wells.}
        \label{fig:preds_wopr_spe10m2}
    \end{subfigure}
    \hfill
    \begin{subfigure}{0.85\textwidth}
        \includegraphics[width=0.95\linewidth]{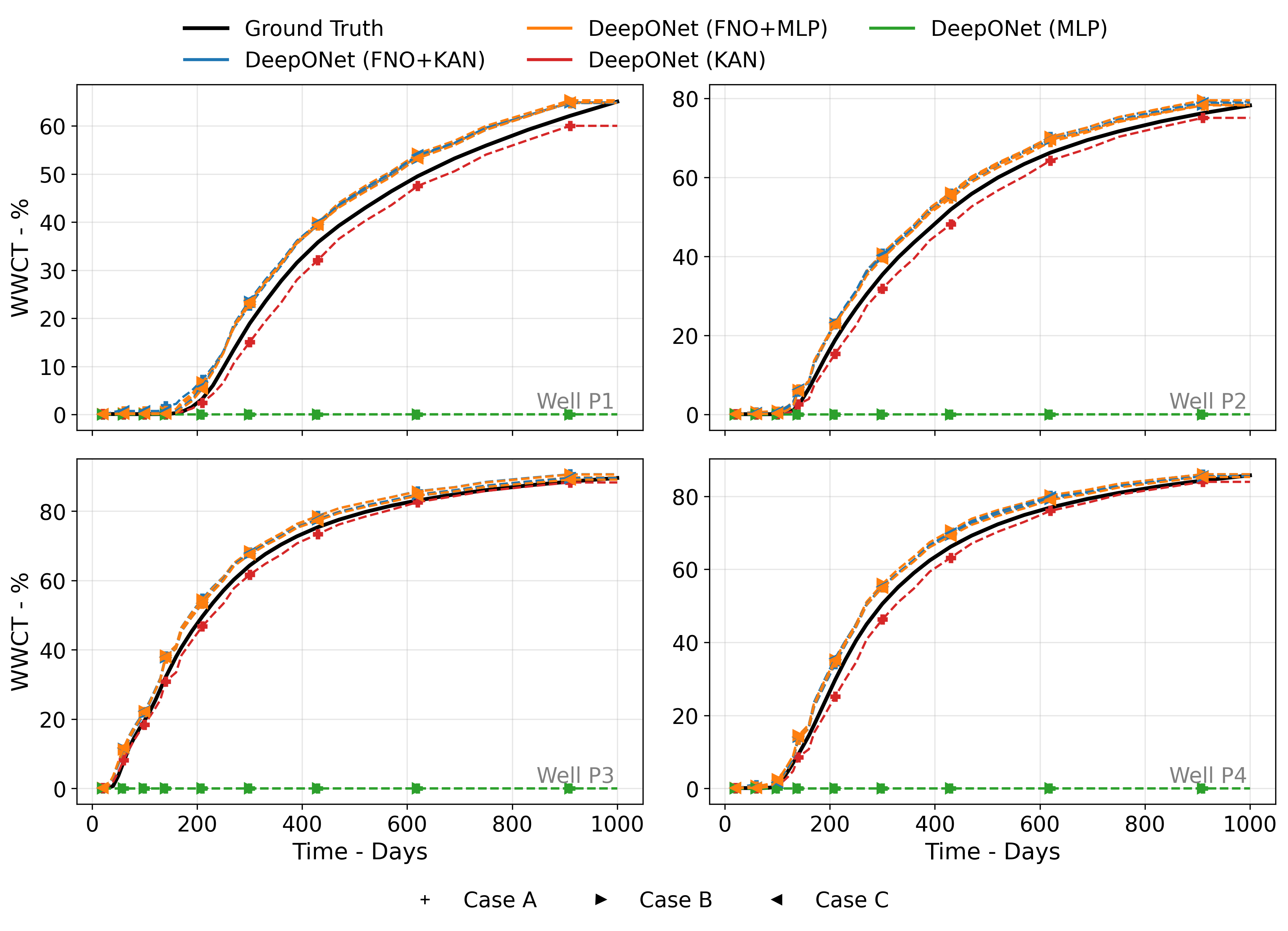}
        \caption{Predictions on WWCT curves for the four wells.}
        \label{fig:preds_wwct_spe10m2}
    \end{subfigure}
    \caption{a) Well Oil Production Rate (WOPR) and b) Well Water Cut (WWCT) curves for all production wells of the SPE10 Model 2. Hybrid configurations incorporating FNOs in the branch network show strong agreement with the reference WOPR and WWCT curves. In contrast, the DeepONet (KAN) captures the general dynamics but exhibits noticeable deviations from the ground truth. The DeepONet (MLP) configuration fails to reproduce either variable, yielding zero predictions throughout the entire simulation period.}
    \label{fig:wopr_wwct_spe10m2}
\end{figure}




\section{Conclusions}
\label{sec:conclusions}

\par
Neural Operators, particularly DeepONets and Fourier Neural Operators (FNOs), have gained increasing attention as surrogate models for porous media flow due to their ability to generalize across parameterized PDE solutions. Their generalization capability is essential for predicting unseen scenarios and reducing reliance on computationally expensive numerical simulations, which are often impractical for many-query tasks. Nevertheless, these architectures in their standard form face well-known challenges, including high memory consumption and difficulties in handling temporal dependencies. To address these limitations, this study explored hybrid neural operator schemes based on the DeepONet framework, integrating complementary architectures, including FNOs, Multi-Layer Perceptrons (MLPs), and Kolmogorov–Arnold Networks (KANs). We theoretically examined the equivalence between MLPs and KANs under specific assumptions as detailed in the appendix of this manuscript. Then, we performed a series of numerical validations of increasing complexity to evaluate the generalization capability of the proposed strategy across distinct porous media applications, ranging from the steady 2D Darcy flow to the $10$th Comparative Solution Project from the Society of Petroleum Engineers (SPE10). We chose four DeepONet configurations, two of which included FNO in the branch networks. The other possible branch and trunk networks were assessed with either KAN or MLP. All models were trained on a NVIDIA H100 GPU with $94$ GB of VRAM. First, we start validating our strategies for the 2D steady Darcy flow. All hybrid configurations demonstrated good performance, yielding comparable spatial approximations and relative errors on the order of $10^{-2}$ for predicting the pressure solution for different permeability inputs. Regarding spatial errors, the distribution across the analyzed samples in the test set showed very similar behavior across all combinations. 
\par 
Next, we assessed the SPE10 benchmark using the black-oil formulation for multiphase flow in porous media, with datasets generated using OPM Flow. For SPE10 Model 1, a smaller 2D case with isotropic yet highly heterogeneous permeability fields, we tested the hybrid models’ ability to generalize both primary variables and post-processed quantities across different injection rates. Primary variables (grid-based fields) were modeled directly, while post-processed quantities, originally represented as time-dependent scalars, were tensorized to match the wells’ spatial locations, thereby acquiring spatial meaning. We observed that incorporating an FNO into the branch network reduced spatial errors in the predicted grid fields compared with hybrid configurations lacking the FNO. For post-processed quantities, all models achieved similarly accurate predictions.

\par
Similar trends were observed for the SPE10 Model 2, a larger 3D case with over $10^6$ grid cells. DeepONet schemes that incorporate FNOs into the branch network exhibit superior generalization when predicting spatially coherent fields, such as water and oil saturations. The hybrid models were evaluated on their ability to predict both the saturation fields and post-processed quantities, including oil production rates and well water cuts, for injection rates not seen during training. Three model sizes were tested, comprising approximately $127$k, $1$M, and $20$M trainable parameters (referred to as Cases A, B, and C, respectively). Hybrid schemes employing FNOs in the branch network successfully scaled up to $20$M parameters, whereas architectures with MLP or KAN branches exceeded hardware memory limitations. In terms of efficiency, previous studies have reported that FNO-based surrogate models for reservoirs with $O(10^6)$ grid cells typically require $O(10^7$–$10^9)$ parameters. In contrast, our hybrid models, which decouple spatio-temporal structures, achieved comparable accuracy with only $O(10^5$–$10^7)$ parameters. This reduction in model complexity directly translates into significant savings in memory usage and computational cost compared with conventional Neural Operator architectures. It is interesting to note that all hybrid configurations preserved the phase balance relation $s_o + s_w = 1$, demonstrating physically consistent predictions. Furthermore, when analyzing absolute spatial error, we observed that larger hybrid models with FNO branches consistently yielded lower spatial errors. When examining post-processed quantities extracted from the output tensors, architectures featuring FNO-based branch networks again outperformed the remaining hybrid schemes, providing more accurate and smoother temporal profiles for both WOPR and WWCT.


\section*{Code availability}

The computational codes used in the simulations will be publicly available if the article is accepted for publication.

\section*{Acknowledgments}\label{ack}

This study was partially financed by the Coordenação de Aperfeiçoamento de Pessoal de Nível Superior-Brasil (CAPES)—Finance Code 001. It is also partially supported by CNPq, Brazilian Petroleum Agency, and ExxonMobil.

\section*{Credit Authorship Contribution Statement}
Ezequiel S. Santos: Conceptualization, Methodology, Software, Validation, Formal Analysis, Visualization, Writing – original draft, Writing – review \& editing. Gabriel F. Barros: Conceptualization, Methodology, Software, Validation, Formal Analysis, Visualization, Writing – original draft, Writing – review \& editing. Amanda C. N. Oliveira: Conceptualization, Methodology, Formal Analysis, Visualization, Writing – original draft, Writing – review \& editing. Romulo M. Silva: Conceptualization, Methodology, Formal Analysis, Visualization, Writing – original draft, Writing – review \& editing. Rodolfo S. M. Freitas: Conceptualization, Methodology, Formal Analysis, Visualization, Writing – original draft, Writing – review \& editing. Dakshina M. Valiveti: Conceptualization, Methodology, Validation, Formal Analysis, Visualization, Writing – original draft, Writing – review \& editing, Resources, Supervision. Xiao-Hui Wu: Conceptualization, Methodology, Validation, Formal Analysis, Visualization, Writing – original draft, Writing – review \& editing, Resources, Supervision. Fernando A. Rochinha: Conceptualization, Methodology, Validation, Formal Analysis, Visualization, Writing – original draft, Writing – review \& editing, Resources, Supervision, Funding Acquisition. Alvaro L. G. A. Coutinho: Conceptualization, Methodology, Validation, Formal Analysis, Visualization, Writing – original draft, Writing – review \& editing, Resources, Supervision, Funding Acquisition.

\section*{Conflict of Interest Statement}
The authors have no conflicts of interest to declare that are relevant to the content of this article.

\section*{Declaration of generative AI and AI-assisted technologies in the manuscript preparation process}
During the preparation of this work the author(s) used Grammarly in order to improve text readability. After using this tool/service, the author(s) reviewed and edited the content as needed and take(s) full responsibility for the content of the published article.

\bibliographystyle{unsrt}
\bibliography{Bibliography/references}

\begin{appendices}
\section{Equivalence between MLP and KAN}
\label{sec:app_a}

In this appendix, we demonstrate that MLPs and KANs are mathematically equivalent in representational capacity, such that the class of functions each architecture can represent is equivalent. Showing that a KAN is equivalent to an MLP under certain conditions allows us to understand that, under specific restrictions, the KAN is no more expressive than a classical MLP, helping to decide when its use is truly necessary and, at the same time, justifying the simplification of the model without compromising performance or the use of computational resources. Under this view, we present a mathematical demonstration of this approximation.

\par
The equivalence proves that, given a set of parameters $(w_{ij}, v_i, d_i, \sigma)$ for the MLP, it is possible to find a set of univariate functions $\psi, \varphi$ for the KAN, and vice versa, such that $f_{\text{mlp}}(\mathbf{x}) \approx f_{\text{kan}}(\mathbf{x})$ for the same class of continuous functions, ensuring that both architectures cover the same functional space under proper parameterization, as shown in Theorem \ref{thm_equivalence}. In particular, the equivalence proves that a single-layer MLP with $2n+1$ neurons can be represented by single-layer KAN with $2n+1$ neurons (which can be generalized to multiple layers), and vice versa.

\begin{lemma}[Spline Approximation]
\label{lem_splines}
Let $g \in C(K)$ be a continuous function and $K \subset \mathbb{R}^{n}$ a compact set. Then, for any $\varepsilon > 0$, there exists an adaptive spline function $S_g(\boldsymbol{\variable}; \boldsymbol{\theta}) = \sum_{k=1}^{m} c_k B_k(\boldsymbol{\variable})$ such that:

\begin{equation}
\|g - S_g(\cdot; \boldsymbol{\theta})\|_{L^\infty(K)} = \sup_{x \in K} |g(\boldsymbol{\variable}) - S_g(\boldsymbol{\variable}; \boldsymbol{\theta})| < \varepsilon
\end{equation}

where $B_k$ is a B-spline basis functions of degree $d \geq 1$, $\boldsymbol{\theta} = \{c_k\}_{k=1}^m \in \mathbb{R}^m$ are the parameters that control the spline behavior, and $m = O(\varepsilon^{-1/d})$ is the
number of parameters necessary to guarantee that the error is less than $\varepsilon$ for smooth functions.
\end{lemma}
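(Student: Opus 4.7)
The plan is to decouple the statement into a qualitative density claim and a quantitative rate claim, and to invoke classical B-spline approximation theory for each piece. First I would fix $\varepsilon > 0$ and, using the uniform continuity of $g$ on the compact set $K$, extract a modulus of continuity $\omega(g, h)$ that controls local oscillations on cells of characteristic size $h$. I would then construct a quasi-interpolant of Schoenberg type built from tensor-product (or, in the adaptive variant, hierarchically refined) B-splines of degree $d$. Because B-splines form a partition of unity with local support and reproduce polynomials up to degree $d$, such a quasi-interpolant satisfies $\|g - S_g\|_{L^\infty(K)} \leq C\,\omega(g, h)$, which can be driven below $\varepsilon$ by choosing $h$ sufficiently small, establishing existence for every $g \in C(K)$.

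Next, to obtain the rate $m = O(\varepsilon^{-1/d})$, I would strengthen the regularity hypothesis (as the lemma itself signals with the phrase ``for smooth functions'') and appeal to the classical de Boor--Schumaker estimate $\|g - S_g\|_{L^\infty} \leq C h^{d} \|D^{d} g\|_{L^\infty}$ for degree-$d$ B-spline quasi-interpolation of sufficiently smooth targets. In one dimension the parameter count scales linearly with the reciprocal of the knot spacing, $m \sim h^{-1}$, so the error bound translates into $\|g - S_g\|_{L^\infty} \leq C\, m^{-d}$, and matching this to $\varepsilon$ yields $m = O(\varepsilon^{-1/d})$. The adaptive component of the statement is what permits the same rate to survive in the presence of isolated singularities or anisotropic features: by clustering knots where $|D^{d} g|$ is large, one recovers the univariate rate via nonlinear $m$-term approximation results in the appropriate Besov scale, which is the standard vehicle for promoting qualitative density into a quantitative rate without imposing uniform smoothness.

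The main obstacle I anticipate is reconciling the multivariate setting $K \subset \mathbb{R}^{n}$ with the essentially univariate-looking rate $m = O(\varepsilon^{-1/d})$. A naive tensor-product construction produces a rate $m = O(\varepsilon^{-n/d})$ that suffers from the curse of dimensionality, so the stated rate is really the univariate spline rate relevant for KANs, where each edge function $\varphi_{ij}$ is itself a univariate spline acting on a single coordinate. A careful write-up should therefore either restrict attention to $n = 1$ when invoking the lemma inside the KAN--MLP equivalence argument (which is the way it is actually used downstream), or explicitly reinterpret the statement as a componentwise bound applied separately along each edge of the KAN. Beyond this, the only technical care needed is in matching the smoothness assumption to the exponent $d$: the estimate above requires $g \in C^{d}(K)$ (or weaker Sobolev regularity of the same order), which is stronger than the bare $g \in C(K)$ in the hypothesis and should be flagged as the regime in which the quoted parameter count is sharp.
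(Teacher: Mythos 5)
Your proposal follows the same classical B-spline approximation route as the paper, but your execution is more careful and, in two places, exposes weaknesses that the paper's own proof does not resolve. The paper argues density of $\mathcal{S}_d^\infty$ in $C(K)$ abstractly (refining knots, then invoking Stone--Weierstrass for merely continuous $g$) and quotes the error bound $c\,h^{k+1}\|g^{(k+1)}\|_{L^\infty(K)}$ for smooth targets; your route via a Schoenberg quasi-interpolant with the bound $C\,\omega(g,h)$ is more constructive and arguably cleaner, since the span of degree-$d$ B-splines over a fixed degree is not a subalgebra of $C(K)$, so the Stone--Weierstrass appeal in the paper is loose, whereas partition of unity plus local polynomial reproduction gives density directly from uniform continuity. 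Your two caveats are both genuine and both apply equally to the paper's proof: (i) the rate $m = O(\varepsilon^{-1/d})$ is the univariate count, and a tensor-product construction on $K \subset \mathbb{R}^n$ yields $m = O(\varepsilon^{-n/d})$ --- the paper writes univariate knots $\lambda_i$ and a univariate derivative $g^{(k+1)}$ despite stating the lemma for $K \subset \mathbb{R}^n$, and the downstream use in Theorem A.1 is indeed to approximate the univariate activation $\sigma$ and the univariate edge functions, so your suggestion to restrict the quantitative claim to $n=1$ (or state it per coordinate) is the right fix; (ii) the parameter count requires $g \in C^d$ (or comparable Sobolev/Besov regularity, recoverable adaptively via nonlinear $m$-term approximation), not the bare $g \in C(K)$ of the hypothesis, which the paper only implicitly concedes with the phrase ``for smooth functions.'' In short, your plan is correct, matches the paper's strategy, and would produce a tighter statement; the one thing to finish is to actually fix a concrete quasi-interpolant (e.g., the Schoenberg operator $\sum_k g(\lambda_k^*) B_k$ with suitable knot averages $\lambda_k^*$) so that the bound $C\,\omega(g,h)$ is not merely asserted.
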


\begin{proof}
By the density of splines in $C(K)$, consider a family of splines $\{B_k\}_{k=1}^m$ of degree $d$ defined on a set of knots $T = \{\lambda_1, \dots, \lambda_m\}$. We define the linear space generated by these splines as
\begin{equation}
\mathcal{S}_d^m = \text{span}\{B_k\}_{k=1}^m.    
\end{equation}
Fix $m \to \infty$ and the maximum mesh diameter
\begin{equation}
h = \max_i |\lambda_{i+1} - \lambda_i| \to 0.    
\end{equation}
Thus, we guarantee density in $C(K)$ in the topology of uniform convergence on compact sets. Considering the space obtained in this limit by
\begin{equation}
\mathcal{S}_d^\infty = \bigcup_{m=1}^{\infty} \mathcal{S}_d^m,  
\end{equation}
it follows that $\mathcal{S}_d^\infty$ is dense in $C(K)$. Specifically, for $g \in C^k(\mathbb{R})$ with $k \geq 1$ and any compact set $K \subset \mathbb{R}^{n}$, there exists $S \in \mathcal{S}_d^\infty$ such that
\begin{equation}
\| g - S \|_{L^\infty(K)}  =  \sup_{x \in K} | g -S | \le c\, h^{k+1} \| g^{(k+1)} \|_{L^\infty(K)},
\end{equation}
where the ($k+1$)-th derivative of function $g$, $h$ is the maximum spacing between the knots of $S$ and $c$ is a constant depending only on $d$ and $k$. For general continuous functions $g \in C(K)$, the density follows from the Stone-Weierstrass Theorem \cite{cotter1990stone}. Therefore, there exists $\boldsymbol{\theta}^* \in \mathbb{R}^m$ such that the linear combination of splines
\[
S_g(\cdot; \boldsymbol{\theta}^*) \in \mathcal{S}_d^\infty
\]
approximates $g$ on any compact set with arbitrary precision, guaranteeing the desired density.
\end{proof}


\begin{theorem}[MLP-KAN Equivalence]
\label{thm_equivalence}
Let $\mathcal{L}_{\text{MLP}} = \left\{ f_{\text{mlp}}: \mathbb{R}^n \to \mathbb{R} \mid f_{\text{mlp}}(\boldsymbol{\variable}) = \sum_{i=1}^{n_H} v_i \sigma\left(\sum_{j=1}^{n} w_{ij} \boldsymbol{\variable}_j + d_i\right)\right\}$ be the class of MLPs with non-constant activation function $\sigma: \mathbb{R} \to \mathbb{R}$ and $n_H \geq 2n+1$, and let $\mathcal{L}_{\text{KAN}} = \left\{f_{\text{kan}}: \mathbb{R}^n \to \mathbb{R} \mid f_{\text{kan}}(\boldsymbol{\variable}) = \sum_{i=1}^{2n+1} \psi_i \left( \sum_{j=1}^{n} \varphi_{ij}(\boldsymbol{\variable}_j) \right)\right\}$ be the class of KANs with univariate functions $\psi_i, \varphi_{ij} \in C(K)$. Then:

\begin{enumerate}
    \item $\forall f_{\text{mlp}} \in \mathcal{L}_{\text{MLP}}$, $\exists f_{\text{kan}} \in \mathcal{L}_{\text{KAN}}$ such that $\|f_{\text{mlp}} - f_{\text{kan}}\|_{L^\infty(K)} < \varepsilon$ for any $\varepsilon > 0$ and compact set $K \subset \mathbb{R}^n$.
    
    \item $\forall f_{\text{kan}} \in \mathcal{L}_{\text{KAN}}$ with $\psi_i, \varphi_{ij} \in C^k(\mathbb{R})$ for $k \geq 1$, $\exists f_{\text{mlp}} \in \mathcal{L}_{\text{MLP}}$ such that $\|f_{\text{kan}} - f_{\text{mlp}}\|_{L^\infty(K)} < \varepsilon$ for any $\varepsilon > 0$ and compact set $K \subset \mathbb{R}^n$.
    
    \item There exists a transformation $\Phi: \mathcal{L}_{\text{MLP}} \to \mathcal{L}_{\text{KAN}}$ that is an isomorphism, this is, it is bijective and preserves the structure of the functions.
\end{enumerate}
\end{theorem}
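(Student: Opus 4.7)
The plan is to reduce the three claims to two classical density results: the Kolmogorov--Arnold superposition theorem, which exactly represents any continuous function on a compact box as a sum of $2n+1$ compositions of univariate continuous functions, and the universal approximation theorem for shallow MLPs with non-polynomial activation, which gives density of $\mathcal{L}_{\text{MLP}}$ in $C(K)$. Together these imply that $\mathcal{L}_{\text{MLP}}$ and $\mathcal{L}_{\text{KAN}}$ have the same closure in $C(K)$, so the interconvertibility claims reduce to density arguments combined with the spline-approximation step supplied by Lemma~\ref{lem_splines}.

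For claim (1), I would first observe that any $f_{\text{mlp}} \in \mathcal{L}_{\text{MLP}}$ is continuous on $K$, being a finite sum of compositions of continuous functions. Applying the Kolmogorov--Arnold theorem yields an \emph{exact} representation
\[
f_{\text{mlp}}(\boldsymbol{\variable}) = \sum_{i=1}^{2n+1} \psi_i^{\ast}\!\Bigl(\sum_{j=1}^{n} \varphi_{ij}^{\ast}(\variable_j)\Bigr),
\]
with continuous univariate functions $\psi_i^{\ast}, \varphi_{ij}^{\ast}$; this representative already lies in $\mathcal{L}_{\text{KAN}}$ as defined. If one further restricts the KAN class to spline-parameterised univariates (as in the implementation), I would apply Lemma~\ref{lem_splines} to each $\psi_i^{\ast}$ and $\varphi_{ij}^{\ast}$ on the compact image of the inner sums, choose the spline tolerances $\delta_1,\delta_2$ in terms of the moduli of continuity of the $\psi_i^{\ast}$, and propagate the errors by a standard three-term estimate across the inner sum, outer univariate, and outer sum to conclude $\|f_{\text{mlp}} - f_{\text{kan}}\|_{L^\infty(K)} < \varepsilon$.

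For claim (2), the argument is more direct: any $f_{\text{kan}} \in \mathcal{L}_{\text{KAN}}$ with continuous univariates is continuous on $K$, and under the standard assumption that $\sigma$ is non-constant and non-polynomial, the Cybenko--Hornik--Leshno--Pinkus theorem gives an MLP of sufficiently large width $n_H$ approximating $f_{\text{kan}}$ to within $\varepsilon$ on $K$ in the $L^\infty$ norm. The hypothesis $n_H \geq 2n+1$ is only a lower bound, so it does not obstruct taking the width required by universal approximation.

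For claim (3), the isomorphism statement is the main conceptual obstacle, because neither $\mathcal{L}_{\text{MLP}}$ nor $\mathcal{L}_{\text{KAN}}$ is closed in $C(K)$ and no canonical set-theoretic bijection exists between the raw families (a single continuous function admits many KA representations, and many distinct parameter vectors define the same MLP). The natural interpretation is at the level of completions: claims (1) and (2) together give $\overline{\mathcal{L}_{\text{MLP}}} = \overline{\mathcal{L}_{\text{KAN}}} = C(K)$ in the topology of uniform convergence on compact sets, so the identity on $C(K)$ provides the structure-preserving correspondence. I would construct $\Phi$ concretely by sending each MLP to the Kolmogorov--Arnold representative produced in claim (1), with $\Phi^{-1}$ supplied by the universal approximation construction of claim (2); the delicate point, and where the statement must be read carefully, is pinning down the non-uniqueness of the KA representation, which can be resolved by adopting Sprecher's explicit choice of inner functions so that $\Phi$ is well-defined as a map of approximating systems rather than a literal set bijection.
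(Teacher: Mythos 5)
Your argument reaches the correct conclusions but by a genuinely different route from the paper. For claim (1) the paper does not invoke the Kolmogorov--Arnold superposition theorem at all: it keeps the MLP's own affine inner structure, setting $\varphi_{ij}(\omega_j) = w_{ij}\omega_j + d_i/n$ and $\psi_i(u) = v_i S_\sigma(u;\boldsymbol{\theta}_i)$, where $S_\sigma$ is the spline approximation of the single activation $\sigma$ supplied by Lemma~\ref{lem_splines}, and the error estimate is a one-line triangle inequality over the $n_H$ hidden units. Your exact KA representation is cleaner in one respect --- it produces a KAN with exactly $2n+1$ outer terms and zero error, matching the definition of $\mathcal{L}_{\text{KAN}}$ literally, whereas the paper's construction has $n_H \geq 2n+1$ outer terms --- but it pays for this with the notoriously non-smooth KA inner functions, while the paper's version yields an explicit, implementable spline KAN. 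For claim (2) the paper likewise avoids the global Cybenko--Hornik--Leshno--Pinkus theorem and instead approximates each univariate $\varphi_{ij}$ and $\psi_i$ separately by sums of shifted activations, assembling them into a ``composite activation'' $\sigma_{\text{comp}}$; your direct appeal to universal approximation is shorter and arguably more sound, since the paper's composite-activation network does not obviously belong to $\mathcal{L}_{\text{MLP}}$ with the original $\sigma$. Both routes share the hidden hypothesis you correctly flag: ``non-constant'' alone does not give density (take $\sigma(x)=x^2$), so non-polynomiality or sigmoidality must be assumed. For claim (3) the paper simply declares $\Phi$ injective ``by uniqueness of the spline decomposition'' and surjective by density; your reading via closures in $C(K)$ together with a Sprecher-normalized representative is more candid about the non-uniqueness of both parameterizations and is essentially the only way to make the ``isomorphism'' claim meaningful.
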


\begin{proof}
\textbf{(1) Representation of MLP as KAN with Adaptive Splines}

Let $f_{\text{mlp}} \in \mathcal{L}_{\text{MLP}}$ with parameters $\{v_i, w_{ij}, d_i\}_{i=1,j=1}^{n_H,n}$ and non-constant activation function $\sigma \in C(K)$. Assuming that $\max_i |v_i| > 0$ (otherwise, the function would be identically zero), by Lemma \ref{lem_splines}, for any $\varepsilon > 0$ and compact set $K \subset \mathbb{R}^{n}$, there exists a spline function $S_\sigma(u; \boldsymbol{\theta}) = \sum_{k=1}^{m} c_k B_k(u)$ such that:

\begin{equation}
\sup_{u \in K} |\sigma(u) - S_\sigma(u; \boldsymbol{\theta})| < \frac{\varepsilon}{2n_H \max_i |v_i|}
\end{equation}

where $B_k$ is a B-spline basis functions and $\boldsymbol{\theta} = \{c_k\}_{k=1}^m$ are the control parameters.

We define the equivalent KAN functions as:

\begin{align}
\varphi_{ij}(\boldsymbol{\variable}_j) &= w_{ij} \boldsymbol{\variable}_j + \frac{d_i}{n} \\
\psi_i(u) &= v_i S_\sigma(u; \boldsymbol{\theta}_i)
\end{align}

where $\boldsymbol{\theta}_i$ are specific parameters for each function $\psi_i$. Then:

\begin{align}
f_{\text{kan}}(\boldsymbol{\variable}) &= \sum_{i=1}^{n_H} \psi_i\left(\sum_{j=1}^n \varphi_{ij}(\boldsymbol{\variable}_j)\right) \\
&= \sum_{i=1}^{n_H} v_i S_\sigma\left(\sum_{j=1}^n w_{ij} \boldsymbol{\variable}_j + d_i; \boldsymbol{\theta}_i\right)
\end{align}

Therefore:

\begin{align}
|f_{\text{mlp}}(\boldsymbol{\variable}) - f_{\text{kan}}(\boldsymbol{\variable})| &= \left|\sum_{i=1}^{n_H} v_i \sigma\left(\sum_{j=1}^n w_{ij} \boldsymbol{\variable}_j + d_i\right) - \sum_{i=1}^{n_H} v_i S_\sigma\left(\sum_{j=1}^n w_{ij} \boldsymbol{\variable}_j + d_i; \boldsymbol{\theta}_i\right)\right| \\
&\leq \sum_{i=1}^{n_H} |v_i| \left|\sigma\left(\sum_{j=1}^n w_{ij} \boldsymbol{\variable}_j + d_i\right) - S_\sigma\left(\sum_{j=1}^n w_{ij} \boldsymbol{\variable}_j + d_i; \boldsymbol{\theta}_i\right)\right| \\
&< \sum_{i=1}^{n_H} |v_i| \cdot \frac{\varepsilon}{2n_H \max_i |v_i|} \leq \frac{\varepsilon}{2n_H} \sum_{i=1}^{n_H} \frac{|v_i|}{\max_i |v_i|} \leq \frac{\varepsilon}{2n_H} \cdot n_H = \frac{\varepsilon}{2} < \varepsilon
\end{align}

\textbf{(2) Representation of KAN as MLP with Non-Linear Activation}

Let $f_{\text{kan}} \in \mathcal{L}_{\text{KAN}}$ with univariate functions $\psi_i, \varphi_{ij} \in C^k(\mathbb{R})$ for $k \geq 1$. The smoothness condition $k \geq 1$ is essential to guarantee that the univariate functions are differentiable and, therefore, can be approximated by neural networks.

First, we approximate the inner functions $\varphi_{ij}$. For each $\varphi_{ij} \in C^k(\mathbb{R})$, there exist parameters $\{a_{ijk}, b_{ijk}, c_{ij}\}$ and a non-constant activation function $\sigma_1 \in C(K)$ such that:
\begin{equation}
\varphi_{ij}(\boldsymbol{\variable}_j) \approx \sum_{k=1}^{m_{ij}} a_{ijk} \sigma_1(\boldsymbol{\variable}_j - b_{ijk}) + c_{ij}
\end{equation}
with approximation error less than $\varepsilon/(4n \cdot n_H)$.

Next, we approximate the outer functions $\psi_i$. For each $\psi_i \in C^k(\mathbb{R})$, there exist parameters $\{\alpha_{ik}, \beta_{ik}, \gamma_i\}$ and a non-constant activation function $\sigma_2 \in C(K)$ such that:

\begin{equation}
\psi_i(u) \approx \sum_{k=1}^{m_i} \alpha_{ik} \sigma_2(u - \beta_{ik}) + \gamma_i
\end{equation}

with approximation error less than $\varepsilon/(2n_H)$.

Now we construct the equivalent MLP. We define the composite activation function as:

\begin{equation}
\sigma_{\text{comp}}(u) = \sum_{k=1}^{M} \alpha_k \sigma_2(u - \beta_k) + \gamma
\end{equation}

where $M = \sum_{i=1}^{n_H} m_i + \sum_{i=1}^{n_H} \sum_{j=1}^n m_{ij}$ and the parameters $\{\alpha_k, \beta_k, \gamma\}$ are organized to incorporate all approximations of the functions $\psi_i$ and $\varphi_{ij}$.

The structure of the equivalent MLP is defined as:

\begin{equation}
f_{\text{mlp}}(\boldsymbol{\variable}) = \sum_{i=1}^{n_H} v_i \sigma_{\text{comp}}\left(\sum_{j=1}^n w_{ij} \boldsymbol{\variable}_j + d_i\right)
\end{equation}

where each application of $\sigma_{\text{comp}}$ corresponds to the approximations of the original univariate functions.

The construction guarantees that:

\begin{align}
\|f_{\text{kan}} - f_{\text{mlp}}\|_{L^\infty(K)} &\leq \sum_{i=1}^{n_H} \left\|\psi_i\left(\sum_{j=1}^n \varphi_{ij}(\boldsymbol{\variable}_j)\right) - v_i \sigma_{\text{comp}}\left(\sum_{j=1}^n w_{ij} \boldsymbol{\variable}_j + d_i\right)\right\|_{L^\infty(K)} \\
&< \sum_{i=1}^{n_H} \left(\frac{\varepsilon}{2n_H} + n \cdot \frac{\varepsilon}{4n \cdot n_H}\right) \\
&< \sum_{i=1}^{n_H} \frac{\varepsilon}{n_H} = \varepsilon
\end{align}

\textbf{(3) Structure Preservation and Bijectivity}

The transformation $\Phi: \mathcal{L}_{\text{MLP}} \to \mathcal{L}_{\text{KAN}}$ is defined by:

\begin{equation}
\Phi(f_{\text{mlp}}) = \sum_{i=1}^{n_H} \psi_i\left(\sum_{j=1}^n \varphi_{ij}(\boldsymbol{\variable}_j)\right)
\end{equation}

where $\psi_i(u) = v_i S_\sigma(u; \boldsymbol{\theta}_i)$ and $\varphi_{ij}(\boldsymbol{\variable}_j) = w_{ij} \boldsymbol{\variable}_j + d_i/n$.

The injectivity of $\Phi$ follows from the uniqueness of the spline decomposition, and the surjectivity in the class of smooth functions is guaranteed by Lemma \ref{lem_splines} which establishes the density of splines in $C(K)$.
\end{proof}

\begin{corollary}
\label{cor_restrictions}
Let $\mathcal{L}_{\text{linear}} = \{f: \mathbb{R}^n \to \mathbb{R} \mid f(\boldsymbol{\variable}) = \sum_{i=1}^n a_i \boldsymbol{\variable}_i + b\}$ be the class of affine functions and $\mathcal{L}_{\text{identity}} = \{f: \mathbb{R}^n \to \mathbb{R} \mid f(\boldsymbol{\variable}) = \sum_{i=1}^{n_H} v_i \left(\sum_{j=1}^n w_{ij} \boldsymbol{\variable}_j + d_i\right)\}$ be the class of MLPs with identity activation. Then:

\begin{equation}
\mathcal{L}_{\text{MLP}} \cap \mathcal{L}_{\text{KAN}} \supsetneq \mathcal{L}_{\text{linear}} \cup \mathcal{L}_{\text{identity}}
\end{equation}

where the inclusion is strict, demonstrating that the equivalence can be established for a significantly broader class of functions, as established by Theorem \ref{thm_equivalence}.
\end{corollary}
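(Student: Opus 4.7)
The plan is to verify the forward inclusion $\mathcal{L}_{\text{linear}} \cup \mathcal{L}_{\text{identity}} \subseteq \mathcal{L}_{\text{MLP}} \cap \mathcal{L}_{\text{KAN}}$ by direct construction, and then exhibit a concrete nonlinear witness function that lies in both $\mathcal{L}_{\text{MLP}}$ and $\mathcal{L}_{\text{KAN}}$ but not in the union on the right. First, I would observe the trivial reduction $\mathcal{L}_{\text{identity}} \subseteq \mathcal{L}_{\text{linear}}$: distributing the sum $\sum_{i=1}^{n_H} v_i(\sum_{j=1}^{n} w_{ij} \boldsymbol{\variable}_j + d_i) = \sum_{j=1}^{n} (\sum_{i} v_i w_{ij})\boldsymbol{\variable}_j + \sum_i v_i d_i$ yields an affine function, so it suffices to embed $\mathcal{L}_{\text{linear}}$ into both classes.

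Next, I would exhibit explicit representatives. For an arbitrary affine function $f(\boldsymbol{\variable}) = \sum_{j=1}^{n} a_j \boldsymbol{\variable}_j + b$, membership in $\mathcal{L}_{\text{MLP}}$ follows either directly when $\sigma$ is chosen with an affine segment, or in the closure sense by approximating the identity via the scaling trick $\sigma(\varepsilon u)/\sigma'(0)\varepsilon \to u$ used in classical universal approximation arguments; this is enough because Theorem~\ref{thm_equivalence} is itself stated up to uniform approximation on compacta, and we adopt the same convention for $\mathcal{L}_{\text{MLP}}$ and $\mathcal{L}_{\text{KAN}}$ throughout. For $\mathcal{L}_{\text{KAN}}$ the representation is even cleaner: set $\varphi_{ij}(\boldsymbol{\variable}_j) = a_j \boldsymbol{\variable}_j/(2n+1)$ and $\psi_i(u) = u + b/(2n+1)$, which gives $\sum_{i=1}^{2n+1}\psi_i\bigl(\sum_{j=1}^{n}\varphi_{ij}(\boldsymbol{\variable}_j)\bigr) = f(\boldsymbol{\variable})$ exactly.

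For strictness, I would pick a simple nonlinear witness, for instance $f^{\star}(\boldsymbol{\variable}) = \tanh(\boldsymbol{\variable}_1)$ on a compact $K \subset \mathbb{R}^n$. Taking $\sigma = \tanh$, $n_H = 2n+1$, and weights $w_{11} = 1$, $v_1 = 1$, with all other rows inert, places $f^{\star} \in \mathcal{L}_{\text{MLP}}$ directly. By part (1) of Theorem~\ref{thm_equivalence}, $f^{\star}$ admits a KAN approximant $f_{\text{kan}}$ with $\|f^{\star} - f_{\text{kan}}\|_{L^\infty(K)} < \varepsilon$ for any $\varepsilon > 0$, so $f^{\star} \in \mathcal{L}_{\text{KAN}}$ under the same closure convention. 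Since $f^{\star}$ is not affine, $f^{\star} \notin \mathcal{L}_{\text{linear}}$, and by the reduction above $f^{\star} \notin \mathcal{L}_{\text{linear}} \cup \mathcal{L}_{\text{identity}}$, proving the inclusion is strict.

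The main obstacle will be fixing the precise sense of membership, because Theorem~\ref{thm_equivalence} is phrased as uniform approximation on compacta rather than as exact representation. I would resolve this by declaring, at the outset of the proof, that $\mathcal{L}_{\text{MLP}}$ and $\mathcal{L}_{\text{KAN}}$ are to be understood as their closures in $C(K)$ under $\|\cdot\|_{L^{\infty}(K)}$, matching the setting already used throughout Appendix~\ref{sec:app_a}. A subtler but minor point is that the KAN template has exactly $2n+1$ outer summands; the affine construction above respects this, and the nonlinear witness can be placed in a single summand with the remaining $2n$ summands contributing zero, so no dimensional mismatch arises.
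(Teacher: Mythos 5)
Your proof is correct, but it takes a different route from the paper's. The paper argues top-down: it reads items (1) and (2) of Theorem~\ref{thm_equivalence} as giving $\mathcal{L}_{\text{MLP}} = \mathcal{L}_{\text{KAN}}$ (implicitly in the closure sense), so the intersection collapses to $\mathcal{L}_{\text{MLP}}$ itself, and the corollary reduces to the one-line assertion that $\mathcal{L}_{\text{linear}} \cup \mathcal{L}_{\text{identity}}$ is a proper subset of $\mathcal{L}_{\text{MLP}}$; no explicit witness or constructive inclusion is given. You instead work bottom-up: you reduce $\mathcal{L}_{\text{identity}}$ to $\mathcal{L}_{\text{linear}}$ by distributing the sum, embed affine functions into both classes by explicit parameter choices (your KAN representation with $\varphi_{ij}(\boldsymbol{\variable}_j) = a_j\boldsymbol{\variable}_j/(2n+1)$ and $\psi_i(u) = u + b/(2n+1)$ is exact), and certify strictness with the concrete witness $\tanh(\boldsymbol{\variable}_1)$. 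Your version buys self-containedness and makes explicit the closure convention that the paper leaves tacit — a real issue, since the theorem only guarantees approximation, not set containment, so the paper's step ``$\mathcal{L}_{\text{MLP}} \subseteq \mathcal{L}_{\text{KAN}}$ (Item 1)'' is only literally valid after taking closures in $C(K)$, exactly as you stipulate. The paper's version buys brevity by leaning entirely on the theorem. One small simplification available to you: $\tanh(\boldsymbol{\variable}_1)$ is \emph{exactly} representable as a KAN (take $\psi_1 = \tanh$, $\varphi_{11} = \mathrm{id}$, all other summands zero), so you do not need to invoke part (1) of the theorem or the closure convention for the witness at all; likewise, since the identity is a non-constant activation, affine functions sit in $\mathcal{L}_{\text{MLP}}$ exactly, without the scaling trick.
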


\begin{proof}
By Theorem \ref{thm_equivalence}, we know that $\mathcal{L}_{\text{MLP}} \subseteq \mathcal{L}_{\text{KAN}}$ (Item 1) and $\mathcal{L}_{\text{KAN}} \subseteq \mathcal{L}_{\text{MLP}}$ (Item 2), therefore $\mathcal{L}_{\text{MLP}} = \mathcal{L}_{\text{KAN}}$. Hence, $\mathcal{L}_{\text{MLP}} \cap \mathcal{L}_{\text{KAN}} = \mathcal{L}_{\text{MLP}} = \mathcal{L}_{\text{KAN}}$. Since $\mathcal{L}_{\text{linear}} \cup \mathcal{L}_{\text{identity}} \subsetneq \mathcal{L}_{\text{MLP}}$ (linear functions and with identity activation are a proper subset of general MLPs), we have that $\mathcal{L}_{\text{MLP}} \cap \mathcal{L}_{\text{KAN}} \supsetneq \mathcal{L}_{\text{linear}} \cup \mathcal{L}_{\text{identity}}$, demonstrating that the MLP-KAN equivalence holds for a much larger class of functions than just linear or identity activation functions.
\end{proof}

\begin{theorem}[Uniform Convergence]
\label{thm_convergence}
Let $\{f_{\text{mlp}}^{(j)}\}_{j=1}^{\infty} \subset \mathcal{L}_{\text{MLP}}$ be a sequence of MLPs converging uniformly to $f^* \in C(\mathbb{R}^n)$ on a compact set $K \subset \mathbb{R}^n$. Then, there exists a corresponding sequence $\{f_{\text{kan}}^{(j)}\}_{j=1}^{\infty} \subset \mathcal{L}_{\text{KAN}}$ such that:

\begin{equation}
\lim_{j \to \infty} \|f_{\text{mlp}}^{(j)} - f_{\text{kan}}^{(j)}\|_{L^\infty(K)} = 0
\end{equation}

and $f_{\text{kan}}^{(j)} \to f^*$ uniformly on $K$.
\end{theorem}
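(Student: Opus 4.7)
The plan is to leverage the equivalence already established in Theorem \ref{thm_equivalence} (item 1) to construct, for each MLP in the sequence, a KAN that shadows it arbitrarily closely, and then close the argument with a triangle inequality. Concretely, I would first fix a sequence of tolerances $\varepsilon_j \downarrow 0$ (for instance $\varepsilon_j = 1/j$). For each index $j$, I would apply item (1) of Theorem \ref{thm_equivalence} to the MLP $f_{\text{mlp}}^{(j)} \in \mathcal{L}_{\text{MLP}}$ on the compact set $K$, obtaining a KAN $f_{\text{kan}}^{(j)} \in \mathcal{L}_{\text{KAN}}$ such that
\begin{equation}
\|f_{\text{mlp}}^{(j)} - f_{\text{kan}}^{(j)}\|_{L^\infty(K)} < \varepsilon_j.
\end{equation}
This immediately yields the first claimed limit, since $\varepsilon_j \to 0$ forces $\|f_{\text{mlp}}^{(j)} - f_{\text{kan}}^{(j)}\|_{L^\infty(K)} \to 0$.

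For the second claim, I would invoke the triangle inequality on $K$:
\begin{equation}
\|f_{\text{kan}}^{(j)} - f^*\|_{L^\infty(K)} \;\leq\; \|f_{\text{kan}}^{(j)} - f_{\text{mlp}}^{(j)}\|_{L^\infty(K)} + \|f_{\text{mlp}}^{(j)} - f^*\|_{L^\infty(K)}.
\end{equation}
The first summand is bounded by $\varepsilon_j$ by construction, while the second summand tends to $0$ by the hypothesis that $f_{\text{mlp}}^{(j)} \to f^*$ uniformly on $K$. Given any $\eta > 0$, pick $J_1$ with $\varepsilon_j < \eta/2$ for $j \geq J_1$ and $J_2$ with $\|f_{\text{mlp}}^{(j)} - f^*\|_{L^\infty(K)} < \eta/2$ for $j \geq J_2$; then for $j \geq \max(J_1, J_2)$ the sum is below $\eta$, establishing uniform convergence $f_{\text{kan}}^{(j)} \to f^*$ on $K$.

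There is not really a hard technical obstacle here, since the heavy lifting (spline approximation of the activation, construction of $\psi_i$ and $\varphi_{ij}$ from MLP weights) is fully absorbed into Theorem \ref{thm_equivalence}. The only point that requires some care is making sure the constructed KANs genuinely lie in $\mathcal{L}_{\text{KAN}}$ for every $j$ uniformly in the tolerance $\varepsilon_j$; this is where one must verify that the constants appearing in the bound $\varepsilon_j/(2 n_H^{(j)} \max_i |v_i^{(j)}|)$ used inside the proof of Theorem \ref{thm_equivalence} can be chosen per-sample without disturbing the per-$j$ approximation. Since each application of Theorem \ref{thm_equivalence} is independent across $j$, this causes no issue: one simply allows the spline refinement $m^{(j)}$ to grow with $j$ as needed. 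Thus the conclusion follows cleanly from a diagonal-style use of Theorem \ref{thm_equivalence} together with the triangle inequality.
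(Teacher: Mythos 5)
Your proposal is correct and follows essentially the same route as the paper: apply Theorem \ref{thm_equivalence} to each $f_{\text{mlp}}^{(j)}$ to obtain a shadowing KAN, then conclude with the triangle inequality and the hypothesis of uniform convergence. Your use of an explicit vanishing tolerance $\varepsilon_j = 1/j$ is in fact a slightly cleaner way to pin down a single well-defined sequence $\{f_{\text{kan}}^{(j)}\}$ than the paper's fixed-$\varepsilon$ phrasing, but the underlying argument is identical.
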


\begin{proof}
By uniform convergence, for any $\varepsilon > 0$, there exists $N \in \mathbb{N}$ such that for $j \geq N$:

\begin{equation}
\|f_{\text{mlp}}^{(j)} - f^*\|_{L^\infty(K)} < \frac{\varepsilon}{2}
\end{equation}

By Theorem \ref{thm_equivalence}, for each $f_{\text{mlp}}^{(j)}$, there exists $f_{\text{kan}}^{(j)} \in \mathcal{L}_{\text{KAN}}$ such that:

\begin{equation}
\|f_{\text{mlp}}^{(j)} - f_{\text{kan}}^{(j)}\|_{L^\infty(K)} < \frac{\varepsilon}{2}
\end{equation}

Therefore, by the triangle inequality:

\begin{align}
\|f_{\text{kan}}^{(j)} - f^*\|_{L^\infty(K)} &\leq \|f_{\text{kan}}^{(j)} - f_{\text{mlp}}^{(j)}\|_{L^\infty(K)} + \|f_{\text{mlp}}^{(j)} - f^*\|_{L^\infty(K)} \\
&< \frac{\varepsilon}{2} + \frac{\varepsilon}{2} = \varepsilon
\end{align}

for $j \geq N$, establishing the uniform convergence, as guaranteed by Theorem \ref{thm_equivalence}.
\end{proof}


\numberwithin{table}{section}
\section{Hybrid architectures structure}
\label{sec:app_c}

\par 
In this appendix, we describe in more detail how our hybrid schemes are built. For cases A, B, and C, we have, respectively, $127$k, $1$M, and $20$M learnable parameters in our architectures. Tables \ref{tab:params_case_a}, \ref{tab:params_case_b}, and \ref{tab:params_case_c} show how the parameters are distributed for branch and trunk networks in each hybrid scheme. In architectures that include FNO models in their branch networks, we specify the number of FNO blocks and Fourier modes for each case. For architectures using MLP and KAN branch networks, more layers and neurons are added for each case. Table \ref{tab:models_hparams} shows the different hyperparameters for each case.

\begin{table}[htpb]
\centering
\caption{Hybrid DeepONet architectures – Case A ($127$K parameters).}
\setlength{\tabcolsep}{10pt}
\renewcommand{\arraystretch}{1.25}
\begin{tabular}{c c c c}
\toprule
\textbf{Hybrid Scheme} & \textbf{Network} & \textbf{Block} & \textbf{\# Parameters} \\
\midrule
\multirow{5}{*}{DeepONet (FNO+KAN)} 
 & \multirow{3}{*}{Branch: FNO}  & Lifting             & 224 \\
 &                               & FNO Blocks  & 111{,}136 \\
 &                               & Projection           & 6{,}088 \\
 & Trunk: KAN & KANLinear Layers & 9{,}792 \\
 \cmidrule{3-4}
 &  & \textbf{Total} & \textbf{127{,}240} \\
\midrule
\multirow{4}{*}{DeepONet (FNO+MLP)} 
 & \multirow{3}{*}{Branch: FNO}  & Lifting & 224 \\
 & & FNO Blocks  & 111{,}136 \\
  & & Projection  & 6{,}088 \\
 & Trunk: MLP & Dense Layers & 9{,}634 \\
  \cmidrule{3-4}
 &  & \textbf{Total} & \textbf{127{,}082} \\
\midrule
\multirow{2}{*}{DeepONet (KAN)} 
 & Branch: KAN & KAN Layers & 97,920 \\
 & Trunk: KAN & KAN Layers & 29,400 \\
  \cmidrule{3-4}
 &  & \textbf{Total} & \textbf{127{,}320} \\
\midrule
\multirow{2}{*}{DeepONet (MLP)} 
 & Branch: MLP & Dense Layers & 101{,}000 \\
 & Trunk: MLP & Dense Layers & 26,110 \\
   \cmidrule{3-4}
 &  & \textbf{Total} & \textbf{127{,}110} \\
\bottomrule
\end{tabular}
\label{tab:params_case_a}
\end{table}

\begin{table}
\centering
\caption{Hybrid DeepONet architectures – Case B ($1$M parameters).}
\setlength{\tabcolsep}{10pt}
\renewcommand{\arraystretch}{1.25}
\begin{tabular}{c c c c}
\toprule
\textbf{Hybrid Scheme} & \textbf{Network} & \textbf{Block} & \textbf{\# Parameters} \\
\midrule
\multirow{5}{*}{DeepONet (FNO+KAN)} 
 & \multirow{3}{*}{Branch: FNO}  & Lifting              & 704 \\
 &                               & FNO Blocks           & 1,050,688 \\
 &                               & Projection           & 7,072 \\
 & Trunk: KAN & KANLinear Layers & 21,760 \\
 \cmidrule{3-4}
 &  & \textbf{Total} & \textbf{1,080,224 } \\
\midrule
\multirow{4}{*}{DeepONet (FNO+MLP)} 
 & \multirow{3}{*}{Branch: FNO}  & Lifting              & 704 \\
 &                               & FNO Blocks           & 1,050,688 \\
 &                               & Projection           & 7,072 \\
 & Trunk: MLP & Dense Layers & 4,354 \\
  \cmidrule{3-4}
 &  & \textbf{Total} & \textbf{1,062,818} \\
\midrule
\multirow{2}{*}{DeepONet (MLP)} 
 & Branch: MLP & Dense Layers & 1,023,624 \\
 & Trunk: MLP & Dense Layers & 12,770 \\
   \cmidrule{3-4}
 &  & \textbf{Total} & \textbf{1,036,394} \\
\bottomrule
\end{tabular}
\label{tab:params_case_b}
\end{table}

\begin{table}
\centering
\caption{Hybrid DeepONet architectures – Case C ($20$M parameters).}
\setlength{\tabcolsep}{10pt}
\renewcommand{\arraystretch}{1.25}
\begin{tabular}{c c c c}
\toprule
\textbf{Hybrid Scheme} & \textbf{Network} & \textbf{Block} & \textbf{\# Parameters} \\
\midrule
\multirow{5}{*}{DeepONet (FNO+KAN)} 
 & \multirow{3}{*}{Branch: FNO}  & Lifting              & 2,432 \\
 &                               & FNO Blocks           & 20,500,800 \\
 &                               & Projection           & 27,400 \\
 & Trunk: KAN & KANLinear Layers & 21,760 \\
 \cmidrule{3-4}
 &  & \textbf{Total} & \textbf{20,552,392} \\
\midrule
\multirow{4}{*}{DeepONet (FNO+MLP)} 
 & \multirow{3}{*}{Branch: FNO}  & Lifting              & 2,432 \\
 &                               & FNO Blocks           & 20,500,800 \\
 &                               & Projection           & 27,400 \\
 & Trunk: MLP & Dense Layers & 5,410 \\
  \cmidrule{3-4}
 &  & \textbf{Total} & \textbf{20,525,802} \\
\bottomrule
\end{tabular}
\label{tab:params_case_c}
\end{table}

\newpage
\begin{landscape}
\begin{table}
\centering
\caption{Model hyperparameters for each case.}
\setlength{\tabcolsep}{10pt}
\renewcommand{\arraystretch}{1.25}
\begin{tabular}{@{} c  c  c  c  c @{}}
\toprule
\textbf{Case} & \textbf{DeepONet (FNO+KAN)} & \textbf{DeepONet (FNO+MLP)} & \textbf{DeepONet (KAN)} & \textbf{DeepONet (MLP)} \\
\midrule
A &
\begin{tabular}[c]{@{}c@{}}Fourier modes: 3\\ FNO blocks: 2\\ Projection layer size: 32 \\ KAN layers: 2\\ KAN layer size: 16 \\ KAN spline order: 2 \\ \end{tabular} &
\begin{tabular}[c]{@{}c@{}}Fourier modes: 4\\ FNO blocks: 2\\ Projection layer size: 32 \\ MLP layers: 2\\ MLP layer size: 32  \end{tabular} &
\begin{tabular}[c]{@{}c@{}} Branch KAN layers: 3\\ Branch KAN layer size: \{64, 64, 32\} \\ Branch KAN spline order : 2 \\ Trunk KAN layers: 3\\ Trunk KAN layer size: 30 \\ Trunk KAN spline order : 3 \end{tabular} &
\begin{tabular}[c]{@{}c@{}} Branch MLP layers: 6 \\ Branch MLP layer size: 128 \\ Trunk MLP layers: 4\\ Trunk MLP layer size: 82\end{tabular} \\
\midrule
B &
\begin{tabular}[c]{@{}c@{}}Fourier modes: 4\\ FNO blocks: 2\\ Projection layer size: 32 \\ KAN layers: 2\\ KAN layer size: 32 \\ KAN spline order: 3 \\ \end{tabular} &
\begin{tabular}[c]{@{}c@{}}Fourier modes: 4\\ FNO blocks: 2\\ Projection layer size: 32 \\ MLP layers: 3\\ MLP layer size: 32  \end{tabular} &
-  &
\begin{tabular}[c]{@{}c@{}} Branch MLP layers: 16 \\ Branch MLP layer size: 256 \\ Trunk MLP layers: 3\\ Trunk MLP layer size: 64\end{tabular} \\
\midrule
C &
\begin{tabular}[c]{@{}c@{}}Fourier modes: 5\\ FNO blocks: 5\\ Projection layer size: 64 \\ KAN layers: 3\\ KAN layer size: 32 \\ KAN spline order: 3 \end{tabular} &
\begin{tabular}[c]{@{}c@{}}Fourier modes: 5\\ FNO blocks: 5\\ Projection layer size: 64 \\ MLP layers: 4\\ MLP layer size: 32  \end{tabular} &
 -  &
 -  \\
\bottomrule
\end{tabular}
\label{tab:models_hparams}
\end{table}

\end{landscape}
\end{appendices}

\end{document}